\newcommand{\objSet}{\mbox{{\cal O}}}
\newcommand{\nodeSet}{V}
\newcommand{\nodePref}[1]{\ge_{#1}}
\newcommand{\nodePrefEq}[1]{=_{#1}}
\newcommand{\nodePrefStrict}[1]{>_{#1}}
\newcommand{\placePref}[1]{\succeq_{#1}}
\newcommand{\placePrefStrict}[1]{\succ_{#1}}
\newcommand{\placePrefEq}[1]{=_{#1}}
\newcommand{\ppad}{\mbox{{\sc PPAD}}}
\newcommand{\pls}{\mbox{{\sc PLS}}}
\newcommand{\NP}{\mbox{{\sc NP}}}
\newcommand{\Poly}{\mbox{{\sc P}}}
\newcommand{\FSPPInstance}{{\cal I}}
\newcommand{\geqprefer}{\ge}
\newcommand{\pairPref}[1]{\sqsupseteq_{#1}}
\newcommand{\pairPrefStrict}[1]{\sqsupset_{#1}}
\newcommand{\pairPrefEq}[1]{=_{#1}}
\newcommand{\junk}[1]{}
\newcommand{\auxSet}{W}
\newcommand{\lca}{\mbox{lca}}
\newcommand{\ultraPairPref}[1]{\sqsupseteq_{#1}}
\newcommand{\ultraPairPrefStrict}[1]{\sqsupset_{#1}}
\newcommand{\prefGamePref}[1]{\ge_{#1}}
\newcommand{\prefGamePrefStrict}[1]{>_{#1}}
\newcommand{\costNoArgs}{d}
\newcommand{\cost}[2]{d_{#1#2}}
\newcommand{\sumUtility}[1]{U_s(#1)}
\newcommand{\internalSet}{I}
\newcommand{\ancestors}[1]{A_{#1}}
\newcommand{\cc}{\mbox{{\sc CSR}}}
\newcommand{\icc}{\mbox{{\sc CSR}}}
\newcommand{\fcc}{\mbox{{\sc F-CSR}}}
\newcommand{\fspp}{\mbox{{\sc FSPP}}}
\newcommand{\hold}{\mbox{hold}}
\newcommand{\holdPrime}{\mbox{hold'}}
\newcommand{\serve}{\mbox{serve}}
\newcommand{\servePrime}{\mbox{serve'}}
\newcommand{\fP}{\widetilde{P}}
\newcommand{\opt}{\mu}
\newcommand{\best}{b}
\newcommand{\BfPara}[1]{{\noindent {\bf #1.}}}
\newcommand{\twobin}{\mbox{{\sc 2BIN}}}
\newcommand{\twodirbin}{\mbox{{\sc 2DIR-BIN}}}
\newcommand{\exacttwodirbin}{\mbox{{\sc EXACT-2DIR-BIN}}}
\newcommand{\evencycle}{\mbox{{\sc EVEN-CYCLE}}}
\newcommand{\lbl}[1]{\ell(#1)}
\newcommand{\auxsrv}[2]{srv_#2(#1)}
\newcommand{\ItPara}[1]{{\noindent {\it #1.}}}
\newcommand{\DstList}{{\cal D}}
\newcommand{\beq}{\begin{equation}}
\newcommand{\eeq}{\end{equation}}
\newcommand{\beas}{\begin{eqnarray*}}
\newcommand{\eeas}{\end{eqnarray*}}
\newcommand{\node}[1]{#1}
\newcommand{\object}[1]{#1}
\newcommand{\distsrv}{d_{srv}}
\newcommand{\rate}[2]{r_{#1}(#2)}
\journalname{}
\begin{document}

\title{Cache Me If You Can: Capacitated Selfish Replication Games in
  Networks\thanks{Gopalakrishnan and Karuturi were partially supported by a generous gift from Northeastern University alumnus Madhav Anand.  This work was also partially supported by NSF grants CCF-0635119 and CNS-0915985.  A preliminary version of this work appeared in {\em LATIN} 2012~\cite{GKKPRS12}.}}

\author{Ragavendran Gopalakrishnan \and Dimitrios Kanoulas* \and Naga Naresh Karuturi \and C. Pandu Rangan \and Rajmohan Rajaraman \and Ravi Sundaram}

\institute{*Dimitrios Kanoulas (corresponding author) \at
           Istituto Italiano di Tecnologia,\\
           Via Morego, 30,\\
           Genoa, 16163, Italy \\
           Tel.: +1-617-971-8157\\
           \email{Dimitrios.Kanoulas@iit.it}}

\date{Received: date / Accepted: date}

\maketitle

\begin{abstract}
In Peer-to-Peer (P2P) network systems, content (object) delivery between nodes is often required.  One way to study such a distributed system is by defining games, which involve selfish nodes that make strategic choices on replicating content in their local limited memory (cache) or accessing content from other nodes for a cost.  These Selfish Replication games have been introduced in~\cite{ChunCWBPK04} for nodes that do not have any capacity limits, leaving the capacitated problem, i.e. Capacitated Selfish Replication (\cc) games, open.

In this work, we first form the model of the \cc\ games, for which we perform a Nash equilibria analysis.  In particular, we focus on hierarchical networks, given their extensive use to model communication costs of content delivery in P2P systems.  We present an exact polynomial-time algorithm for any hierarchical network, under two constraints on the utility functions: 1) ``Nearer is better'', i.e. the closest the content is to the node the less its access cost is, and 2) ``Independence of irrelevant alternatives'', i.e. aggregation of individual node preferences.  This generalization represents a vast class of utilities and more interestingly allows each of the nodes to have simultaneously completely different functional forms of utility functions.  In this general framework, we present \cc\ games results on arbitrary networks and outline the boundary between intractability and effective computability in terms of the network structure, object preferences, and the total number of objects.  Moreover, we prove that the problem of equilibria existence becomes NP-hard for general \cc\ games.  By adding some constraints in the number of objects and their preferences, we show that the equilibrium can be found in polynomial time.  Finally, we introduce the fractional version of \cc\ games (\fcc) to represent content distribution.  We prove that equilibrium exists for every \fcc\ game, but it is \ppad-complete.
\end{abstract}

\section{Introduction} \label{sec:intro}
Consider a P2P network for sharing movies (objects) among multiple users (nodes).  Due to limited disk space, the movies can be stored either locally or obtained from other users in some cost.  The storing decisions affect everyone that uses this service.  A natural question is to predict the movie collection stability in your friends network (i.e. equilibrium) and your satisfaction from them (i.e. access cost), when users act selfishly.  Similarly, in the new wireless 4G services, users will not only consume different apps, but will also provide apps to their network through personal communications and computing devices.  In such a network, the question is whether storing apps will lead to a situation of endless churn or could there be an equilibrium?

Content delivery and caching in P2P networks can be studied in a game-theoretic framework.  In this work, we study Capacitated Selfish Replication (\cc) games as an abstraction of the above network scenarios.  In \cc\ games the strategic agents, or players, are nodes in a network that act selfishly.  The nodes have some object preferences and bounded storage space, i.e. caches, to store a limited number of content copies.  Each node in cooperation with other nodes can serve access requests for the objects that are stored in its cache.  However, the set of objects which a node chooses to store in its cache is from one side solely based on its own utility function (notice that this does not prevent the players to use the same utility function for the whole network) and from the other side based on where objects of interest have been stored in the network.  Thus, each node in the model, has two potential actions for an object.  Either store a replica of the object in its limited cache, or access with some cost the object replica from a remote node.

Chun et al.~\cite{ChunCWBPK04} first introduced such a game-theoretic framework to analyse pure Nash equilibria in networks without cache capacities, but with some storage cost.  They left the capacitated version of the problem open.  The main interest of the \cc\ games in more recent works is on {\em hierarchical networks} that have been extensively used to model communication content delivery costs in P2P networks~\cite{Garces2013}.  Ultrametric models for content delivery networks~\cite{KLLLLP97} and cooperative caching in hierarchical networks~\cite{LWY93, TDVK99, KPR01, KD02} are just some examples.  The best results on \cc\ games for hierarchical networks~\cite{LaoutarisSelfish,PollatosSelfish} are about the existence of a Nash equilibrium for a generalized one-level hierarchical network, using the sum utility function for which each node is based on a weighted sum of the cost of accessing the objects.

\subsection{Our results}
In this paper, we first introduce the basic model of the Capacitated Selfish Replication (\cc) games in Section~\ref{sec:sum_model}.  This includes the definition of the nodes (players) and objects, the formulation of the cost functions for a node accessing objects in the network, the object replication strategies among nodes, as well as the basic formulation of the network.  The main focus is on the study of Nash equilibria existence and computability for a set of \cc\ games variants.  In particular, we introduce a polynomial-time Nash equilibrium method for hierarchical networks, given their extensive use to model communication costs of content delivery in P2P systems.  We address the following three problems, including their computational complexity:
\begin{itemize}
  \item Does pure Nash equilibrium exist in a \cc\ game, for hierarchical networks?
  \item Does pure Nash equilibrium exist in a \cc\ game, for general undirected networks, setting specific restrictions on the number of objects and the utility/cost functions?
  \item Does pure Nash equilibrium exist, when the objects can be split in fractions, i.e. \fcc games?
\end{itemize}
Note that in all the games, we assume that all the pieces of content, i.e. objects, have the same size, as considered in prior works~\cite{ChunCWBPK04, LaoutarisSelfish, PollatosSelfish, ABPZ13}.  Otherwise, the problem becomes NP-hard even for computing the best response of a player (node) as a generalization of the well-known knapsack problem.

In Section~\ref{sec:sum_ultrametric} we present our main algorithm, which extends and resolves the open problem that was defined in~\cite{LaoutarisSBS06, LaoutarisSOSB07, PollatosSelfish}.  In particular, it has been proved~\cite{LaoutarisSelfish,PollatosSelfish} that \cc\ games for hierarchical networks have a Nash equilibrium in the case of a generalized $1$-level hierarchy, when the utility function is a function of the costs sum of accessing replicated objects in the network.  We introduce an exact polynomial-time algorithm for Nash Equilibrium computation in any hierarchical network.  We use a novel technique which we name ``fictional players\footnote{not to be confused with ``fictitious play''~\cite{FudenbergLevine98} which involves learning}'' method.  We show that using this method we can extend to a general framework of natural preference orders that are entirely arbitrary, but follow two natural constraints: ``Nearer is better'', i.e. the closest the content is to the node the less its access cost is and ``Independence of irrelevant alternatives'', i.e. the aggregation of individual node preferences.  This generalization represents a vast class of utility functions and more interestingly allows each of the nodes to have simultaneously completely different functional forms of utility functions.  The method introduces and iteratively eliminates fictional players in a controlled fashion, maintaining a Nash equilibrium at each step.  In the end, the desired equilibrium for the entire network is realized without any fictional players left in the network.  Even though the analysis is specified in the context of the sum utility function to elucidate the technique of fictional players, we then abstract the central requirements for our proof technique.  In particular, we develop a general framework of \cc games with ordinal preferences, for which a larger class of utility functions could be used as extension to the above result.

In Section~\ref{sec:model}, we present the general \cc\ games framework in terms of the utility preference relations and node preference orders.  In particular, we consider the utility that is not just each node's specific numeric assignment for each objects placement, but a preference order each node has on object placements that satisfies two natural constraints: Monotonicity (or ``Nearer is better'') and Consistency (or ``Independence of irrelevant alternatives'').  In this way the method is extended to a vast class of utility functions, while nodes may simultaneously have utility functions of completely different functional forms.

After extending our hierarchical networks results to the broader class of utilities, in Sections~\ref{sec:exist}
and~\ref{sec:non-exist} we study general \cc\ games that have various network structures (directed or undirected), forms of object preferences (binary or general).  Intractability and effective computability of equilibria is delineated in terms of the network structure, object preferences, and the total number of objects.  The results are summarized in Table~\ref{tab:complexity}.  Most notable are the following results:
\begin{itemize}
  \item equilibria existence for general undirected networks with two objects, using the technique of fictional players
  \item equilibria existence for general undirected networks when object preferences are binary
  \item the problem of equilibria existence becomes NP-hard for general \cc\ games
  \item equivalence of finding equilibria in polynomial time for \cc\ games in strongly connected networks with two objects and binary object preferences, via a reduction to the well-studied even-cycle problem~\cite{RST99}.
\end{itemize}

\begin{table}
\begin{center}
\begin{tabular}{||l|l|l||}\hline
{\bf Object preferences and count} & {\bf Undirected networks} &  {\bf Directed networks} \\ \hline\hline
Binary, two objects & Yes, in \Poly~(\ref{sec:2-obj-metric}) & No, in \Poly~
(\ref{sec:dir2obj})\\ \hline
Binary, three or more objects & Yes, in \pls~(\ref{sec:01-metric}) & No, \NP-complete (\ref{sec:npc})\\ \hline
General, two objects & Yes, in \Poly~(\ref{sec:2-obj-metric}) & No, \NP-complete (\ref{sec:npc})\\ \hline
General, three or more objects & No, \NP-complete (\ref{sec:npc}) & No, \NP-complete (\ref{sec:npc})\\
& Hierarchical: Yes, in \Poly~(\ref{sec:ultrametric}) & \\ \hline\hline
\end{tabular}
\caption{{\small Equilibria existence and computability in \icc\ games.  Each cell (other than in the first row/column) first indicates whether equilibria always exist for a particular \icc\ games sub-class.  If so, the cell next indicates the complexity of determining an equilibrium; otherwise, it indicates the complexity of determining whether equilibria exist.  The relevant subsection appears in parentheses.}\label{tab:complexity}}
\end{center}
\end{table}

Finally in Section~\ref{sec:frac}, we introduce the fractional version of \cc\ games (\fcc) to represent content distribution using erasure codes.  In this framework, each node is allowed to store fractions of objects and can satisfy an object access request by retrieving any set of object fractions as long as these fractions sum to at least one.  We present a natural implementation of this framework via erasure codes (e.g. using the Digital Fountain approach~\cite{Byers98,Shok06}).  We prove that \fcc\ games always have equilibria and finding it is in \ppad.  However, we also show finding equilibria is \ppad-hard even for a sum-of-distances utility function.

\subsection{Related work}
Peer-to-Peer (P2P) networks have been used to model systems for sharing content and resources among the individual peers (such as the file systems~\cite{Kubiatowicz2000, Dabek2001, Rowstron2001, Saito2002}, web caches~\cite{Danzig1998, Fan1998}, or P2P caches~\cite{Iyer2002}).  Even though P2P networks have been extensively studied from a theoretical point of view, there are several open problems when rational peers have diverse and selfish interests~\cite{Feldman2005}.  

One of the most interesting problems is {\em caching}, i.e. holding copies of content in clients and servers.  Several research studies have considered data storing~\cite{Gribble2001, Yan2002}, self-stabilization~\cite{Ko2005}, dynamic replication~\cite{Rabinovich1999, Douceur2001, Tang2002}, and exchanging of content copies in a centralized manner~\cite{Li1999, Jamin2000, Qiu2001, Jamin2001}.   Research on capacitated caching has been also considerable as an optimization problem and various centralized and distributed algorithms have been presented for different networks in~\cite{LWY93, WJH97, KPR01, BRS08, ABPZ13}.  For instance, centralized optimization for the facility location problem has been studied in~\cite{Rosenwein1994}, including several approximations~\cite{Jain1999, Mettu2000, Mahdian2002}.  These frameworks usually ignore the fact that peers may make free choices that minimizes their content access cost, by not following usual instrumentation.

The caching problem that we study is in the intersection of game theory and computer science, that has been extensively studied the last decade~\cite{NisanRTV07, Tsaknakis2008}.  In~\cite{Papadimitriou94} Papadimitriou laid the groundwork for algorithmic game theory by introducing syntactically defined sub-classes of FNP with complete problems, \ppad\ being a notable such subclass.  Non-cooperative facility location games have attracted some small attention over the last decades.  For instance, in~\cite{Vetta2002}, the problem of Nash equilibrium for games that allowed players build nodes in remote locations, whereas in our case nodes hold fixed spaces for storing objects/content.  In~\cite{Goemans2006}, content distribution was studied, providing bounds on the approximated Nash equilibrium with respect to the price of anarchy and the convergence speed.  The difference in the game design lies in the fact that each node had cost limits for storing objects without considering network latencies.  The uncapacitated case of selfish caching games was introduced in~\cite{ChunCWBPK04}, in which nodes could store more pieces of content by paying for the additional storage.

We focus on the capacitated version which was left open by~\cite{ChunCWBPK04}, believing that limits on cache-capacity model an important real-world restriction.  Special cases of the integral \cc\ games version have been studied.  In~\cite{LaoutarisSelfish}, Nash equilibria were shown to exist in cases that nodes are equidistant from one another and a special centralized server holds all objects.  In~\cite{PollatosSelfish} the model is slightly extended to the case where special servers for different objects are at different distances.  Our results generalize and completely subsume all these prior cases of \cc\ games.  Market sharing games~\cite{GoemansLMT06} also consider caches with capacity, but differ to cc\ games since they are special cases of congestion games.  In this work we focus primarily on equilibria and our general framework of \cc games with ordinal preferences aligns more with the theory of social choice~\cite{arrow51social}; in this sense, we deviate from prior work~\cite{fabrikant03network, devanur05price} that is focused on the price of anarchy~\cite{koutsoupias99worst}.

Our work on \cc\ games in~\cite{GKKPRS12} has initiated various research lines and has been extended recently in different directions.  For instance, in~\cite{HG2013,HG2014} the selfish replication problem is studied for the case that nodes seek object placements with cache cooperation, and includes an experimental analysis.  Etesami et al. have extended our model in a series of papers~\cite{Etesami2017-book}.  In~\cite{Etesami2014PureNE} the Nash equilibrium algorithm for two resources is shown to converge faster and it is extended to arbitrary cache sizes for a polynomial time computation.  This is extended in~\cite{EB2015, EB2016-arxiv, EB2016-arxiv2}, where a quasi-polynomial algorithm is introduced to drive allocations whose total cost is within a constant factor of that in any pure-strategy Nash equilibrium, in games formed by undirected networks.  The price of anarchy for \cc\ games with binary preferences over general undirected networks has been studied in~\cite{EB2016, EB2017}, showing an upper bound of $3$.  In~\cite{Pacifici2016}, the caching problem is studied for operator-specific, non-linear, cost functions in games that form arbitrary peering graph topologies, while in~\cite{AEP2016} \cc\ games are studied for general undirected networks for which a randomized algorithm is introduced using a random tree search method to search for pure-strategy Nash equilibrium.

In related work, through a major breakthrough~\cite{DaskalakisGoldbergPapadimitriou06, ChenDengTengJACM}  it has been proven that 2-player Nash Equilibrium is \ppad-complete.  The PPAD-complete term is coming to occupy a role in algorithmic game theory analogous to NP-completeness in combinatorial optimization~\cite{GareyJohnson}, and thus we study the fractional version of the problem, where nodes can store parts of objects, while accessing the remaining part from other nodes.  In this setup we prove PPAD-completeness.

\section{A basic model for \cc\ games} \label{sec:sum_model}
We consider a set $\nodeSet$ of nodes (labeled $1$ through $n = |V|$) to form a network in which they share a collection $\objSet$ of unit-size objects.  We let $\cost{i}{j}$ denote $i$'s cost for accessing an object at $j$, for $i,j \in \nodeSet$; we refer to $\costNoArgs$ as the access cost function.  $j$ is node's $i$ {\em nearest}\/ node in a set $S$ of nodes, if $j \in S$ and $\cost{i}{j} \leq \cost{i}{k}$ for all $k \in S$.  Moreover, a given network is {\em undirected}\/ if $\costNoArgs$ is symmetric, i.e. if $\cost{i}{j} = \cost{j}{i}$ for all $i,j \in \nodeSet$.  An undirected network is {\em hierarchical}\/ if the access cost function forms an ultrametric, i.e. if $\cost{i}{k} \le \max \{\cost{i}{j}, \cost{j}{k}\}$ for all $i,j,k \in \nodeSet$.

The cache of each node $i$ is able to store a certain number of objects.  Node's $i$ placement is simply the set of objects stored at
$i$.  The strategy set of a given node is the set of all feasible
placements at the node.  A {\em global placement}\/ is any tuple $(P_i: i \in V)$, where $P_i \subseteq \objSet$ represents a feasible
placement at node $i$.  We are going to use $P_{-i}$ to denote the
collection $(P_j: j \in V \setminus \{i\})$, for convenience.  We will also often use $P = (P_i, P_{-i})$ to refer to a global placement.  Moreover, we also assume that $V$ includes a \emph{ server} node that has the capacity to store all the objects.  In this way it is ensured that at least one copy of every object is present in the system; this assumption is without loss of generality given that the access cost of every node to the server can be set an arbitrarily large number.

\smallskip \BfPara{\icc\ Games} Each node in our game-theoretic model, attaches a utility to each global placement.  We assume that each node $i$ has a weight $r_i(\alpha)$ for each object $\alpha$ representing the rate at which $i$ accesses $\alpha$.  We define the {\em sum utility function}\/ $\sumUtility{i}$ as follows: $\sumUtility{i}(P) = - \sum_{\alpha \in \objSet} r_i(\alpha) \cdot
\cost{i}{\sigma_i(P,\alpha)}$, where $\sigma_i(P,\alpha)$ is $i$'s
nearest node holding $\alpha$ in $P$.  A \icc\ game is a tuple $(\nodeSet, \objSet, \costNoArgs, \{r_i\})$.  This work focuses on {\em pure Nash equilibria}\/ (henceforth, simply {\em equilibria}) of the \cc\ games.  Such a \icc\ game equilibrium instance is a global placement $P$ such that for each $i \in \nodeSet$ there is no placement $Q_i$ such that $\sumUtility{i}(P) < \sumUtility{i}(Q)$.

\smallskip
\BfPara{Unit cache capacity} In this work, we assume that all objects are of identical size.  Under this assumption, we now argue that it is sufficient to consider the case where each node's cache holds exactly one object.  Consider a set $\nodeSet$ of nodes in which the cache of node $i$ can store $c_i$ objects.  Let $\nodeSet'$ denote a new set of nodes which contains, for each node $i$ in $\nodeSet$, new nodes $i_1, i_2, \ldots, i_{c_i}$, i.e., one new node for each unit of the cache capacity of $i$.  We extend the access cost function as follows: $\cost{j_\ell}{i_k} = \cost{j}{i}$ for all $1 \le \ell \le c_j$, $1 \le k \le c_i$ and $\cost{i_\ell}{i_k} = 0$ for all $1 \le \ell < k \le c_i$, for each node $i \in \nodeSet$.

We consider an obvious onto mapping $f$ from placements in $\nodeSet'$ to those in $\nodeSet$.  Given placement $P'$ for $\nodeSet'$, we set $f(P') = P$ where $P_i = \cup_{1 \le k \le c_i} P'_{i_k}$.  This mapping ensures that $\sumUtility{i}(P') = \sumUtility{i}(P)$, giving us the desired reduction. Thus, in the remainder of the paper, we assume that every node in the network stores at most one object in its cache.

\section{Hierarchical networks} \label{sec:sum_ultrametric}
In this section, we present a polynomial-time equilibria construction for \icc\ games on hierarchical networks.  We can represent any hierarchical network by a tree $T$ in such a way that the node set $\nodeSet$ is the set of its leaves.  Every internal node $v$ has a label $\lbl{v}$ such that:
\begin{enumerate}
  \item if $v$ is an ancestor\footnote{We let each node be both descendant and ancestor of itself.} of $w$ in $T$, then $\lbl{v} \geq \lbl{w}$
  \item for any $i, j \in \nodeSet$, $\cost{i}{j}$ is given by $\lbl{\lca(i,j)}$, where $\lca(i,j)$ is the least common ancestor of nodes $i$ and $j$~\cite{KLLLLP97,KPR01}.
\end{enumerate}

Fig.~\ref{Fig:sec3} illustrates a simple example for a hierarchical network tree with two internal nodes and three leaf nodes, with the corresponding label relationships, the least common ancestors, and the access costs.

\begin{figure}[ht]
  \centering
  \includegraphics[width=\textwidth]{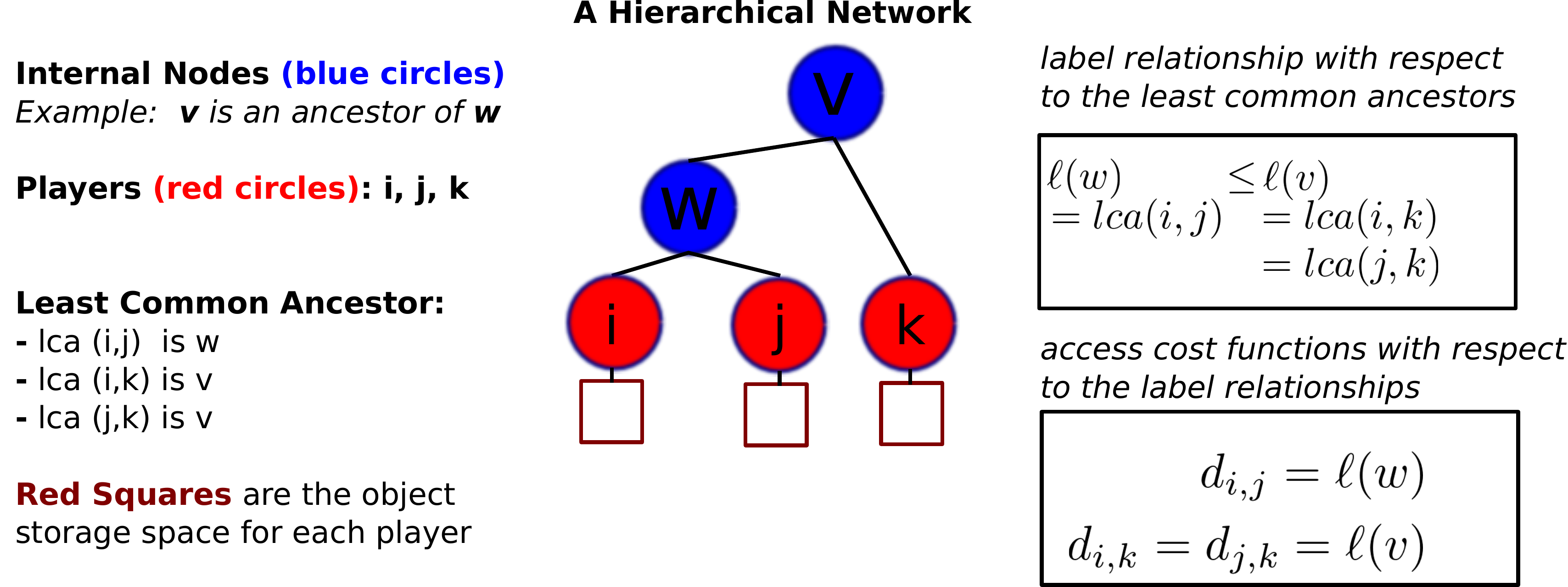}
\caption{\small A simple example of a hierarchical network tree with two internal nodes ($\lbl{v}$ and $\lbl{w}$) and three leaf nodes (i, j, and k).  The label relationships, the least common ancestors, and the access costs are described.}
\label{Fig:sec3}
\end{figure}

\smallskip \BfPara{Fictional players} The proposed algorithm requires the introduction of the {\em fictional player} notion.  A {\em fictional $\alpha$-player}\/ for an object $\alpha$ will be a new node which stores $\alpha$ in any equilibrium.  In particular, for any fictional $\alpha$-player $\ell$, $r_\ell(\alpha)$ is $1$ and $r_\ell(\beta)$ is $0$ for any $\beta \neq \alpha$.  In a particular hierarchy each fictional player is introduced as a leaf; our method determines the exact locations in the hierarchy.  The access cost function for each fictional player is naturally extended using the hierarchy and the labels of the internal nodes.  We let ``node'' denote both the elements of $V$ and fictional players.

\smallskip \BfPara{A preference relation} The object weights for each node $i$ in a hierarchical network induce a natural preorder $\ultraPairPref{i}$ among elements of $\objSet \times \ancestors{i}$, where $\ancestors{i}$ is the set of proper ancestors of $i$ in $T$.  In particular, we define $(\alpha, v) \ultraPairPrefStrict{i} (\beta, w)$ whenever $r_i(\alpha)\cdot \lbl{v} > r_i(\beta) \cdot \lbl{w}$.  In words, in hierarchical networks there is a total preorder in the objects-nodes preferences, which is used during the algorithm to define a potential function, when nodes are playing their best responses.  For instance, $(\alpha, v) \pairPrefStrict{i} (\beta, w)$ means that if $i$ needs to place either $\alpha$ or $\beta$ in its cache, and the least common ancestor of $i$ and the most $i$-preferred node in $\nodeSet \setminus \{i\}$ holding $\alpha$ (resp., $\beta$) is $v$ (resp., $w$), then $i$ prefers to store $\alpha$ over $\beta$. Fig.~\ref{Fig:sec3_2} illustrates an example of node $i$ that will prefer to store object $\alpha$ that is stored further than object $\beta$ and with a higher cost, due to the total preorder.

\begin{figure}[ht]
  \centering
  \includegraphics[width=0.5\textwidth]{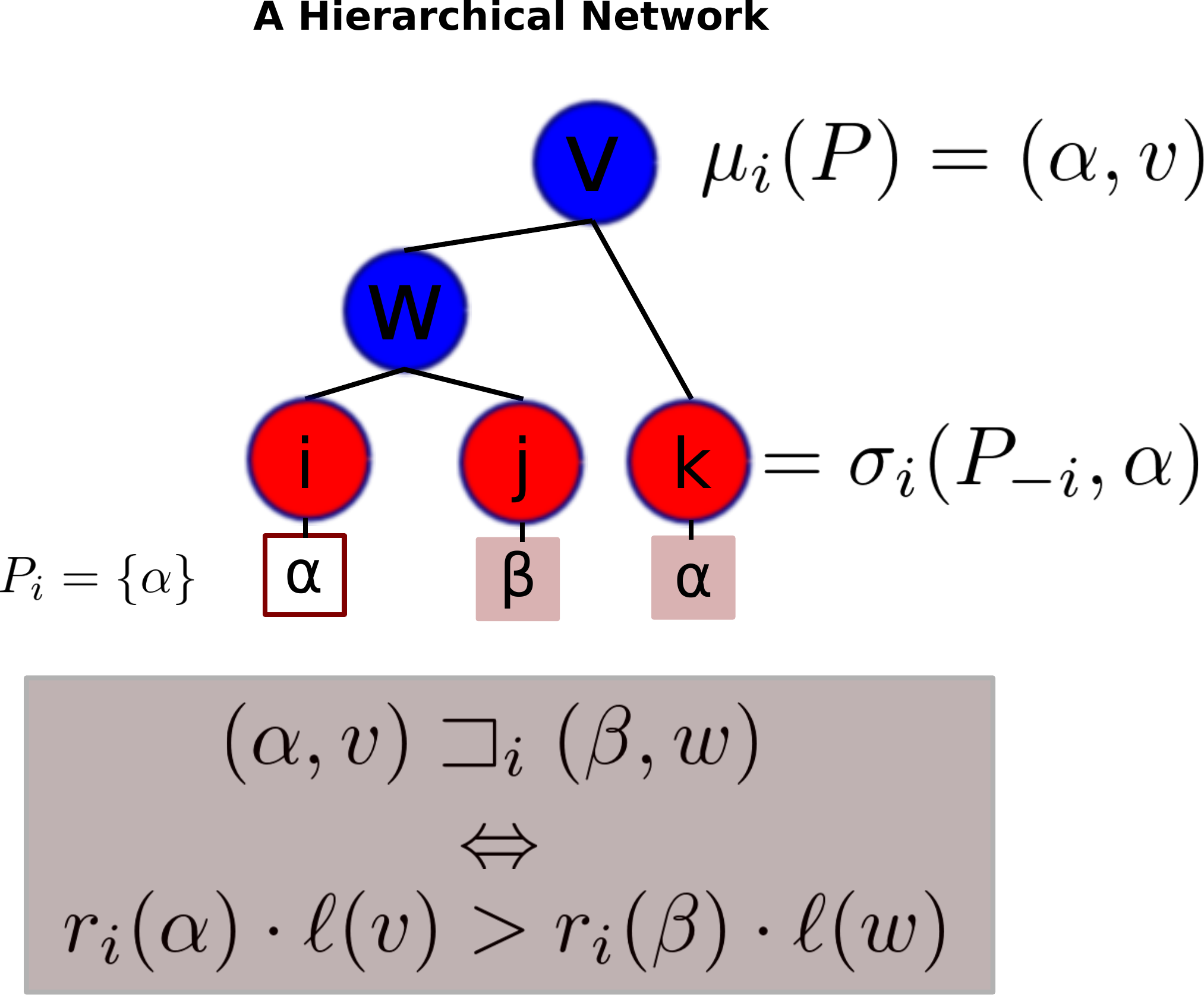}
\caption{\small A simple example of a hierarchical network tree with two internal nodes ($\lbl{v}$ and $\lbl{w}$) and three leaf nodes (i, j, and k).  The preference relation of node $i$ is presented.}
\label{Fig:sec3_2}
\end{figure}

To express any player's best response in terms of these preference relations, we define $\mu_i(P) = (\alpha, v)$, where $P_i = \{\alpha\}$ and $v$ is $\lca(i,\sigma_i(P_{-i},\alpha))$, where $\sigma_i(P_{-i}, \alpha)$ denotes $i$'s nearest node in the set of nodes holding $\alpha$ in $P_{-i}$.  For instance, in Fig.~\ref{Fig:sec3_2} $\sigma_i(P_{-i}, \alpha)$ is node $k$ (the nearest node holding $\alpha$), while $P_i = \{\alpha\}$ (node $i$ is holding $\alpha$) and the  thus, these two information can be denoted as $\mu_i(P) = (\alpha, v)$, where $v$ is the least common ancestor between nodes $i$ and $k$.

Given, the aforementioned definitions, we can now express the best response of a player in terms of the preference relations in the following Lemma.  This is needed in Lemma~\ref{lem:ultra} to prove the existence of an equilibrium at each step of the algorithm.

\begin{lemma} \label{lem:sum_pair_pref}
A best response $P_i$ of a node $i$ for a placement $P_{-i}$ of $V \setminus \{i\}$ is $\{\alpha\}$ where $\alpha$ maximizes $(\gamma, \lca(i,\sigma_i(P_{-i}, \gamma)))$, over all objects $\gamma$, according to $\ultraPairPref{i}$.
\end{lemma}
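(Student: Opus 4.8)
The plan is to unpack the definitions and verify that the claimed placement is indeed utility-maximizing for node $i$. First I would observe that since each node's cache holds exactly one object (as established in the unit-cache-capacity reduction), a best response $P_i$ is necessarily a singleton $\{\alpha\}$ for some $\alpha \in \objSet$, so it suffices to determine which object $\alpha$ node $i$ should store. Fix the placement $P_{-i}$ of the other nodes. For a candidate choice $P_i = \{\gamma\}$, write $Q = (P_i, P_{-i})$; I would compute $\sumUtility{i}(Q) = -\sum_{\beta \in \objSet} r_i(\beta)\cdot \cost{i}{\sigma_i(Q,\beta)}$ and separate the $\beta = \gamma$ term from the rest.

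The key point is that storing $\gamma$ at $i$ affects only the $\gamma$-term of the sum: for every object $\beta \neq \gamma$, node $i$'s nearest holder of $\beta$ in $Q$ is the same as in $P_{-i}$, namely $\sigma_i(P_{-i},\beta)$, since adding a copy of $\gamma$ at $i$ does not change where copies of $\beta$ reside. Hence $\sum_{\beta \neq \gamma} r_i(\beta)\cdot \cost{i}{\sigma_i(Q,\beta)}$ is a constant $C$ independent of which object $i$ chooses (it equals $\sum_{\beta} r_i(\beta)\cdot\cost{i}{\sigma_i(P_{-i},\beta)}$ minus the $\gamma$-contribution of that fixed sum). For the $\gamma$-term, node $i$ now holds $\gamma$ itself, so $\cost{i}{\sigma_i(Q,\gamma)} = \cost{i}{i} = \lbl{i}$ — wait, more carefully: $\sigma_i(Q,\gamma) = i$ and $\cost{i}{i} = 0$ under the convention that a node is its own ancestor, but in the ultrametric-tree formulation we should track this through $\lca$. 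In any case the relevant comparison reduces to the saved cost: choosing $\gamma$ changes $\sumUtility{i}$ by replacing $r_i(\gamma)\cdot\cost{i}{\sigma_i(P_{-i},\gamma)}$ with $r_i(\gamma)\cdot\cost{i}{i}$, so maximizing $\sumUtility{i}$ is equivalent to maximizing $r_i(\gamma)\cdot\cost{i}{\sigma_i(P_{-i},\gamma)} = r_i(\gamma)\cdot\lbl{\lca(i,\sigma_i(P_{-i},\gamma))}$ over all $\gamma$.

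Finally I would invoke the definition of the preorder $\ultraPairPref{i}$: $(\gamma, v) \ultraPairPrefStrict{i} (\beta,w)$ exactly when $r_i(\gamma)\cdot\lbl{v} > r_i(\beta)\cdot\lbl{w}$, so maximizing the quantity $r_i(\gamma)\cdot\lbl{\lca(i,\sigma_i(P_{-i},\gamma))}$ over $\gamma$ is by definition the same as choosing $\gamma$ so that $(\gamma, \lca(i,\sigma_i(P_{-i},\gamma)))$ is maximal according to $\ultraPairPref{i}$, which is precisely the statement of the lemma. I expect the main (minor) obstacle to be bookkeeping: carefully arguing that storing $\gamma$ at $i$ leaves $\sigma_i(P_{-i},\beta)$ unchanged for $\beta\neq\gamma$, and handling the degenerate cost $\cost{i}{i}$ consistently with the tree/label formalism and with the presence of fictional players and the all-objects server (which guarantees $\sigma_i(P_{-i},\gamma)$ is well-defined for every $\gamma$). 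None of this is deep, but the argument should be stated cleanly since this lemma is the workhorse for the rest of the section.
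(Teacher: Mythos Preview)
Your approach is essentially identical to the paper's: write $\sumUtility{i}(\{\gamma\},P_{-i})$ as the fixed quantity $-\sum_{\beta}r_i(\beta)\,\ell(\lca(i,\sigma_i(P_{-i},\beta)))$ plus the ``saved'' term $r_i(\gamma)\,\ell(\lca(i,\sigma_i(P_{-i},\gamma)))$, so a best response maximizes the latter, which by definition of $\ultraPairPref{i}$ is exactly the claimed maximizer. One minor slip in your write-up: the sentence asserting that $\sum_{\beta\neq\gamma}r_i(\beta)\cost{i}{\sigma_i(Q,\beta)}$ is a constant $C$ independent of $\gamma$ is misstated (that partial sum does depend on $\gamma$); the constant is the full sum over all $\beta$, precisely as your parenthetical and subsequent ``saved cost'' analysis correctly use.
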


\begin{proof}
For a given placement $P$ with $P_i = \{\alpha\}$, $\sumUtility{i}(P)$
equals
\[
- \sum_{\gamma \neq \alpha} r_i(\gamma)
\ell(\lca(i,\sigma_i(P_{-i}, \gamma))),
\] which can be rewritten as
\[
- (\sum_{\gamma \in \objSet} r_i(\gamma) \ell(\lca(i,\sigma_i(P_{-i},
\gamma)))) + r_i(\alpha) \cdot \ell(\lca(i,\sigma_i(P_{-i},
\alpha))).
\]  Thus, $\{\alpha\}$ is a best response to $P_{-i}$ if and only if $\alpha$ maximizes $r_i(\gamma)$ $\cdot$ $\ell(\lca(i,\sigma_i($ $P_{-i}, \gamma))$ over all objects $\gamma$, while the desired claim follows from the definition of $\ultraPairPref{i}$.
\end{proof}

\smallskip \BfPara{The algorithm} In the beginning of the algorithm we introduce a set of fictional players, maintaining in the same time the invariant that the current global placement in this hierarchy is an equilibrium.  We then proceed by removing existing or adding new fictional players, tweaking in a particular way their set and locations, in such a way that at each step we guarantee an equilibrium.  The algorithm terminates when all the fictional players are removed in the desired equilibrium state.  Let $\auxSet_t$ and $P^t$ denote the set of fictional players and equilibrium, respectively, at the start of step $t$ of the algorithm.  

\smallskip
{\sl Initialization.} We create an initial set $W_0$ by adding a fictional $\alpha$-player as a leaf child of $v$, for each object $\alpha$ and internal node $v \in T$.  In the initial equilibrium $P^0$ for each fictional $\alpha$-player $i$ we have $P^0_i = \{\alpha\}$, i.e. each node $i \in V$ plays its best response.  By definition, it is clear that each fictional player is in equilibrium.  Moreover, for every $\alpha$, every $i \in V$ has a sibling
fictional $\alpha$-player.  Thus, the best response of every $i \in V$ does not depend on the placement of nodes in $V \setminus \{i\}$, which implies that $P^0$ is an equilibrium.

\smallskip {\sl Algorithm's $t$ step.} For the node set $V \cup W_t$ (the original nodes and the fictional ones) we fix an equilibrium $P^t$.  If $W_t$ is empty, i.e., no fictional player remained, we are done.  Otherwise, we select a fictional node $j$ in $W_t$.  Let $P^t_j = \{\alpha\}$ and $\mu_j(P^t) = (\alpha, v)$, i.e. the fictional player $j$ holds object $\{\alpha\}$ and the closest node that holds object $\{\alpha\}$ is through the internal node $v$.  We let $S$ be the set of all nodes $i \in V$ such that $(\alpha, v) \ultraPairPrefStrict{i} \mu_i(P^t)$, i.e. the closest node that holds object $\{\alpha\}$ (except itself) is through the internal node $v$.  We consider two cases for computing a new set of fictional players $W_{t+1}$ and a new global placement $P^{t+1}$ such that $P^{t+1}$ is an equilibrium for $V \cup W_{t+1}$:

\textit{$S$ is empty} (there is a node holding the object closer than through the internal node $v$ and thus the fictional node $j$ is not affecting the strategy).  We remove the $j$ fictional player from $W_t$ and the hierarchy.  For the remaining nodes the placement remains as is.  In this way $W_{t+1} = W_t - \{j\}$ (the fictional player is removed) and $P^{t+1}$ is the same as $P^t$ (since the fictional player j was not affecting any other node's best response strategy), but $P^{t+1}_j$ is no longer defined, since $j$ is removed.

\textit{$S$ is nonempty} (some nodes are accessing object $\alpha$ from the fictional player $i$).  We select a node $i \in S$ such that $\lca(i,j)$ is lowest among all nodes in $S$ (in this way no other node is affected from the change in the strategy of i) and we let $P^t_i = \{\beta\}$.  We set $P^{t+1}_i = \{\alpha\}$, remove the fictional $\alpha$-player $j$ from $W_t$, and add a new fictional $\beta$-player $\ell$ as a leaf sibling of $i \in T$ (in this way the player will be in equilibrium by accessing $\beta$ from the new fictional player).  In this way $P^{t+1}_\ell = \{\beta\}$, while for every other node $j$ we set $P^{t+1}_j = P^t_j$.  Finally, we set $W_{t+1} = (W_t \cup \{\ell\}) \setminus \{j\}$, removing from the node set the removed fictional player and adding the new one.

An example of the steps is illustrated in Fig.~\ref{fig:Fig_3}.  Next, in Lemma~\ref{lem:ultra} we prove why at every step $t$, as described above, we have an equilibrium.

\begin{figure}[ht]
\centering
\includegraphics[scale=0.1]{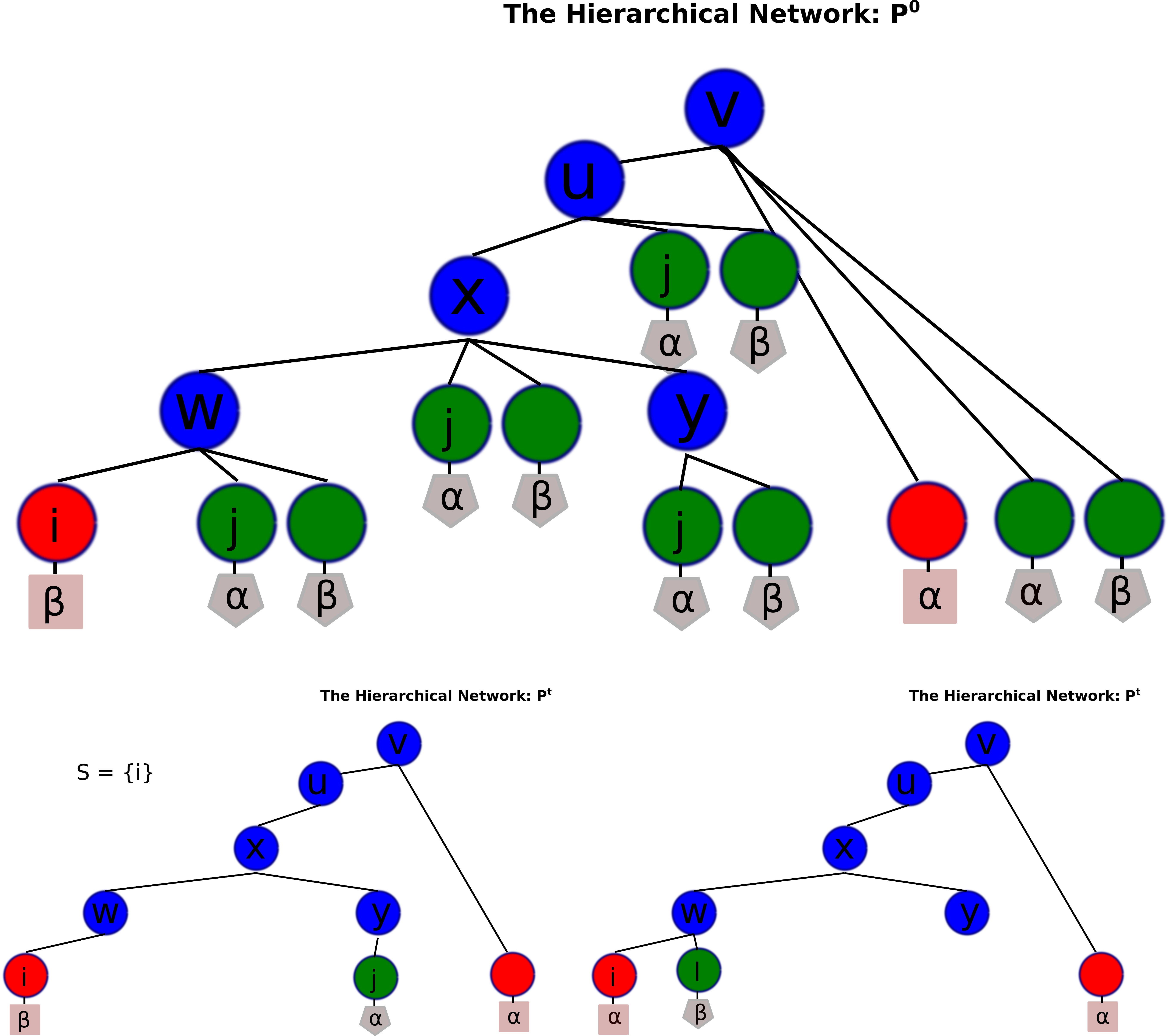}
\caption{\small Illustrating the algorithm for a simple hierarchical network.}
\label{fig:Fig_3}
\end{figure}

\begin{figure}[ht]
\centering
\includegraphics[scale=0.15]{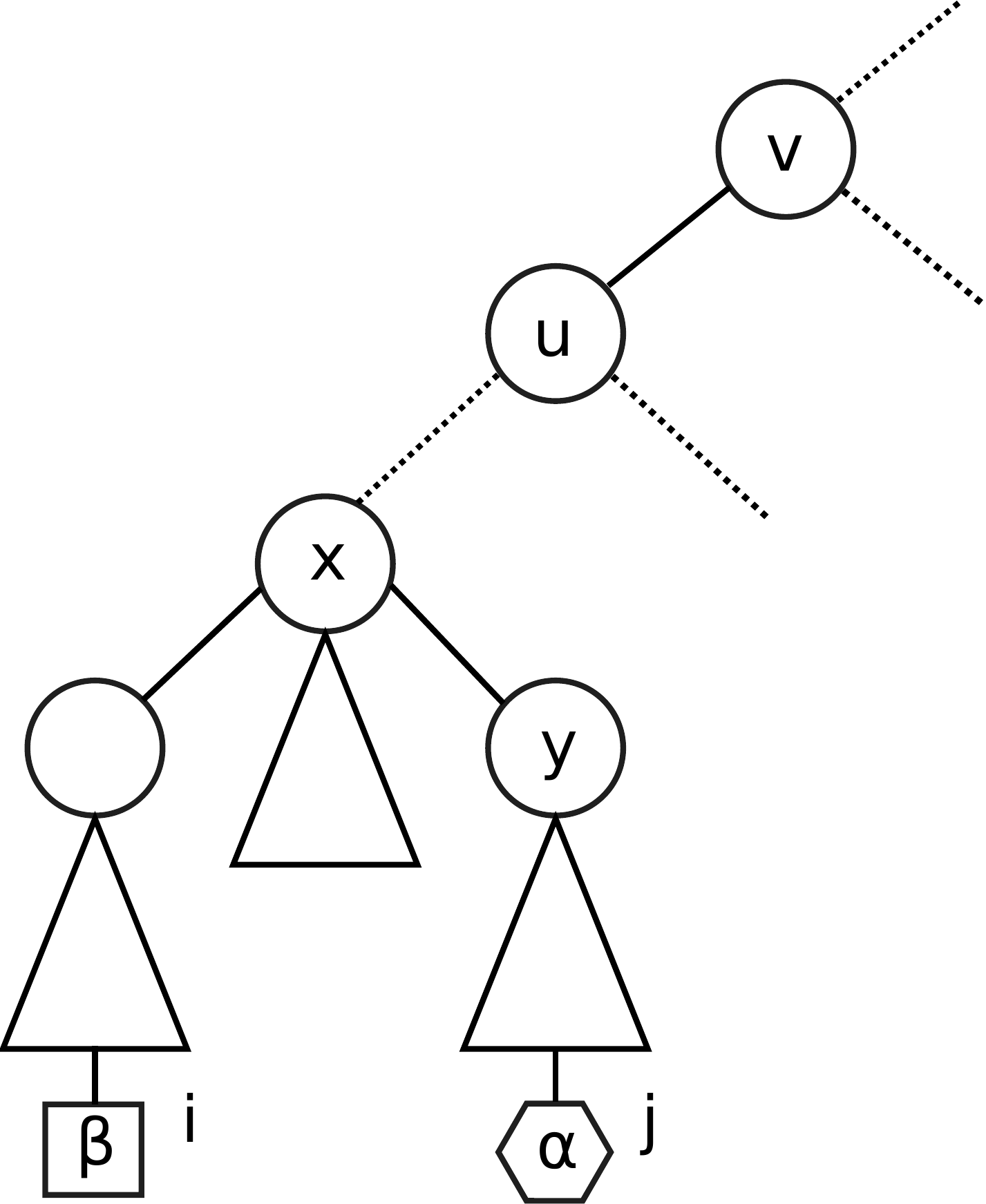}
\caption{\small Illustrating the analysis for hierarchical networks; referred to in the proof of Lemma~\ref{lem:ultra}.  The square is a node $i$ in $V$ holding object $\beta$, and the hexagon is a fictional $\alpha$-player. \label{fig:examples}}
\end{figure}

\junk{
\begin{figure}[ht]
\begin{minipage}[b]{0.3\textwidth}
	\centering
	\resizebox{0.5\textwidth}{!}
	{\includegraphics{npc_g_n.pdf}}
\end{minipage}
\begin{minipage}[b]{0.3\textwidth}
	\centering
	\resizebox{0.6\textwidth}{!}
	{\includegraphics{npc_g_n1.pdf}}
\end{minipage}
	\begin{minipage}[b]{0.35\textwidth}
	\centering
	\resizebox{0.5\textwidth}{!}
	{\includegraphics{ultram.pdf}}
\end{minipage}
	\caption{\small (a) Left: an instance for directed graphs. (b)
          Center: an instance for undirected graphs; (c) Right:
          Illustrating the analysis for hierarchical networks; referred to in the proof of Lemma~\ref{lem:ultra}.  The
          square is a node $i$ in $V$ holding object $\beta$, and the
          hexagon is a fictional
          $\alpha$-player. \label{fig:examples}}
\end{figure}}

\begin{lemma}
\label{lem:ultra}
Consider step $t$ of the algorithm.  If $P^t$ is an equilibrium for
$\nodeSet \cup W_t$, then the following statements hold.

\begin{enumerate}
  \item For every node $k$ in $V \cup W_{t+1}$, $P^{t+1}_k$ is a best response to $P^{t+1}_{-k}$.
  \item For every node $k$ in $V \cup W_{t+1}$, $\mu_k(P^{t+1}) \ultraPairPref{k} \mu_k(P^t)$.
  \item We have $|W_{t+1}| \le |W_t|$.  Furthermore, either $|W_{t+1}| < |W_t|$ or there exists a node $i$ in $V$ such that $\ \mu_i(P^{t+1}) \ultraPairPrefStrict{i} \mu_i(P^t)$.
\end{enumerate}
\end{lemma}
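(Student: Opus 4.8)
The plan is to prove all three statements by the same case split used to define step~$t$ — whether the set $S$ is empty — and to lean throughout on two facts. First, a fictional $\gamma$-player has $r(\gamma)=1$ and weight $0$ on every other object, so by Lemma~\ref{lem:sum_pair_pref} holding $\gamma$ is always a best response for it, independently of the rest of the placement; hence Statement~1 need only be checked at the real nodes. Second, in an ultrametric the two largest of any three pairwise distances coincide; this is the only tool needed to track how a node's nearest holder of a fixed object moves when one holder is deleted or a new one inserted. For a node $k$ and object $\gamma$ I will call $r_k(\gamma)\,\lbl{\lca(k,\sigma_k(P_{-k},\gamma))}$ the \emph{$\gamma$-value} of $k$ at $P_{-k}$; a best response holds an object of maximum value, and $\mu_k(P)$ records the current object of $k$ together with a node realizing the corresponding distance.

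\emph{Case $S=\emptyset$.} Here $W_{t+1}=W_t\setminus\{j\}$, nobody changes placement, and the only change is deletion of the $\alpha$-holder $j$, so Statement~3 is immediate ($|W_{t+1}|=|W_t|-1$). For Statement~1, fix a real node $k$ with $P^t_k=\{\delta\}$. For every $\gamma\neq\alpha$ the $\gamma$-value of $k$ is unchanged; for $\gamma=\alpha$, deleting $j$ moves $k$'s nearest $\alpha$-holder only weakly farther, with its new distance bounded by $\max\{\text{old distance},\lbl{v}\}$ via the ultrametric inequality. Since $k\notin S$, $r_k(\alpha)\lbl{v}$ is $\ultraPairPref{k}$-below the $\delta$-value of $k$ in $P^t$, and so is the old $\alpha$-value; hence the new $\alpha$-value does not exceed $k$'s (unchanged) $\delta$-value, so $\{\delta\}$ is still a best response (if $\delta=\alpha$ the $\alpha$-value only rises and there is nothing to check). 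For Statement~2, no first coordinate of any $\mu_k$ changes, and deleting an $\alpha$-holder moves every $\alpha$-holder's nearest \emph{other} $\alpha$-holder weakly farther while fixing the nearest holder of every other object; thus $\mu_k(P^{t+1})\ultraPairPref{k}\mu_k(P^t)$ for all $k$.

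\emph{Case $S\neq\emptyset$.} Write $\sigma=\sigma_j(P^t_{-j},\alpha)$, so $v=\lca(j,\sigma)$, and let $T_j$ be the subtree below the child of $v$ containing $j$. Since $\sigma$ is $j$'s nearest $\alpha$-holder, every $\alpha$-holder other than $j$ is at distance $\geq\lbl{v}$ from $j$, i.e.\ outside $T_j$; so $j$ is the unique $\alpha$-holder inside $T_j$. The crux is the geometric claim that the chosen node $i$ lies in $T_j$, equivalently that $S$ contains a node whose least common ancestor with $j$ is strictly below $v$; note this already forces $\beta\neq\alpha$. Granting it, the rest is bookkeeping. (i)~In $P^{t+1}_{-i}$ the surviving $\alpha$-holders are exactly those outside $T_j$, all at distance $\geq\lbl{v}$ from $i$, and $\sigma$ attains $\lbl{v}$; hence $\mu_i(P^{t+1})=(\alpha,v)$, and since $i\in S$ this gives $\mu_i(P^{t+1})=(\alpha,v)\ultraPairPrefStrict{i}\mu_i(P^t)$ — Statement~2 for $i$, and (as $|W_{t+1}|=|W_t|$ here) also Statement~3. (ii)~Statement~1 at $i$: its $\alpha$-value is now $r_i(\alpha)\lbl{v}$, strictly above $\mu_i(P^t)$; its $\beta$-value has dropped, since it now reaches $\beta$ through the co-located fictional $\beta$-player $\ell$; and every other value is unchanged and was already $\ultraPairPref{i}$-below $\mu_i(P^t)$. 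So $\{\alpha\}$ is a best response. (iii)~For every node $m\neq i$: if $m\notin T_j$ then $\cost{m}{i}=\cost{m}{j}$ (the least common ancestors of $m$ with $i$ and with $j$ agree, as $i\in T_j\not\ni m$) and $\cost{m}{\ell}=\cost{m}{i}$ ($\ell$ is a leaf sibling of $i$), so $m$ sees the same distance to a holder of every object as in $P^t$, whence its best response and $\mu$-value are unchanged; if $m\in T_j$ then $m\neq j$ so $m$ does not hold $\alpha$, the new $\alpha$-holder $i$ is only closer to $m$, $\beta$ is reached at the same distance (through $\ell$), and every other object is untouched, so again nothing changes for $m$. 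Fictional players are covered by the first fact (Statement~1) and the same distance bookkeeping (Statement~2).

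\emph{Where the difficulty is.} The one non-routine step is the geometric claim that the selected $i$ lies in $T_j$; this is exactly what makes deleting $j$ genuinely hand $i$ the opportunity $(\alpha,v)$ rather than leaving a closer $\alpha$-holder beside $i$. I would argue it by contradiction: if every node of $S$ were outside $T_j$, pick $i_0\in S$; its nearest $\alpha$-holder $m_0$ is then also outside $T_j$ (as $j$ is the unique $\alpha$-holder in $T_j$) and within distance $\lbl{v}$ of $i_0$; the ultrametric triangle $\{i_0,m_0,j\}$ forces either $\cost{m_0}{j}<\lbl{v}$ — which places $m_0$ inside $T_j$ and, by the equilibrium conditions, inside $S$, contradicting the minimality of $\lca(i,j)$ used to select $i$ — or $\cost{m_0}{j}=\cost{i_0}{j}\geq\lbl{v}$, in which case the same analysis applies to $m_0$ and can be iterated along a chain of $\alpha$-holders at mutual distance $<\lbl{v}$; finiteness of the node set together with the equilibrium conditions on the $\alpha$-holders encountered yields the contradiction. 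With this claim in hand, Statements~1--3 follow as above.
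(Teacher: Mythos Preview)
Your overall strategy---the $S=\emptyset$ versus $S\neq\emptyset$ split and the subsequent bookkeeping---is exactly the paper's, and the routine parts (items (i)--(iii) in your non-empty case, the empty case in full) are fine and match the paper's argument closely. You are also right to isolate the ``geometric claim'' that the selected $i$ lies in $T_j$ as the one non-routine step; the paper handles this point only by the unproven aside ``Let $P^t_i$ be equal to $\{\beta\}$, where $\beta\neq\alpha$,'' and its case analysis for nodes outside the subtree tacitly assumes $i$ is inside.

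But your argument for the geometric claim does not work, and in fact the claim is false as stated. First, your ultrametric dichotomy collapses: from $\cost{i_0}{m_0}<\lbl{v}$ and $\cost{i_0}{j}\geq\lbl{v}$ the ultrametric forces $\cost{m_0}{j}=\cost{i_0}{j}\geq\lbl{v}$, so your first branch ``$\cost{m_0}{j}<\lbl{v}$'' is empty. In the surviving branch you want to ``apply the same analysis to $m_0$,'' but nothing guarantees $m_0\in V$ (it may be a fictional $\alpha$-player, hence ineligible for $S$), and even when $m_0\in V$, being an $\alpha$-holder at equilibrium does not force its nearest other $\alpha$-holder to lie within $\lbl{v}$, so the chain need not continue. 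For a concrete failure, take $j$ to be one of the fictional $\alpha$-players created at initialization as a leaf child of the root $r$; then $v=r$ and $T_j=\{j\}$ contains no real node. Place two real siblings $a,b$ under some other internal node $w$ with $\lbl{w}<\lbl{r}$, with $r_a(\alpha)=r_b(\alpha)=1$ and weight $0$ on all other objects. In the initial equilibrium $a$ and $b$ both hold $\alpha$, $\mu_a(P^t)=(\alpha,w)$, and $(\alpha,r)\ultraPairPrefStrict{a}(\alpha,w)$, so $a,b\in S$ even though neither lies in $T_j$. The step then picks, say, $i=a$ with $\beta=\alpha$; it removes $j$, adds a fictional $\alpha$-sibling of $a$, and leaves every $\mu_k$ unchanged while $|W_{t+1}|=|W_t|$---so Statement~3 fails. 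This gap is shared by the paper's proof; closing it requires an additional restriction on $S$ or on the choice of $j$, not the iteration you sketch.
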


\begin{proof}
Let $\alpha$, $v$, $S$, $i$, and $j$ be defined as described above in step $t$.  We first prove the first two lemma's statements.  We let $k$ be any node in $V \cup W_{t+1}$.  First, we consider the case that $\lca(k,j)$ is an ancestor of $v$.  In this case $k$ is not in the subtree rooted at the child $u$ of $v$ that contains $j$.  For any object $\gamma$, $\sigma_k(P^{t+1}_{-k}, \gamma) = \sigma_k(P^t_{-k}, \gamma)$ and $P_k^{t+1} = P^t_k$.  Statement 2 for $k$ is implied thus from the fact that $\mu_k(P^{t+1}) = \mu_k(P^t)$.  Since $P^t$ is in equilibrium, statement 1 also holds for $k$.
We then consider the case that $\lca(k,j)$ is a proper descendant of $v$.  in this case $k$ is in the subtree rooted at the child $u$ of $v$ that contains $j$.  There are two cases.

In the case that $S$ is empty, the fictional $\alpha$-player $j$ is removed.  In this way $j$ is not in $W_{t+1}$.  Moreover, there is no copy of $\alpha$ in the subtree rooted at $u$.  Given that no other object except $\alpha$ is created or removed, $\sigma_k(P^{t+1}_{-k}, \gamma) = \sigma_k(P^t_{-k}, \gamma)$ for $\gamma \neq \alpha$.  The second statement is established for $k$ by the fact that $\lca(k, \sigma_k(P^{t+1}_{-k}, \alpha)) = v$ and $\mu_k(P^{t+1}) =
\mu_k(P_t)$.  Since $S$ is empty, $\mu_k(P^t) \ultraPairPref{k} (\alpha,v)$.  The first statement for $k$ follows from Lemma~\ref{lem:sum_pair_pref} and the fact that $P^t_k$ is in equilibrium such that $P^{t+1}_k$ is a best response against $P^{t+1}_{-k}$.

In the second case that $S$ is not empty, we let $i$ be a node in $S$ such that $\lca(i,j)$ is lowest among all nodes in $S$, as defined above, $x$ denote $\lca(i,j)$, and $P^t_i$ be equal to $\{\beta\}$, where $\beta \neq \alpha$.  From the algorithm it is true that $P^{t+1}_k = \{\alpha\}$.  We let $k \neq i$ be a node in the subtree rooted at $u$.  For any $\gamma \neq \alpha$, $\sigma_k(P^{t+1}_{-k}, \gamma) = \sigma_k(P^{t+1}_{-k}, \gamma)$. The second statement is established for $k$ by the fact that since $P_k^{t+1} = P_k^t \neq \{\alpha\}$, we have $\mu_k(P^{t+1}) = \mu_k(P^t)$.  Similarly for node $i$, we have $\mu_i(P^{t+1}) = (\alpha,v) \ultraPairPrefStrict{i} \mu_i(P^t)$.

To establish the first statement or any node $k$ in the subtree
rooted at $u$ we consider two cases.  Let $y$ be the child of $x$ that is an ancestor of $j$ (see Figure~\ref{fig:examples}).  In the first case, we let $k$ be in the subtree rooted at $y$.  Then, by our choice of $i$, it is true that
\[
\mu_k(P^{t+1}) \ultraPairPref{k} (\alpha, v) \ultraPairPref{k} (\alpha, x) = (\alpha, \sigma_k(P^{t+1}_{-k}, \alpha)),
\]
which, by Lemma~\ref{lem:sum_pair_pref}, implies that the first statement holds for $k$.  In the second case, we let $k$ be in the subtree
rooted at $u$, but not in the subtree rooted at $y$.  Again,
$\sigma_k(P^{t+1}_{-k}, \gamma) = \sigma_k(P^t_{-k}, \gamma)$ for
$\gamma \neq \alpha$.  And for $\alpha$ it is true that
\[
(\alpha, \lca(k,\sigma_k(P^{t+1}_{-k}, \alpha))) = (\alpha, \lca(k,i)) \ultraPairPref{k} (\alpha, x) \ultraPairPref{k} \mu_k(P^t) = \mu_k(P^{t+1}),
\]
which establishes the first statement for $k$ using Lemma~\ref{lem:sum_pair_pref}.

To establish the third statement we use the fact that $|W_{t+1}|
\le |W_t|$, which is immediate from the definition of the algorithm's $t$ step.  When $S$ is empty, $|W_{t+1}| < |W_t|$ since a fictional
player is deleted.  When $S$ is nonempty, we have proved above that $\mu_i(P^{t+1}) \ultraPairPrefStrict{i} \mu_i(P^t)$.  This concludes the proof of the third statement and of the whole lemma. 
\end{proof}

\begin{theorem} \label{theo:ultra}
For hierarchical node preferences, an equilibrium can be found in
polynomial time.
\end{theorem}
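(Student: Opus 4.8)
The plan is to assemble the pieces supplied by Lemma~\ref{lem:ultra} into a correctness‑plus‑termination argument for the algorithm described above. Correctness is the easy half: I would argue by induction on $t$ that $P^t$ is an equilibrium for $\nodeSet \cup W_t$. The base case is exactly what the Initialization paragraph verifies (every node of $V$ has a sibling fictional $\alpha$‑player for every $\alpha$, so its best response is independent of $P^0_{-i}$), and the inductive step is statement~1 of Lemma~\ref{lem:ultra}. Hence as soon as the algorithm reaches a step $t$ with $W_t = \emptyset$, the placement $P^t$ restricted to $\nodeSet$ is the desired Nash equilibrium; it remains only to bound the number of iterations.

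First I would pin down the structural invariant that makes a potential argument possible: the set $I$ of internal nodes of $T$ never changes, since both in Initialization and in the nonempty‑$S$ case fictional players are attached only as leaf children or leaf siblings of existing nodes. Thus $\ancestors{i}$ is the same fixed subset of $I$ for every $i \in \nodeSet$ throughout the run, and $\mu_i(P^t)$ always lies in the finite set $\objSet \times I$ of size at most $|\objSet|\cdot|I| = O(n|\objSet|)$. Since $\ultraPairPrefStrict{i}$ merely compares the real numbers $r_i(\alpha)\cdot\lbl{v}$, for each $i$ I can fix an integer ranking $\rho_i : \objSet\times I \to \{1,\dots,|\objSet|\cdot|I|\}$ that assigns equal ranks to pairs with equal value $r_i(\alpha)\lbl{v}$ and larger ranks to larger values, so that $\rho_i$ is monotone with respect to $\ultraPairPref{i}$ and strictly monotone with respect to $\ultraPairPrefStrict{i}$. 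Define the potential $\Phi(P^t) = \sum_{i\in\nodeSet}\rho_i(\mu_i(P^t))$, a nonnegative integer bounded by $n\cdot|\objSet|\cdot|I| = O(n^2|\objSet|)$.

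Now statement~2 of Lemma~\ref{lem:ultra} gives $\mu_i(P^{t+1})\ultraPairPref{i}\mu_i(P^t)$ for every $i\in\nodeSet$, hence $\Phi$ is nondecreasing over the entire execution; and statement~3 says that at each step either $|W_{t+1}|<|W_t|$ or some $i\in\nodeSet$ has $\mu_i(P^{t+1})\ultraPairPrefStrict{i}\mu_i(P^t)$, which by monotonicity of the remaining coordinates forces $\Phi(P^{t+1})\ge\Phi(P^t)+1$. Since $|W_t|$ is nonincreasing and starts at $|W_0| = |\objSet|\cdot|I| = O(n|\objSet|)$, it can strictly decrease at most $O(n|\objSet|)$ times, and the integer $\Phi$, bounded by $O(n^2|\objSet|)$ and nondecreasing, can strictly increase at most $O(n^2|\objSet|)$ times. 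Therefore the algorithm halts after $O(n^2|\objSet|)$ steps. Finally I would check that each step runs in $\poly(n,|\objSet|)$ time: Initialization creates $O(n|\objSet|)$ fictional players and computes $n$ best responses (each via Lemma~\ref{lem:sum_pair_pref}), and a generic step picks $j$, computes $\mu_j(P^t)$, scans $\nodeSet$ to build $S$ and to locate the $i\in S$ minimizing $\lca(i,j)$, and performs $O(1)$ placement updates — all polynomial. Combining, the equilibrium is produced in polynomial time.

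The step I expect to require the most care is the structural invariance claim — that the internal‑node set, hence the range of each $\mu_i$ and thus $\Phi$, is unchanged by adding and deleting fictional players — because the whole termination bound rests on a strict $\ultraPairPrefStrict{i}$‑improvement at one step never being undone later; once that is nailed down, merging the monotone quantities $|W_t|\!\downarrow$ and $\Phi\!\uparrow$ into a single $O(n^2|\objSet|)$ iteration bound is routine.
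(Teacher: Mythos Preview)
Your proposal is correct and follows essentially the same approach as the paper: correctness by induction via statement~1 of Lemma~\ref{lem:ultra}, termination via a potential built from $|W_t|$ together with the ranks of the $\mu_i(P^t)$ under $\ultraPairPref{i}$, and a routine check that each step is polynomial. The only cosmetic difference is that the paper packages $|W_t|$ and the rank sum into a single strictly decreasing potential, whereas you track the two quantities separately and add their bounds; your explicit remark that the internal-node set of $T$ is unchanged by insertion and deletion of fictional leaves is a point the paper leaves implicit.
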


\begin{proof}
From Lemma~\ref{lem:ultra} and the definition of the algorithm it is straightforward that it returns a valid equilibrium at the termination.  We should prove now that the termination is achieved in polynomial time.  We consider the potential function given by the sum of $|W_t|$ and the sum of the $\mu_i(P^t)$ position in the preorder $\ultraPairPref{i}$ over all $i$.  We notice that $|W_0|$ is at most $nm$, where $m$ is the number of objects and $n$ is $|\nodeSet|$ which is at least the number of internal nodes.  Moreover, the initial potential is at most $nm + n^2m$ since $|\objSet \times \internalSet|$ is at most $nm$.  From Lemma~\ref{lem:ultra}, the potential decreases by at least one in each step and thus the number of steps is at most $nm + n^2m$.

We need to also prove that each step can be implemented in polynomial time.  In the initialization we add $O(nm)$ fictional players and compute the best response for each node $i \in V$.  For the later process, we compare at most $m$ placements for each $k \in V$, i.e. one for each object.  During each subsequent step we select a fictional player $j$, we determine whether the set $S$ is nonempty, and if so we compute node $i$ and the updated placement.  From this process we only need to explain the computation of $S$ and $i$, where $S$ is the set of all $k$ nodes which are not in equilibrium when a fictional player $j$ is deleted.  $S$ is computed as follows: for each node $k \in V$, we replace the current object in its cache by $\alpha$ and add $k$ to $S$.  According to the utility, this yields to a more preferable placement.  Thus, $S$ can be computed in time polynomial in $n$.  To complete the proof of the theorem, we let node $i$ simply be a node in $S$ such that $\lca(i,j)$ is lowest among all nodes in $S$.  This can also be computed in time polynomial in $n$.
\end{proof}

\section{A general framework for \cc\ games with ordinal preferences}\label{sec:model}
In this section, we present a new framework on \cc\ games with ordinal preferences, to generalize the results that were presented in Section~\ref{sec:sum_ultrametric} to a broad class of utility functions, and to also enable the study of the existence and complexity of equilibria in more general settings.

\smallskip
\BfPara{Node preference relations}  Among all the nodes in $V$, we assume that each node $i \in \nodeSet$ has a total preorder $\nodePref{i}$\footnote{We define a total preorder as a binary relation that satisfies reflexivity, transitivity, and totality.  By totality we mean that for any $i,j,k$, either $j \nodePref{i} k$ or $k \nodePref{i} j$.} and $\nodePref{i}$ further satisfies $i \nodePref{i} j$ for all $i,j \in V$.  A node $i$ {\em prefers}\/ $j$ over $k$ if $j \nodePref{i} k$, while a node $j$ is the {\em most $i$-preferred}\/ in a set $S$ of nodes if $j \in S$ and $j \nodePref{i} k$ for all $k \in S$.  We let $j \nodePrefEq{i} k$ denote that $j \nodePref{i} k$ and $k \nodePref{i} j$, while when it is not the case that $k \nodePref{i} j$, we denote it by $j \nodePrefStrict{i} k$.  Notice that $\nodePrefStrict{i}$ is a strict weak order\footnote{A strict weak order is a strict partial order $>$, i.e. a transitive relation that is irreflexive, in which the ``neither $a > b$ nor $b > a$''  relation is transitive.  Strict weak orders and total preorders are widely used in the field of microeconomics.} and for any $i, j, k \in \nodeSet$ exactly one of the following three relations hold: 1) $j \nodePrefStrict{i} k$, 2) $k \nodePrefStrict{i} j$, and 3) $k \nodePrefEq{i} j$.  We also extend the $\sigma_i(P, \alpha)$ and $\sigma_i(P_{-i}, \alpha)$ notations such that they denote a most $i$-preferred node holding $\alpha$ in $P$ and $P_{-i}$ respectively, breaking ties arbitrarily.

The access cost function $\costNoArgs$ introduced in Section~\ref{sec:sum_model} induces a natural node preference relation: $j \nodePrefStrict{i} k$ if $\cost{i}{j} < \cost{i}{k}$, and $j \nodePrefEq{i} k$ if $\cost{i}{j} = \cost{i}{k}$.  In fact, as we show in Lemma~\ref{lem:acyclic}, undirected networks (i.e., when the access cost function is symmetric) are equivalent to acyclic node preference collections.  Formally, the collection $\{\nodePref{i}: i \in V\}$ is an {\em acyclic node preference collection}\/ if there does not exist a sequence of nodes $i_0, i_1, \ldots, i_{k-1}$ for an integer $k \ge 3$ such that $i_{(j-1) \bmod k} \nodePrefStrict{i_j} i_{(j+1) \bmod k}$ for all $0 \le j < k$.

\begin{lemma}
\label{lem:acyclic}
Any undirected network yields an acyclic node preference collection.  For any acyclic node preference collection, we can compute, in polynomial time, symmetric cost functions that are consistent with the node preferences.
\end{lemma}

\begin{proof}
Let $\costNoArgs$ denote a symmetric access cost function over the set $V$ of nodes.  For a given node $i \in V$, we have $j \nodePref{i} k$ iff $\cost{i}{j} \le \cost{i}{k}$.  We now argue   that the collection $\{\nodePref{i}: i \in V\}$ is acyclic.  Suppose, for the sake of contradiction, that there exists a sequence of nodes $i_0, i_1, \ldots, i_{k-1}$ for an integer $k \ge 3$ such that $i_{(j-1) \bmod k} \nodePrefStrict{i_j} i_{(j+1) \bmod k}$ for all $0 \le j < k$.  It then follows that:
\[
\cost{i_j}{i_{(j-1) \bmod k}} < \cost{i_j}{i_{(j + 1) \bmod k}} \mbox{ for } 0 \le j < k.
\]
Since $\costNoArgs$ is symmetric, we obtain
\[
\cost{i_j}{i_{(j-1) \bmod k}} < \cost{i_{(j+1) \bmod k}}{i_j} \mbox{ for } 0 \le j < k,
\]
which is a contradiction, since $\cost{i_0}{i_{(k-1)}} < \cost{i_1}{i_0} < \dots < \cost{i_{(k-1)}}{i_0} = \cost{i_0}{i_{(k-1)}}$.

Given an acyclic collection of node preferences, we compute an associated access cost function $\costNoArgs$ in polynomial time as follows.  We construct a directed graph $G$ over the set $U$ of all unordered pairs $(i,j): i,j \in V$, $i \neq j$. There is a directed edge from node $(i,j)$ to $(i,k)$ if and only if $k \nodePref{i} j$.
Since the collection $\{\nodePref{i}: i \in V\}$ is acyclic, $G$ is a dag.  We compute the topological ordering $\pi: U \rightarrow \mathbb{Z}$; thus, we have $\pi((i,j)) < \pi((k, \ell))$ whenever there is a directed path from $(i,j)$ to $(k,\ell)$.  Setting $\cost{i}{j}$ to be $\pi((i,j))$ gives us the desired undirected network.
\end{proof}

\smallskip \BfPara{Utility preference relations}  Each node in our game-theoretic model attaches a utility to each global placement.  In our general definition a large class of utility functions it is considered simultaneously.  Instead of defining a numerical utility function, we let the utility at each node $i$ be a total preorder $\placePref{i}$ among the set of all global placements.  The $\placePrefStrict{i}$ and $\placePrefEq{i}$ notations over global placements are defined analogously.   We require that $\placePref{i}$,
for each $i \in V$, satisfies the following two basic conditions:

\begin{itemize}
  \item {\bf Monotonicity}: If for any two global placements $P$ and $Q$, for each object $\alpha$, and each node $q$ with $\alpha \in Q_q$, there exists a node $p$ with $\alpha \in P_p$ and $p \nodePref{i} q$, then $P \placePref{i} Q$.

  \item {\bf Consistency}: Let two global placements $(P_i, P_{-i})$ and $(Q_i, Q_{-i})$ such that for each object $\alpha \in P_i \cup Q_i$, if $p$ (resp. $q$) is a most $i$-preferred node in $\nodeSet \setminus \{i\}$ holding $\alpha$, i.e. $\alpha \in P_p$ (resp. $\alpha \in Q_q$), then $p \nodePrefEq{i} q$.  If $(P_i, P_{-i}) \placePrefStrict{i} (Q_i, P_{-i})$, then $(P_i, Q_{-i}) \placePref{i} (Q_i, Q_{-i})$.
\end{itemize}

In words, the monotonicity condition says that for any node, if all the objects in a placement are placed at nodes that are at least as preferred as in another placement, then the node prefers the former placement at least as much as the latter.  The consistency condition says that the preference for a node to store one set of objects instead of another is entirely a function of the set of most preferred other nodes that together hold these objects.  For instance, if a node $i$ with unit capacity prefers to store $\alpha$ over $\beta$ in a scenario where the most $i$-preferred node (other than $i$) storing $\alpha$ (resp. $\beta$) is $j$ (resp. $k$), then $i$ prefers to store $\alpha$ at least as much as $\beta$ in any other situation where the most $i$-preferred node (other than $i$) storing $\alpha$ (resp. $\beta$) is $j$ (resp. $k$).

\smallskip \BfPara{Generality of the conditions} We note that many standard utility functions defined for replica placement problems~\cite{ChunCWBPK04,LaoutarisSOSB07,PollatosSelfish}, including the sum and max functions, satisfy the monotonicity and consistency conditions.  Indeed, any utility function that is an $L_p$ norm, for any $p$, over the costs for the individual objects, also satisfies the conditions.  Furthermore, since the monotonicity and consistency conditions apply to the individual utility functions, our model allows the different nodes to adopt different types of utilities, as long as each separately satisfies the two conditions.

\smallskip \BfPara{Binary object preferences}  One of the utility preference relations classes we study is based on binary object preferences.  Assume that each node $i$ is equally interested in an objects set $S_i$ and it does not have any interest in the other objects.  Then, $\tau_i(P)$ will denote the $|S_i|$-length sequence of the $\sigma_i(P,\alpha)$, such that $\alpha \in S_i$ and it is in non-increasing order based on the $\nodePref{i}$ relation.  In this setup the consistency condition can be further strengthened to the {\bf binary consistency} term:  for any placements $P = (P_i, P_{-i})$ and $Q = (Q_i, Q_{-i})$ with $P_{-i} = Q_{-i}$, we let $P \placePref{i} Q$ if and only if for $1 \le k \le |S_i|$, the $k^{th}$ component of $\tau_i(P)$ is at least as $i$-preferred as the $k^{th}$ component of $\tau_i(Q)$.

\smallskip \BfPara{\icc\ Games} We let a \icc\ game be a tuple $(\nodeSet, \objSet, \{\nodePref{i}\}, \{\placePref{i}\})$ in the general axiomatic framework.  A pure Nash equilibrium in a \icc\ game instance is a global placement $P$ such that there is no placement $Q_i$ for which $(Q_i, P_{-i}) \placePrefStrict{i} (P_i, P_{-i})$, for each $i \in \nodeSet$.

To further analyse the complexity results, a definition of a game instance specification is required.  We first specify the set $\nodeSet$, the node cache capacities, and an enumerated list of object names $\objSet$.  For each node $i \in \nodeSet$, we specify $i$'s preference relation $\nodePref{i}$ succinctly by a set of at most $\binom{n}{2}$ bits.  However, the utility preference relation $\placePref{i}$ is over a potentially exponential number of placements in terms of $n$, $m$, and cache sizes.  We further assume that the utility preference relations are specified by an efficient algorithm, which we denote as {\em utility preference oracle}\/, that takes as input a node $i$, and two global placements $P$ and $Q$, and returns whether $P \placePref{i} Q$.  For the sum, max, and $L_p$-norm utilities, the utility preference oracle simply computes the relevant utility function.  For binary object preferences, the binary consistency condition yields an oracle which is polynomial in the number of nodes, objects, and cache sizes.

\smallskip
\BfPara{Unit cache capacity} We now argue that the unit cache capacity assumption of Section~\ref{sec:sum_model} continues to hold without loss of generality.  Consider a set $\nodeSet$ of nodes in which the cache of node $i$ can store $c_i$ objects.  Let $\nodeSet'$ denote a new set of nodes which contains, for each node $i$ in $\nodeSet$, new nodes $i_1, i_2, \ldots, i_{c_i}$, i.e., one new node for each unit of the cache capacity of $i$.  We set the node preferences as follows: for all $i, i', j \in \nodeSet$, $1 \le f, \ell \le c_j$, $1 \le k, k' \le c_i$, we have $i_{k} \nodePref{j_\ell} i'_{k'}$ whenever $i \nodePref{j} i'$, and $j_f \nodePrefEq{i_k} j_\ell$.

We consider an obvious onto mapping $f$ from placements in $\nodeSet'$ to those in $\nodeSet$.  Given placement $P'$ for $\nodeSet'$, we set $f(P') = P$ where $P_i = \cup_{1 \le k \le c_i} P'_{i_k}$.  This mapping naturally defines the utility preference relations for the node set $\nodeSet'$.  In particular, for any $i \in \nodeSet$ and $1 \le k \le c_i$, $P' \placePref{i_k} Q'$ whenever $f(P') \placePref{i} f(Q')$.  We also note that $f$ is computable in time polynomial in the number of nodes and the sum of the cache capacities.  It is easy to verify that the utility preference relation $\placePref{i_k}$ for all $i_k \in \nodeSet'$ satisfies the monotonicity and consistency conditions.  Furthermore, $P'$ is an equilibrium for $\nodeSet'$ if and only if $f(P')$ is an equilibrium for $\nodeSet$; this together with the onto property of the mapping $f$ gives us the desired reduction. 

\junk{
We now present a new axiomatic framework which generalizes the result
of Section~\ref{sec:sum_ultrametric} to a broad class of utility
functions, and also enables us to study the existence and complexity
of equilibria in more general settings.

\smallskip
\BfPara{Node preference relations} We assume that each node $i$ in
$\nodeSet$ has a total preorder $\nodePref{i}$ among all the nodes in
$V$\footnote{A total preorder is a binary relation that satisfies
  reflexivity, transitivity, and totality.  Totality means that for
  any $i,j,k$, either $j \nodePref{i} k$ or $k \nodePref{i} j$.};
$\nodePref{i}$ further satisfies $i \nodePref{i} j$ for all $i,j \in
V$.  We say that a node $i$ {\em prefers}\/ $j$ over $k$ if $j
\nodePref{i} k$, and call a node $j$ {\em most $i$-preferred}\/ in a
set $S$ of nodes if $j$ is in $S$ and $j \nodePref{i} k$ for all $k$
in $S$.  We also use the notation $j \nodePrefEq{i} k$ whenever $j
\nodePref{i} k$ and $k \nodePref{i} j$, and $j \nodePrefStrict{i} k$
whenever it is not the case that $k \nodePref{i} j$.  Note that
$\nodePrefStrict{i}$ is a strict weak order\footnote{A strict weak
  order is a strict partial order $>$ (a transitive relation that is
  irreflexive) in which the relation ``neither $a > b$ nor $b > a$''
  is transitive.  Strict weak orders and total preorders are widely
  used in microeconomics.}, and for any $i$, $j$, and $k$, we have
exactly one of these three relations holding: $j \nodePrefStrict{i}
k$, $k \nodePrefStrict{i} j$, $k \nodePrefEq{i} j$.  We also extend
the notation $\sigma_i(P, \alpha)$ and $\sigma_i(P_{-i}, \alpha)$
denote a most $i$-preferred node holding $\alpha$ in $P$ and $P_{-i}$,
respectively, breaking ties arbitrarily.

The access cost function $\costNoArgs$ introduced in
Section~\ref{sec:sum_model} induces a natural node preference
relation: $j \nodePrefStrict{i} k$ if $\cost{i}{j} < \cost{i}{k}$, and
$j \nodePrefEq{i} k$ if $\cost{i}{j} = \cost{i}{k}$.  In fact, as we
show in Lemma~\ref{lem:acyclic}, undirected networks (i.e., when the
access cost function is symmetric) are equivalent to acyclic node
preference collections.  Formally, the collection $\{\nodePref{i}: i
\in V\}$ is an {\em acyclic node preference collection}\/ if there
does not exist a sequence of nodes $i_0, i_1, \ldots, i_{k-1}$ for an
integer $k \ge 3$ such that $i_{(j-1) \bmod k} \nodePrefStrict{i_j}
i_{(j+1) \bmod k}$ for all $0 \le j < k$.

\begin{lemma}
\label{lem:acyclic}
Any undirected network yields an acyclic node preference collection.
For any acyclic node preference collection, we can compute, in
polynomial time, symmetric cost functions that are consistent with the
node preferences.
\end{lemma}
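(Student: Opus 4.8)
The plan is to prove the two assertions separately; the second is where essentially all the work lies. Throughout I read ``consistent'' in the natural sense that $j\nodePrefStrict{i}k$ implies $\cost{i}{j}<\cost{i}{k}$ (so the preorders induced by $\costNoArgs$ refine the given ones, ties being allowed to be broken); this is the reading under which the stated equivalence of undirected networks with acyclic node preference collections actually holds.

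For the first assertion I would argue by contradiction: assume $\costNoArgs$ is symmetric and that the induced preferences contain a forbidden sequence $i_0,\ldots,i_{k-1}$ ($k\ge 3$) with $i_{(t-1)\bmod k}\nodePrefStrict{i_t}i_{(t+1)\bmod k}$ for all $t$. By definition of the induced relation this says $\cost{i_t}{i_{(t-1)\bmod k}}<\cost{i_t}{i_{(t+1)\bmod k}}$, and by symmetry $\cost{i_t}{i_{(t-1)\bmod k}}=\cost{i_{(t-1)\bmod k}}{i_t}$. Abbreviating $a_t=\cost{i_t}{i_{(t+1)\bmod k}}$, each inequality becomes $a_{(t-1)\bmod k}<a_t$, so chaining once around the cycle gives $a_0<a_1<\cdots<a_{k-1}<a_0$, a contradiction.

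For the second assertion I would reduce realizability to topologically sorting a digraph on node pairs. It suffices to give each unordered pair $\{i,j\}$ ($i\ne j$) a positive value $w(\{i,j\})$ with $w(\{i,j\})<w(\{i,k\})$ whenever $j\nodePrefStrict{i}k$, and then set $\cost{i}{j}:=w(\{i,j\})$ and $\cost{i}{i}:=0$: the result is symmetric by construction, and $i\nodePref{i}j$ holds since $0$ is the unique smallest cost at $i$ (note $j\nodePrefStrict{i}i$ is impossible, as it negates the axiom $i\nodePref{i}j$, while $i\nodePrefStrict{i}k$ only demands $0<w(\{i,k\})$). To obtain $w$, build the digraph $H$ whose vertices are the pairs $\{i,j\}$, $i\ne j$, with an arc $\{i,j\}\to\{i,k\}$ for every node $i$ and all distinct $j,k\ne i$ satisfying $j\nodePrefStrict{i}k$. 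Provided $H$ is acyclic, set $w(\{i,j\})$ to $1$ plus the number of arcs on a longest directed path of $H$ ending at $\{i,j\}$; every arc then strictly increases $w$, as required. Everything here runs in time polynomial in $n=|\nodeSet|$ --- $O(n^2)$ vertices, $O(n^3)$ arcs, longest paths in a DAG.

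The crux, and the step I expect to be the main obstacle, is verifying that $H$ is acyclic --- this is exactly where the definition of an acyclic node preference collection gets used. I would take a shortest cyclic walk $P_0\to P_1\to\cdots\to P_{L-1}\to P_0$ in $H$ (with $L\ge 1$ arcs). Since $H$ has no loops, $P_t\ne P_{t+1}$, so each arc $P_t\to P_{t+1}$ uses a unique common node $c_t\in P_t\cap P_{t+1}$, with $P_t=\{c_t,a\}$, $P_{t+1}=\{c_t,b\}$, $a\nodePrefStrict{c_t}b$. If $c_{t-1}=c_t=:c$ for some $t$, then writing $P_{t-1}=\{c,a\}$, $P_t=\{c,x\}$, $P_{t+1}=\{c,b\}$ we get $a\nodePrefStrict{c}x$ and $x\nodePrefStrict{c}b$; transitivity of the strict weak order $\nodePrefStrict{c}$ yields $a\nodePrefStrict{c}b$, so $a\ne b$ and $\{c,a\}\to\{c,b\}$ is an arc --- splicing out $P_t$ gives a shorter cyclic walk, contradicting minimality. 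Hence consecutive common nodes always differ; since $|P_t|=2$ and $\{c_{t-1},c_t\}\subseteq P_t$, this forces $P_t=\{c_{t-1},c_t\}$ for all $t$, whence $L\ge 3$ ($L=1$ is a loop, and $L=2$ would give both $a\nodePrefStrict{c}b$ and $b\nodePrefStrict{c}a$ for $P_0=\{c,a\}$, $P_1=\{c,b\}$). Reading off each arc through its common node now gives $c_{t-1}\nodePrefStrict{c_t}c_{t+1}$ for all $t\bmod L$ with $L\ge 3$, which is precisely a forbidden sequence --- contradicting acyclicity of the collection. This establishes that $H$ is a DAG and completes the argument.
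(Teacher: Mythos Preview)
Your proof is correct and follows the same approach as the paper: construct a digraph on unordered node-pairs with arcs encoding the preference relations, argue it is acyclic, and read off symmetric costs from a topological-type ordering (you use longest-path depth, the paper an arbitrary topological sort). You are in fact more careful than the paper---you use the strict relation $\nodePrefStrict{i}$ when defining arcs (the paper's non-strict $\nodePref{i}$ would create 2-cycles at ties) and you actually prove the digraph is a DAG, which the paper simply asserts.
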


\begin{proof}
  Let $\costNoArgs$ denote a symmetric access cost function over the
  set $V$ of nodes.  For a given node $i \in V$, we have $j
  \nodePref{i} k$ iff $\cost{i}{j} \le \cost{i}{k}$.  We now argue
  that the collection $\{\nodePref{i}: i \in V\}$ is acyclic.
  Suppose, for the sake of contradiction, that there exists a sequence
  of nodes $i_0, i_1, \ldots, i_{k-1}$ for an integer $k \ge 3$ such
  that $i_{(j-1) \bmod k} \nodePrefStrict{i_j} i_{(j+1) \bmod k}$ for
  all $0 \le j < k$.  It then follows that:
\[
\cost{i_j}{i_{(j-1) \bmod k}} < \cost{i_j}{i_{(j + 1) \bmod k}} \mbox{ for } 0 \le j < k.
\]
Since $\costNoArgs$ is symmetric, we obtain
\[
\cost{i_j}{i_{(j-1) \bmod k}} < \cost{i_{(j+1) \bmod k}}{i_j} \mbox{ for } 0 \le j < k,
\]
which is a contradiction.

Given an acyclic collection of node preferences, we compute an
associated access cost function $\costNoArgs$ in polynomial time as
follows.  We construct a directed graph $G$ over the set $U$ of all
unordered pairs $(i,j): i,j \in V$, $i \neq j$. There is a directed
edge from node $(i,j)$ to $(i,k)$ if and only if $k \nodePref{i} j$.
Since the collection $\{\nodePref{i}: i \in V\}$ is acyclic, $G$ is a
dag.  We compute the topological ordering $\pi: U \rightarrow
\mathbb{Z}$; thus, we have $\pi((i,j)) < \pi((k, \ell))$ whenever
there is a directed path from $(i,j)$ to $(k,\ell)$.  Setting
$\cost{i}{j}$ to be $\pi((i,j))$ gives us the desired undirected
network.
\end{proof}

\smallskip \BfPara{Utility preference relations} In our game-theoretic
model, each node attaches a utility to each global placement.  We
present a general definition that allows us to consider a large class
of utility functions simultaneously.  Rather than define a numerical
utility function, we present the utility at each node $i$ as a total
preorder $\placePref{i}$ among the set of all global placements.  (The
notation $\placePrefStrict{i}$ and $\placePrefEq{i}$ over global
placements are defined analogously.)  We require that $\placePref{i}$,
for each $i \in V$, satisfies the following two basic conditions.
\begin{itemize}
\item {\bf Monotonicity}: For any two global placements $P$ and $Q$,
  if, for each object $\alpha$ and each node $q$ with $\alpha \in
  Q_q$, there exists a node $p$ with $\alpha \in P_p$ and $p
  \nodePref{i} q$, then $P \placePref{i} Q$.

\item {\bf Consistency}: Let $(P_i, P_{-i})$ and $(Q_i, Q_{-i})$
  denote two global placements such that for each object $\alpha \in
  P_i \cup Q_i$, if $p$ (resp., $q$) is a most $i$-preferred node in
  $\nodeSet \setminus \{i\}$ holding $\alpha$, i.e., $\alpha \in P_p$
  (resp., $\alpha \in Q_q$), then $p \nodePrefEq{i} q$.  If $(P_i,
  P_{-i}) \placePrefStrict{i} (Q_i, P_{-i})$, then $(P_i, Q_{-i})
  \placePref{i} (Q_i, Q_{-i})$.
\end{itemize}
In words, the monotonicity condition says that for any node, if all
the objects in a placement are placed at nodes that are at least as
preferred as in another placement, then the node prefers the former
placement at least as much as the latter.  The consistency condition
says that the preference for a node to store one set of objects
instead of another is entirely a function of the set of most preferred
other nodes that together hold these objects.  For instance, if a node
$i$ with unit capacity prefers to store $\alpha$ over $\beta$ in a
scenario where the most $i$-preferred node (other than $i$) storing
$\alpha$ (resp., $\beta$) is $j$ (resp., $k$), then $i$ prefers to
store $\alpha$ at least as much as $\beta$ in any other situation
where the most $i$-preferred node (other than $i$) storing $\alpha$
(resp., $\beta$) is $j$ (resp., $k$).  

\smallskip \BfPara{Generality of the conditions} We note that many
standard utility functions defined for replica placement
problems~\cite{ChunCWBPK04,LaoutarisSOSB07,PollatosSelfish}, including
the sum and max functions, satisfy the monotonicity and consistency
conditions.  Indeed, any utility function that is an $L_p$ norm, for
any $p$, over the costs for the individual objects, also satisfies the
conditions.  Furthermore, since the monotonicity and consistency
conditions apply to the individual utility functions, our model allows
the different nodes to adopt different types of utilities, as long as
each separately satisfies the two conditions.

\smallskip \BfPara{Binary object preferences} One class of utility
preference relations that we highlight is the ones based on binary
object preferences.  Suppose that each node $i$ has a set $S_i$ of
objects in which it is equally interested, and it has no interest in
the other objects.  Let $\tau_i(P)$ denote the $|S_i|$-length sequence
consisting of the $\sigma_i(P,\alpha)$, for $\alpha \in S_i$, in
nonincreasing order according to the relation $\nodePref{i}$.  Then,
the consistency condition can be further strengthened to the
following.
\begin{itemize}
\item
{\bf Binary Consistency:} For any placements $P = (P_i, P_{-i})$ and
$Q = (Q_i, Q_{-i})$ with $P_{-i} = Q_{-i}$, we have $P \placePref{i}
Q$ if and only if for $1 \le k \le |S_i|$, the $k$th component of
$\tau_i(P)$ is at least as $i$-preferred as the $k$th component of
$\tau_i(Q)$.
\end{itemize}

\smallskip \BfPara{\icc\ Games} In the general framework, a \icc\ game
is a tuple $(\nodeSet, \objSet, \{\nodePref{i}\}, \{\placePref{i}\})$.
A (pure) Nash equilibrium for an \icc\ game instance is a global
placement $P$ such that for each $i \in \nodeSet$ there is no
placement $Q_i$ such that $(Q_i, P_{-i}) \placePrefStrict{i} (P_i,
P_{-i})$.

For our complexity results, we need to give the specification for a
given game instance.  The set $\nodeSet$ is specified, together with
node cache capacities, and $\objSet$ is an enumerated list of object
names.  The node preference relation $\nodePref{i}$ is specified
succinctly by a set of at most $\binom{n}{2}$ bits, for each $i$.  The
utility preference relation $\placePref{i}$, however, is over a
potentially exponential number of placements (in terms of $n$, $m$,
and cache sizes).  For our complexity results, we assume that the
utility preference relations are specified by an efficient algorithm
-- which we call the {\em utility preference oracle}\/ -- that takes
as input a node $i$, and two global placements $P$ and $Q$, and
returns whether $P \placePref{i} Q$.  For the sum, max, and $L_p$-norm
utilities, the utility preference oracle simply computes the relevant
utility function.  For binary object preferences, the binary
consistency condition yields an oracle that is polynomial in number of
nodes, objects, and cache sizes.

\smallskip
\BfPara{Unit cache capacity} We now argue that the unit cache capacity
assumption of Section~\ref{sec:sum_model} continues to hold without
loss of generality.  Consider a set $\nodeSet$ of nodes in which the
cache of node $i$ can store $c_i$ objects.  Let $\nodeSet'$ denote a
new set of nodes which contains, for each node $i$ in $\nodeSet$, new
nodes $i_1, i_2, \ldots, i_{c_i}$, i.e., one new node for each unit of
the cache capacity of $i$.  We set the node preferences as follows:
for all $i, i', j \in \nodeSet$, $1 \le f, \ell \le c_j$, $1 \le k, k'
\le c_i$, we have $i_{k} \nodePref{j_\ell} i'_{k'}$ whenever $i
\nodePref{j} i'$, and $j_f \nodePrefEq{i_k} j_\ell$.

We consider an obvious onto mapping $f$ from placements in $\nodeSet'$
to those in $\nodeSet$.  Given placement $P'$ for $\nodeSet'$, we set
$f(P') = P$ where $P_i = \cup_{1 \le k \le c_i} P'_{i_k}$.  This
mapping naturally defines the utility preference relations for the
node set $\nodeSet'$.  In particular, for any $i \in \nodeSet$ and $1
\le k \le c_i$, $P' \placePref{i_k} Q'$ whenever $f(P') \placePref{i}
f(Q')$.  We also note that $f$ is computable in time polynomial in the
number of nodes and the sum of the cache capacities.  It is easy to
verify that the utility preference relation $\placePref{i_k}$ for all
$i_k \in \nodeSet'$ satisfies the monotonicity and consistency
conditions.  Furthermore, $P'$ is an equilibrium for $\nodeSet'$ if
and only if $f(P')$ is an equilibrium for $\nodeSet$; this together
with the onto property of the mapping $f$ gives us the desired
reduction.}

\section{Existence of equilibria in the general framework} \label{sec:exist}
In this section, we establish the existence of equilibria for several \icc\ games under the general framework of \cc\ games with ordinal preferences that we introduced in Section~\ref{sec:model}.  First, we extend the sum utility function results on hierarchical networks to the general framework (Section~\ref{sec:ultrametric}).  Next, we show that \icc\ games on undirected networks and binary object preferences are potential games (Section~\ref{sec:01-metric}).  Finally, when there are only two objects in the system, we use the technique of fictional players to give a polynomial-time construction of equilibria for \icc\ games on undirected networks (Section~\ref{sec:2-obj-metric}).

\subsection{Hierarchical networks} \label{sec:ultrametric}
We fist show that the polynomial time algorithm which was introduced in Section~\ref{sec:sum_ultrametric} holds also for the general framework of \cc\ games with ordinal preferences.  A hierarchical network, as defined in the general framework, is a tree $T$ whose  leaves set is the node set $\nodeSet$ and the node preference relation $\nodePref{i}$ is $j \nodePref{i} k$ if $\lca(i,j)$ is a descendant of $\lca(i,k)$.  This hierarchical network structure and each node's $i$ pair-preference relations $\ultraPairPref{i}$, determine completely the analysis of the algorithm introduced in Section~\ref{sec:sum_ultrametric}.  The latter were defined for the sum utility function.  Extending our analysis to the general framework, requires a new preference relation derivation and the establishment of Lemma's~\ref{lem:sum_pair_pref} analogue, which we present next for arbitrary utility preference relations that satisfy the monotonicity and consistency properties.

\smallskip \BfPara{Pair preference relations} For any utility preference relation $\placePref{i}$ that satisfies the monotonicity and consistency conditions, we define a strict weak order $\pairPrefStrict{i}$ on $\objSet \times \ancestors{i}$, where $\ancestors{i}$ is the set of $i$'s proper ancestors in $T$.

\begin{enumerate}
  \item \label{pair:1} We let $(\alpha, v) \pairPrefStrict{i} (\alpha, w)$ hold whenever $v$ is a proper ancestor of $w$, for each object $\alpha$, node $i$, and proper $i$'s ancestors $v$ and $w$.

  \item \label{pair:2} Considering distinct objects $\alpha, \beta$ and nodes $i,j,k$ with $j,k \neq i$, we let ${\cal P}$ be the set of global placements $P$, such that $j$ (resp. $k$) is a most $i$-preferred node in $V \setminus \{i\}$ holding $\alpha$ (resp. $\beta$) in $P_{-i}$.  If {\em there exist}\/ global placements $P = (\{\alpha\}, P_{-i})$ and $Q = (\{\beta\}, P_{-i})$ in ${\cal P}$ with $P \placePrefStrict{i} Q$, then $(\alpha, \lca(i,j)) \pairPrefStrict{i} (\beta, \lca(i,k))$.  
\end{enumerate}

In words, item~\ref{pair:1} says that $i$'s preference for keeping
$\alpha$ in its cache increases as the most $i$-preferred node holding
$\alpha$ becomes less preferred (or ``moves farther away'').  In
item~\ref{pair:2}, $(\alpha, v) \pairPrefStrict{i} (\beta, w)$ means
that if $i$ needs to place either $\alpha$ or $\beta$ in its cache,
and the least common ancestor of $i$ and the most $i$-preferred node
in $\nodeSet \setminus \{i\}$ holding $\alpha$ (resp., $\beta$) is $v$
(resp., $w$), then $i$ prefers to store $\alpha$ over $\beta$.  The
strict weak order $\pairPrefStrict{i}$ induces a total preorder
$\pairPref{i}$ as follows: $(\alpha, v) \pairPref{i} (\beta, w)$ if it
is not the case that $(\beta, v) \pairPrefStrict{i} (\alpha, w)$.  We
similarly define $\pairPrefEq{i}$: $(\alpha, v) \pairPrefEq{i} (\beta,
w)$ if $(\alpha, v) \pairPref{i} (\beta, w)$ and $(\beta, v)
\pairPrefEq{i} (\alpha, w)$.

\begin{lemma}
\label{lem:strict_weak_order}
  For each $i$, $\pairPrefStrict{i}$ as given above, is a well-defined
  strict weak order.
\end{lemma}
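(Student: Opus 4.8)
The plan is to verify directly that $\pairPrefStrict{i}$ satisfies the three defining requirements of a strict weak order: it must be well-defined (the definition via ``there exist'' clauses does not lead to contradictory requirements), irreflexive and transitive (a strict partial order), and the induced ``incomparability'' relation must be transitive. The essential tool throughout is the consistency axiom from Section~\ref{sec:model}, which is exactly what decouples the comparison of $(\alpha,v)$ and $(\beta,w)$ from the particular placement $P_{-i}$ realizing those least common ancestors; monotonicity (together with item~\ref{pair:1}) handles the pairs in which the object coordinate is equal.

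First I would argue well-definedness: suppose $(\alpha,\lca(i,j)) \pairPrefStrict{i} (\beta,\lca(i,k))$ is witnessed by some $P_{-i}$ with $P=(\{\alpha\},P_{-i}) \placePrefStrict{i} Q=(\{\beta\},P_{-i})$; I must show no other admissible placement $P'_{-i}$ (also making $j$, $k$ the most-preferred holders of $\alpha$, $\beta$) yields the reverse strict inequality $(\{\beta\},P'_{-i}) \placePrefStrict{i} (\{\alpha\},P'_{-i})$. This is precisely the contrapositive content of the consistency condition: if $(\{\alpha\},P_{-i}) \placePrefStrict{i} (\{\beta\},P_{-i})$ then $(\{\alpha\},P'_{-i}) \placePref{i} (\{\beta\},P'_{-i})$, so the strict reversal is impossible. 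So the ``there exist'' quantifier in item~\ref{pair:2} is in fact equivalent to ``for all'', which makes the relation consistent across placements and also shows that if $(\alpha,v) \pairPrefStrict{i} (\beta,w)$ then $(\beta,v) \pairPrefStrict{i} (\alpha,w)$ cannot also hold --- giving the well-definedness of the induced $\pairPref{i}$ as well. Irreflexivity is then immediate, since $P \placePrefStrict{i} P$ never holds for a total preorder.

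Next I would establish transitivity of $\pairPrefStrict{i}$. Given $(\alpha,\lca(i,j)) \pairPrefStrict{i} (\beta,\lca(i,k))$ and $(\beta,\lca(i,k)) \pairPrefStrict{i} (\gamma,\lca(i,\ell))$, I want $(\alpha,\lca(i,j)) \pairPrefStrict{i} (\gamma,\lca(i,\ell))$. The clean way is to pass to a numerical representation: each total preorder $\placePref{i}$ on the finite set of placements can be represented by a real-valued function, but more usefully, by consistency the quantity ``how much $i$ prefers storing $\alpha$ with nearest external holder having lca $v$'' depends only on the pair $(\alpha,v)$; so I can define a real potential $\phi_i(\alpha,v)$ and show $(\alpha,v)\pairPrefStrict{i}(\beta,w) \iff \phi_i(\alpha,v) > \phi_i(\beta,w)$, whence transitivity and transitivity-of-incomparability are automatic. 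Concretely, fix any one background placement $P_{-i}$ in which, for every object $\gamma$ and every ancestor $x$ of $i$, there is a realizing configuration (this is why the initialization in Section~\ref{sec:sum_ultrametric} seeds a fictional holder below every internal node), let $\phi_i(\alpha,v)$ be the rank of $(\{\alpha\},P^{(v)}_{-i})$ in $\placePref{i}$ where $P^{(v)}_{-i}$ realizes lca $v$ for $\alpha$; consistency guarantees this rank is independent of which realizing placement is chosen, and for pairs with a common object coordinate, item~\ref{pair:1} together with monotonicity forces $\phi_i(\alpha,\cdot)$ to be monotone in the ancestor order. Then $\pairPrefStrict{i}$ is just the strict order pulled back from $\mathbb{R}$ via $\phi_i$, which is manifestly a strict weak order.

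I expect the main obstacle to be the bookkeeping around pairs $(\alpha,v)$ and $(\beta,w)$ where one of $v,w$ is \emph{not} of the form $\lca(i,j)$ for a node $j$ realizable under the required most-preferred constraints, or where $\alpha=\beta$ --- these are handled by item~\ref{pair:1} but need to be shown to cohere with the comparisons coming from item~\ref{pair:2} (i.e. that the two clauses together never contradict each other). The resolution is again consistency plus monotonicity: moving the nearest holder of $\alpha$ to a strictly less $i$-preferred node only weakly increases $i$'s desire to cache $\alpha$, so the ancestor-monotonicity of clause~\ref{pair:1} is compatible with any strict preference produced by clause~\ref{pair:2}. Assembling the potential function $\phi_i$ so that it simultaneously respects both clauses is the one place requiring a little care; once that is done, the strict-weak-order properties follow formally.
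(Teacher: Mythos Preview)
Your well-definedness argument is correct and matches the paper's: consistency upgrades the existential in item~\ref{pair:2} to a universal, so the relation cannot point both ways between a fixed pair, and irreflexivity is immediate.

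For transitivity, however, the paper takes a different and more direct route. It does not construct a potential; instead, having noted that $(\alpha,v)\pairPrefStrict{i}(\alpha',v')$ forces $(\{\alpha\},P_{-i})\placePrefStrict{i}(\{\alpha'\},P_{-i})$ for \emph{every} common realizing $P_{-i}$, it argues that a putative cycle in $\pairPrefStrict{i}$ would yield a cycle in $\placePrefStrict{i}$, contradicting the fact that $\placePref{i}$ is a total preorder. Transitivity and transitivity-of-incomparability are then read off from acyclicity together with the asymmetry already established.

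Your potential-function construction, by contrast, has a genuine gap. You define $\phi_i(\alpha,v)$ as the rank of $(\{\alpha\},P^{(v)}_{-i})$ in $\placePref{i}$ and assert that consistency makes this rank independent of the realizing placement. That is not what consistency delivers: consistency only constrains the \emph{relative} order of $(\{\alpha\},P_{-i})$ versus $(\{\beta\},P_{-i})$ for the \emph{same} $P_{-i}$; it says nothing about the absolute position of $(\{\alpha\},P_{-i})$ in the global preorder as $P_{-i}$ varies (for instance across placements that differ in where some third object $\gamma$ sits). So $\phi_i$ as written is not well-defined. Even if you fix one canonical $P^{(\alpha,v)}_{-i}$ per pair, comparing $\phi_i(\alpha,v)$ with $\phi_i(\beta,w)$ amounts to comparing $(\{\alpha\},P^{(\alpha,v)}_{-i})$ against $(\{\beta\},P^{(\beta,w)}_{-i})$---two placements with \emph{different} $-i$ parts---and neither axiom connects that cross-background comparison to the pair relation, which is defined only through a \emph{common} $P_{-i}$ simultaneously realizing both $(\alpha,v)$ and $(\beta,w)$. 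A numerical representation of $\pairPrefStrict{i}$ certainly exists once it is known to be a strict weak order on a finite set, but invoking that representation to \emph{prove} the strict-weak-order property is circular. The paper's direct cycle-to-cycle argument avoids this issue entirely.
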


\begin{proof}
We need to ensure the well-definedness of part~\ref{pair:2} of the
definition of pair preference relations.  That is, we need to show
that for any placements $P_{-i}$ and $Q_{-i}$ such that a most
$i$-preferred node in $P_{-i}$ holding $\alpha$ (resp., $\beta$) is
also a most $i$-preferred node in $Q_{-i}$, it is impossible that
$(\{\alpha\}, P_{-i}) \placePrefStrict{i} (\{\beta\}, P_{-i})$ and
$(\{\beta\}, Q_{-i}) \placePrefStrict{i} (\{\alpha\}, Q_{-i})$ both
hold.  This directly follows from the consistency condition for
utility preference relations.

The reflexivity and transitivity of $\pairPref{i}$ are immediate from the definitions and the reflexivity and transitivity of $\placePref{i}$.  Finally, to ensure the well-definedness of the strict preorder $\pairPrefStrict{i}$, we also have to show that there is no collection of pairs $(\alpha_j, v_j)$, $0 \le j < \ell$ for some integer $\ell > 1$, such that $(\alpha_j, v_j) \pairPrefStrict{i} (\alpha_{j+1 \bmod \ell}, v_{j+1 \bmod \ell})$ for $0 \le j < \ell$.  To see this, it is sufficient to note that if $(\alpha, v) \pairPrefStrict{i} (\alpha', v')$ then for all placements $P$ and $P'$ such that $P_{-i} = P'_{-i}$ and the least common ancestor of $i$ and the most $i$-preferred node in $V \setminus \{i\}$ that holds $\alpha$ (resp. $\alpha'$) is $v$ (resp. $v'$) we have $P \placePrefStrict{i} P'$.  So any cycle in the strict preorder $\pairPrefStrict{i}$ implies a cycle in $\placePrefStrict{i}$, yielding a contradiction.
\end{proof}

Analogous to Lemma~\ref{lem:sum_pair_pref}, we can express the best response of any player in hierarchical networks as follows.  For any global placement $P = (\{\alpha\}, P_{-i})$, assume that $j$ (resp. $k$) is a most $i$-preferred node holding object $\alpha$ (resp. $\beta$) in $P_{-i}$, and $(\{\beta\}, P_{-i}) \placePrefStrict{i} (\{\alpha\}, P_{-i})$, i.e., for node $i$, storing $\beta$ is a better response to $P_{-i}$ than storing $\alpha$.  Then the following Lemma holds.

\begin{lemma} \label{lem:pair_pref}
For any global placement $P = (\{\alpha\}, P_{-i})$, $(\beta, \lca(i,k)) \pairPrefStrict{i} (\alpha, \lca(i,j))$.  Furthermore, $\{\alpha\}$ is a best response to $P_{-i}$, where $\alpha$ maximizes $(\gamma, \lca(i,\sigma_i(P_{-i}, \gamma)))$, over all objects $\gamma$, according to $\pairPref{i}$.
\end{lemma}

\begin{proof}
The first statement of the lemma directly follows from item 2 of the definition of pair preference relations.  We establish the second statement by contradiction.  Suppose that for node $i$, $\{\beta\}$ is a better response to $P_{-i}$ than $\{\alpha\}$.  Then, we have $(\{\beta\}, P_{-i}) \placePrefStrict{i} (\{\alpha\}, P_{-i})$, which, by item 2 of the definition of pair preference relations, implies that $(\beta, \lca(i,\sigma_i(P_{-i}, \beta))) \pairPrefStrict{i} (\alpha, \lca(i,\sigma_i(P_{-i}, \alpha)))$, a contradiction to the choice of $\alpha$.
\end{proof}

The remainder of the analysis for hierarchical networks (Lemma~\ref{lem:ultra} and Theorem~\ref{theo:ultra}) follows as before, invoking Lemma~\ref{lem:pair_pref} instead of Lemma~\ref{lem:sum_pair_pref}.

\subsection{Undirected networks with binary object preferences} \label{sec:01-metric}
Let $\costNoArgs$ be a symmetric cost function for an undirected network over the node set $\nodeSet$.  From the binary object preferences definition for each node $i$ we are given an object set $S_i$ in which $i$ is equally interested.  We prove the existence of equilibria via a potential function argument.  Given a placement $P$, we let $\Phi_i(P) = \cost{i}{j}$, where $j$ is the most $i$-preferred node in $V-\{i\}$ holding the object in $P_i$.  We introduce the potential function $\Phi$: $\Phi(P) = (\Phi_0, \Phi_{i_1}(P), \Phi_{i_2}(P), \ldots, \Phi_{i_n}(P))$, where $\Phi_0$ is the number of nodes $i$ such that $P_i \subseteq S_i$, and $\Phi_{i_j}(P) \leq \Phi_{i_{j+1}}(P)$, $\forall j$, where $V=\{i_1, i_2, \ldots, i_n\}$.  We prove that $\Phi$ is an increasing potential function, i.e. after any better response step, $\Phi$ increases in lexicographical order.

Let $P =(P_i, P_{-i})$ be an arbitrary global placement. Assume that $P_i = \{\alpha\}$ and $j$ is the most $i$-preferred node in $P_{-i}$ holding $\alpha$.  Consider any better response step, from placement $P$ to $Q = (Q_i, P_{-i})$, where $Q_i = \{\beta\}$.  Clearly $\beta \in S_i$.  We consider two cases.  First, suppose $\alpha \notin S_i$ and $\beta \in S_i$. Then, $\Phi_0$ increases, and so does the potential.  The second case is where $\alpha, \beta \in S_i$.  Let $k$ be the most $i$-preferred node in $P_{-i}$ holding $\beta$.  In this case, $\Phi_0$ does not change.  However, since this is a better response step of $i$, $j \nodePrefStrict{i} k$, implying that $\cost{i}{k} > \cost{i}{j}$ and hence $\Phi_i(Q) > \Phi_i(P)$.  Consider any other node $j$.  If $j$ holds any object $\gamma$ other than $\beta$, since no new copy of $\gamma$ has been added, $\Phi_j(Q) \ge \Phi_j(P)$.  It remains to consider the case where $j$ holds $\beta$.  If $S$ is the set of nodes in $\nodeSet \setminus \{j\}$ holding $\beta$ in $P_{-j}$, then $S \cup \{i\}$ is the set of nodes in $\nodeSet \setminus \{j\}$ holding $\beta$.  Thus, $\Phi_j(Q) = \min\{\Phi_j(P), \cost{j}{i}\} \ge \min\{\Phi_j(P), \Phi_i(Q)\}$.  This also means that $\Phi_j(P)$ appears later in the sorted order than $\Phi_i(P)$ and $\Phi_j(Q)$ appears no earlier in the sorted order than $\Phi_i(Q)$.  Hence, $\Phi(Q)$ is lexicographically greater than $\Phi(P)$.  This establishes that for undirected networks with binary object preferences, the resulting \icc\ game is a potential game, and hence also in \pls~\cite{JPY88}.

\subsection{Undirected networks with two objects} \label{sec:2-obj-metric}
In the case of an undirected network with two objects we provide a polynomial-time algorithm to compute an equilibrium.  We use the fictional player technique that was introduced in Section~\ref{sec:ultrametric}.  In the beginning a set of fictional players are introduced to serve the two objects in the network at zero cost from each node.  In each subsequent step, the fictional players are progressively moved ``further'' away, in a way that at each instance the equilibrium is ensured.  The whole set of fictional players are completely removed when they are at the least preferred cost from all the nodes, yielding finally to an equilibrium for the original network.

Suppose we are given a undirected network with access cost function $\costNoArgs$.  Also let $\DstList$ be the set $\{ 0, \ell_1, \ell_2, \ldots, \ell_r \}$ of all access costs between nodes in the system in increasing order; that is, $\ell_1 = \min_{i,j} \cost{i}{j}$ and $\ell_r = \max_{i,j} \cost{i}{j}$ and $\ell_i < \ell_{i+1}$ for all $1 \le i < r$. 

\smallskip \BfPara{Fictional player} For an object $\alpha$, a fictional $\alpha$-player is a new node that will store $\alpha$ in every equilibrium; an fictional $\alpha$-player prefers storing $\alpha$ over any other object.  We denote by $\auxsrv{\ell}{\alpha}$ the fictional $\alpha$-player which is at access cost $\ell$ from every node in $V$.

\smallskip \BfPara{The algorithm}

\smallskip \ItPara{Initialization} Assuming that there are two objects $\alpha$ and $\beta$ in the system, we initially set up a fictional $\alpha$-player $\auxsrv{0}{\alpha}$ and $\beta$-player $\auxsrv{0}{\beta}$ at access cost $0$ from each node in $V$, which does not affect the actual distance between nodes.  We let nodes replicate their most preferred object and access the other without any access cost from the corresponding fictional player.  This placement is obviously an equilibrium.

\smallskip \ItPara{Step $t$ of algorithm} Fix an equilibrium $P$ for the node set $V \cup \{\auxsrv{\ell_t}{\alpha}\} \cup \{\auxsrv{\ell_t}{\beta}\}$. We describe one step of the algorithm which computes a new set of fictional players $\auxsrv{\ell_{t+1}}{\alpha}$ and $\auxsrv{\ell_{t+1}}{\beta}$ and a new placement $P'$ such that $P'$ is an equilibrium for the node set $V \cup \{\auxsrv{\ell_{t+1}}{\alpha}\} \cup \{\auxsrv{\ell_{t+1}}{\beta}\}$.  We first remove the $\alpha$-player $\auxsrv{\ell_{t}}{\alpha}$ from the system and instead we add $\auxsrv{\ell_{t+1}}{\alpha}$.  If there do not exist nodes that want to deviate we are done. Otherwise, assume that there exists a node $i$ that wants to deviate from its strategy. Since the most $i$-preferred node holding $\beta$ in $V \cup \{\auxsrv{\ell_{t}}{\alpha}\} \cup \{\auxsrv{\ell_{t}}{\beta}\}$ remains the same in $V \cup \auxsrv{\ell_{t+1}}{\alpha} \cup \auxsrv{\ell_{t}}{\beta}$, $i$ is not holding object $\alpha$.  Thus the only nodes that may want to deviate are those that are holding object $\beta$. We argue that if we let $i$ to deviate from $\beta \in P_i$ to $\alpha \in P'_i$, there is no node $j \in V \setminus \{i\}$ that gets affected by $i$'s deviation.  Consider the following two cases:

\begin{itemize}
	\item If a node $j$ has access cost at most $\ell_{t}$ from $i$, then $\beta \in P_j$. Otherwise, if $\alpha \in P_j$, $\auxsrv{\ell_{t}}{\alpha}$ would not be the most $i$-preferred node holding $\alpha$ and thus $i$ would not be affected by any change of $\alpha$-players.  Thus there does not exist any node $j \in V \setminus \{i\}$ with access cost at most $\ell_{t}$ from $i$, such that $\alpha \in P_j$, and as we showed above $\alpha \in P'_j$.
		
	\item If a node $j$ has access cost at least $\ell_{t+1}$ from $i$, then $P_j = P'_j$.  Because of the $\alpha$-player $\auxsrv{\ell_{t+1}}{\alpha}$ and the $\beta$-player $\auxsrv{\ell_{t}}{\beta}$, $i$ would never be the $j$-most preferred node in $P'$.
\end{itemize}

We then remove the $\beta$-player $\auxsrv{\ell_{t}}{\beta}$ from the system and instead we add $\auxsrv{\ell_{t+1}}{\beta}$.  Using a similar argument as above, we obtain a new equilibrium at the end of this step.
	
\begin{theorem}
	For undirected networks with two objects, an equilibrium can be found in polynomial time.
\end{theorem}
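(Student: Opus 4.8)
The plan is to follow the stated algorithm and verify, by induction on the step number, the invariant that at the start of step $t$ the current global placement $P$ is an equilibrium for the augmented node set $\nodeSet \cup \{\auxsrv{\ell_t}{\alpha}, \auxsrv{\ell_t}{\beta}\}$, as $\ell_t$ runs through $\DstList$ in increasing order. The base case is immediate: with both fictional players at access cost $0$, a node has only the two strategies ``store $\alpha$ and fetch $\beta$ for free'' and ``store $\beta$ and fetch $\alpha$ for free'', and which of the two is better is independent of every other player, so the placement in which each node stores its better option is an equilibrium. The core of the argument is the inductive step, which I would split into its two half-steps — advancing $\auxsrv{\cdot}{\alpha}$ from $\ell_t$ to $\ell_{t+1}$, then advancing $\auxsrv{\cdot}{\beta}$ — and since these are symmetric I would write out only the first.

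In the $\alpha$ half-step, since the set of $\beta$-holders is untouched, a node whose best response changes cannot currently store $\alpha$ (its payoff from $\alpha$ is unchanged and its payoff from $\beta$ cannot have risen); hence every node that now wants to deviate currently stores $\beta$ and would switch to $\alpha$, and moreover such a node $i$ can have no $\alpha$-holder within access cost $\le \ell_t$ (else its nearest $\alpha$-holder, and thus its incentive, would be unchanged by the move). We process these nodes one at a time, and the heart of the proof is the \emph{no-cascade claim}: letting a single such $i$ switch from $\beta$ to $\alpha$ does not change the best-response status of any other node $j$. Here the decisive structural fact is that $\DstList$ contains \emph{every} inter-node cost, so $\cost{i}{j}$ is either $\le \ell_t$ or $\ge \ell_{t+1}$. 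If $\cost{i}{j} \le \ell_t$, then $j$ already stores $\beta$ (by the observation above), and after $i$'s switch $j$'s nearest $\alpha$-holder only improves while its nearest $\beta$-holder only worsens; chaining Monotonicity along $(\{\beta\}, Q_{-j}) \placePref{j} (\{\beta\}, P_{-j}) \placePref{j} (\{\alpha\}, P_{-j}) \placePref{j} (\{\alpha\}, Q_{-j})$ (using that $j$ holds its own copy of $\beta$, so $i$'s lost copy is irrelevant to $j$), node $j$ still prefers storing $\beta$ and is in particular stable. If $\cost{i}{j} \ge \ell_{t+1}$, then $\auxsrv{\ell_t}{\beta}$ strictly dominates $i$ as a $\beta$-source and $\auxsrv{\ell_{t+1}}{\alpha}$ weakly dominates $i$ as an $\alpha$-source from $j$'s viewpoint, so $j$'s most-preferred holder of each object is unchanged and Consistency gives that $j$'s best response — hence its (in)stability — is unchanged.

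Two corollaries finish the half-step. A node that has switched to $\alpha$ stays stable, because afterwards only nodes at cost $\ge \ell_{t+1}$ are ever allowed to switch (nearer ones have just become stable), and those do not affect it by the second case above; and each switch strictly decreases the number of $\beta$-holders that want to deviate (the switched node leaves that set, its near neighbours become stable, far nodes are untouched, and by Monotonicity no node becomes newly unstable), so the half-step halts after at most $n$ switches with an equilibrium for $\nodeSet \cup \{\auxsrv{\ell_{t+1}}{\alpha}, \auxsrv{\ell_t}{\beta}\}$. For the running time, $|\DstList| \le \binom{n}{2}+1$, so there are $O(n^2)$ half-steps, each doing $O(n)$ switches with $\poly(n)$ work per switch (a beneficial deviation is detected by one query to the utility-preference oracle per node, since there are only two objects). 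Finally, once both fictional players sit at $\ell_r = \max_{i,j}\cost{i}{j}$, we remove them one at a time: since $\nodeSet$ contains a node holding every object, each node already has a real node holding each object at cost $\le \ell_r$, so removing a fictional player leaves every node's most-preferred holder of each object unchanged, and the equilibrium survives by Consistency. The step I expect to be the main obstacle is exactly the no-cascade claim — showing that one deviation cannot propagate — which is where the absence of any cost strictly between $\ell_t$ and $\ell_{t+1}$, the pinning of far-away nodes by the two fictional players, and the Monotonicity and Consistency axioms all have to be combined.
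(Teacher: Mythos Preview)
Your proposal is correct and follows essentially the same approach as the paper: the same fictional-player algorithm, the same induction on $\ell_t$, the same split into $\alpha$- and $\beta$-half-steps, and the same two-case no-cascade argument based on whether $\cost{i}{j} \le \ell_t$ or $\cost{i}{j} \ge \ell_{t+1}$. Your write-up is in fact more careful than the paper's own argument --- you spell out the Monotonicity/Consistency chain showing that a nearby $\beta$-holder $j$ remains stable after $i$'s switch, you explicitly verify that an already-switched node stays stable under later switches, and you handle the final removal of the fictional players at $\ell_r$ --- all points the paper leaves implicit or asserts without detail.
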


\begin{proof}
An initial placement $P$, where we have the set of fictional players $\auxsrv{0}{\alpha}$ and $\auxsrv{0}{\beta}$ in the system, is obviously an equilibrium.  It is immediate from our argument above that at termination the algorithm returns a valid equilibrium.

The size of the set $\DstList$ is at most ${n \choose 2}$ which is at most $n^2$.  In each step $t$ at most $n$ nodes may want to deviate from their strategy, since we showed above that if a node deviates once in a step, it will not deviate again during the same step. Thus, the total number of deviations in the algorithm is at most $n^3$. 
\end{proof}

\section{Non-Existence of equilibria in \icc\ games and the associated decision problem}
\label{sec:non-exist}
In this section, we show that the classes of games studied in Section~\ref{sec:exist} are essentially the only games where equilibria are guaranteed to exist.  We identify the most basic \icc\ games where equilibria may not exist, and study the complexity of the associated decision problem.

\subsection{NP-Completeness}
\label{sec:npc}
We first show that  it is \NP-hard to determine whether a given \icc\ game has an equilibrium even when the utility preference relations are based on the sum utility function and either the number of objects is small or the object preferences are binary.  Some simple network examples appear in Fig.~\ref{Fig:fig2} (middle and right)--the networks are described in details after Theorem~\ref{app:npc-theorem}--showing that there does not exist an equilibrium in these configurations (proved in the second part of Theorem~\ref{Theorem:nonmetric} and~\ref{Theorem:metric}).  The \NP-hardness proof is by a polynomial-time reduction from 3SAT~\cite{GareyJohnson}.  Each reduction is built on top of a gadget which has an equilibrium if and only if a specified node holds a certain object.  Several copies of these gadgets are then put together to capture the given 3SAT formula.

\begin{theorem} \label{app:npc-theorem}
  The problem of determining whether a \icc\ instance has an equilibrium is in \NP\ even if one of these three restrictions hold: (a) the number of objects is two; (b) the object preferences are binary and number of objects is three; (c) the network is undirected and the number of objects is three.
\end{theorem}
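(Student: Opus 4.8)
To show membership in \NP, I would exhibit a short certificate: a candidate global placement $P$. Verifying that $P$ is an equilibrium requires, for each node $i$, checking that no single-object deviation $Q_i$ satisfies $(Q_i, P_{-i}) \placePrefStrict{i} (P_i, P_{-i})$. Since each node's cache is unit-size (by the reduction in Section~\ref{sec:model}), there are at most $m$ candidate deviations per node, and each comparison is a single call to the utility preference oracle, so the whole check runs in polynomial time. For the \NP-hardness, the skeleton is a reduction from 3SAT, and the excerpt already signals the structure: build a \emph{gadget} that, as a standalone \icc\ game, has an equilibrium if and only if a designated node is forced to hold a designated object in that equilibrium; then wire together copies of the gadget to encode the clauses and variables of a given formula $\varphi$, so that the composite game has an equilibrium iff $\varphi$ is satisfiable.

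The core of the hardness argument is the gadget design, and I would do this three times, once for each of the three restricted regimes (a) two objects, (b) binary preferences with three objects, (c) undirected network with three objects. The common idea: use a small cyclic structure among a few nodes whose best responses chase each other --- reminiscent of matching-pennies --- so that an equilibrium exists only when an external node "breaks" the cycle by parking a particular object nearby. For regime (a), I would exploit that with only $\alpha$ and $\beta$ available, a node's best response is completely determined by the relative (preference-weighted) distances to the nearest $\alpha$ and nearest $\beta$; a directed preference cycle on three nodes, each wanting to hold the object the next one isn't holding, has no equilibrium unless a designated "switch" node supplies a free copy of one object, and I would connect that switch node to the variable gadget so it holds $\alpha$ iff the variable is true. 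For regime (c) --- undirected, three objects --- I cannot use directed asymmetry, so I would instead exploit the extra object: the third object serves as a "token" that a node is forced to hold (because no one else holds it nearby) precisely when the first two objects are placed in a configuration encoding an unsatisfied clause, again yielding non-existence exactly in the bad case. Regime (b) is similar to (c) but replaces the quantitative distance reasoning with the combinatorial binary-consistency comparison on sorted distance vectors.

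After the gadgets, the assembly step is mostly bookkeeping: introduce a node (or a pair) per literal occurrence, a node per variable acting as the "switch" whose equilibrium object encodes the truth value, and per-clause gadgets that are in equilibrium iff at least one literal in the clause is set to true; set the inter-gadget access costs (or node-preference orders) large enough that gadgets interact only through the intended switch nodes. One then argues (i) a satisfying assignment yields a consistent equilibrium by placing each switch node according to the assignment and letting every gadget settle into its "good" equilibrium, and (ii) conversely any equilibrium of the composite game restricts to an equilibrium on each clause gadget, which forces each clause to be satisfied, and the switch nodes' placements are globally consistent, so they define a satisfying assignment. Throughout, I would confirm the utility preference relations used are exactly the sum utility (with suitably chosen rates $r_i(\cdot)$ and costs), so that the hardness holds in the most restrictive promised setting.

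**Main obstacle.** The hard part is getting a \emph{single} gadget that works under the severe parameter limits --- especially regime (a) with only two objects and a \emph{directed} network, where the only lever is the asymmetry of $\costNoArgs$, and regime (c) where there is no asymmetry at all and one must make a third object do all the forcing. Designing a gadget whose equilibrium existence is genuinely equivalent to a Boolean condition (no spurious equilibria, no accidental non-existence) and which composes cleanly without the gadgets' best-response dynamics leaking into one another is the delicate, case-specific engineering that carries the whole theorem; verifying the "if and only if" for each gadget against the sum-utility best-response characterization (Lemma~\ref{lem:sum_pair_pref}-style reasoning) is where the real work lies.
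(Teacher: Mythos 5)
Your overall architecture matches the paper's: membership in \NP\ via polynomial-time verification of a candidate placement (legitimate once the unit-cache reduction is in place), and hardness via a reduction from 3SAT in which pairs of complementary variable nodes encode truth values, clause nodes aggregate their literals, and a small cyclic ``matching-pennies''-style gadget has an equilibrium precisely when an adjacent switch node parks a designated object. The paper's construction is exactly of this shape: variable nodes $X_i,\bar{X_i}$ at mutual distance $0.5$, clause nodes $C_j$ at distance $1$ from their literals, a single collector node $S$ that holds $\alpha$ iff every clause is satisfied, and one four-node gadget $\{S,A,B,C\}$ whose $A\to B\to C\to A$ best-response cycle is broken iff $S$ holds $\alpha$. (One small architectural difference: you propose a non-existence gadget per clause, whereas the paper funnels all clauses into the single node $S$ and attaches one gadget there; both work, but the paper's version needs only one delicate gadget per regime.)

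The genuine gap is that you stop exactly where the theorem's content begins. You correctly identify that everything hinges on exhibiting, for each of the three regimes, a concrete gadget with \emph{no} equilibrium in the bad case and a verified equilibrium in the good case --- and then you defer that construction as ``the delicate, case-specific engineering that carries the whole theorem.'' Without it, nothing is proved: it is not a priori clear that a sum-utility \icc\ game with only two objects (regime (a)) or with a symmetric cost function (regime (c)) can fail to have an equilibrium at all, and establishing this requires exhibiting explicit numbers and exhausting all placements of the gadget nodes. The paper does this with carefully tuned parameters --- e.g., in the directed two-object case, $r_A(\alpha)=0.7$ against unit weights elsewhere and a directed unit-cost cycle $A\to B\to C\to A$ with $\cost{A}{S}=1$; in the undirected three-object case, costs $\cost{A}{S}=\cost{B}{S}=3$, $\cost{A}{B}=3.1$, $\cost{B}{C}=3.05$, $\cost{C}{A}=2$ and weights such as $r_B(\gamma)=0.9837$, followed by a nine-case enumeration showing some node always deviates when $S$ holds $\beta$. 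You would also need to verify the ``interaction isolation'' you gesture at (that the gadget nodes' contents do not perturb the best responses of $S$, the clause nodes, or the variable nodes), which the paper handles by its choice of distances and weights. So: right plan, same route as the paper, but the proof is not there until the gadgets are.
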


The membership in \NP\ is immediate, since one can determine in polynomial time whether a given global placement is an equilibrium.  The remainder of the proof focuses on the hardness reduction from 3SAT.

Given a 3SAT formula $\phi$ with $n$ variables $x_1$, $x_2$, $\ldots$, $x_n$ and $k$ clauses $c_1$, $c_2$, $\ldots$, $c_k$, we construct a \icc\ instance as follows.  For each variable $x_i$ in $\phi$, we introduce two variable nodes $X_i$ and $\bar{X}_i$.  We set $\cost{X_i}{\bar{X}_i}$ and the symmetric $\cost{\bar{X}_i}{X_i}$ to be $0.5$, where $\costNoArgs$ is the underlying access cost function.  For each clause $c_j$ we introduce a clause node $C_j$.  Assuming that $\ell_{j,r}$ for $r \in \{1,2,3\}$, are the three literals of the $c_j$ clause in formula $\phi$, we set $\cost{C_j}{L_{j,r}}$ and $\cost{L_{j,r}}{C_j}$ to be $1$, where $L_{j,r}$ is the corresponding variable node.  We also introduce a gadget $G$ illustrated in Figure~\ref{Fig:fig2} (middle and right), consisting of nodes $\node{S}$, $\node{A}$, $\node{B}$, and $\node{C}$.  We set the access cost $\cost{S}{C_i}$ and the symmetric $\cost{C_i}{S}$, for all $1 \leq i \leq k$ between node $\node{S}$ and all clause nodes to be $2$.  The general construction is illustrated in Figure~\ref{Fig:fig2} (left).

\begin{figure}[ht]
\centering
\includegraphics[scale=0.17]{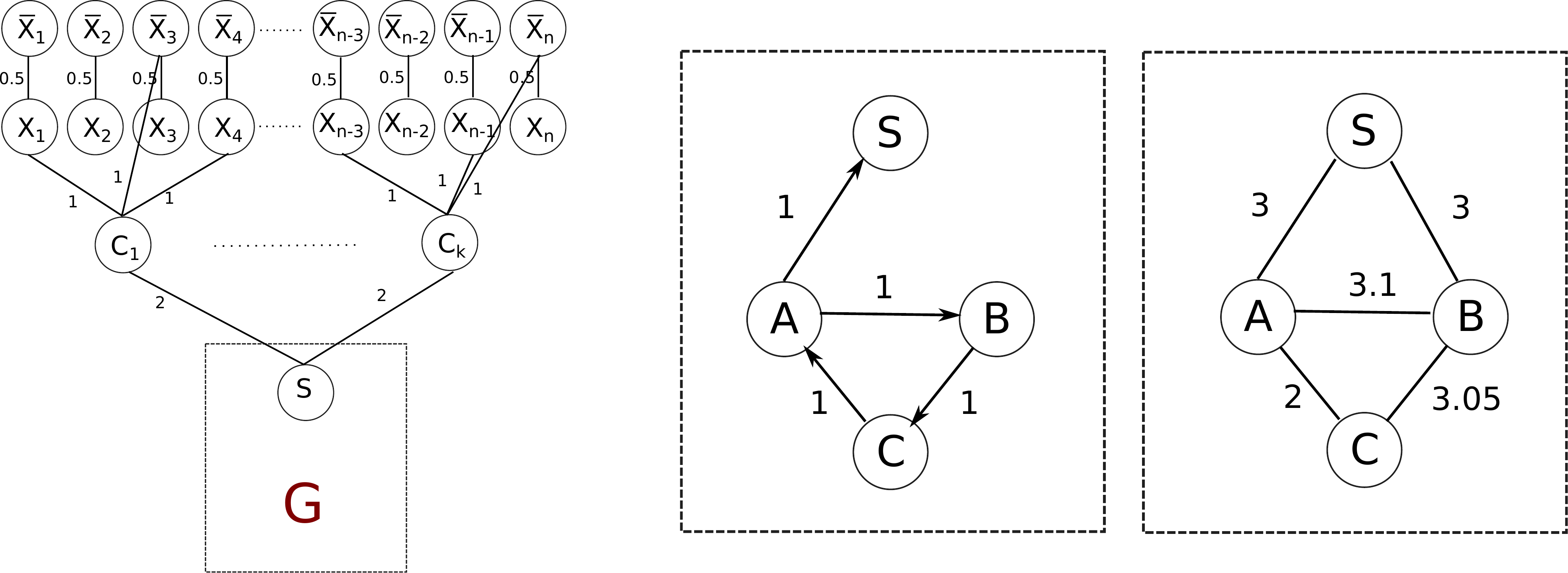}
\caption{{\small Left: instance of the construction for the undirected case proof of NP-Hardness, where $\phi = (x_1 \vee \bar{x_3} \vee x_4) \wedge \ldots \wedge (x_{n-3} \vee x_{n-1} \vee \bar{x_n})$}. {\small Middle and right: gadget G for the directed and the undirected case}}
\label{Fig:fig2}
\end{figure}

\BfPara{Directed networks with two objects} We set the access costs $\cost{A}{S} = \cost{A}{B} = \cost{B}{C} = \cost{C}{A} = 1$, and the server node, which stores a fixed copy of two objects $\object{\alpha}$ and $\object{\beta}$, at access cost $\distsrv = 10$ from all nodes in $\nodeSet$.  We also set the weights of the variable nodes $\rate{x_i}{\alpha} = \rate{\bar{x}_i}{\alpha} = \rate{x_i}{\beta} = \rate{\bar{x}_i}{\beta} = 1$, the weights of the clause nodes $\rate{C_i}{\alpha} = 0.85$ and $\rate{C_i}{\beta} = 1$, for all $1 \leq i \leq k$.  Finally, we set the weights of the nodes in the G gadget $\rate{S}{\alpha} = 0.85$, $\rate{S}{\beta} = \rate{A}{\beta} = \rate{B}{\alpha} = \rate{B}{\beta} = \rate{C}{\alpha} = \rate{C}{\beta} = 1$ and $\rate{A}{\alpha} = 0.7$.  We refer to this \icc\ instance as $I_1$.

\BfPara{Undirected networks with three objects} We set the access costs $\cost{A}{S} = \cost{B}{S} = 3$, $\cost{A}{B} = 3.1$, $\cost{B}{C} = 3.05$, and $\cost{C}{A} = 2$; while symmetry holds. The server node, which stores a fixed copy of three objects
$\object{\alpha}$, $\object{\beta}$, and $\object{\gamma}$, is at access cost $\distsrv = 5$ from all nodes in $\nodeSet$.  We set
the weights of the clause nodes $\rate{x_i}{\alpha} = \rate{\bar{x}_i}{\alpha} = \rate{x_i}{\beta} = \rate{\bar{x}_i}{\beta} = 1$,  the weights of the clause nodes $\rate{C_i}{\alpha} = 0.85$ and $\rate{C_i}{\beta} = 1$, for all $1 \leq i \leq k$.  Finally, we set the weight of the nodes in the G gadget $\rate{S}{\alpha} = 0.85$, $\rate{S}{\beta} = \rate{A}{\alpha} = \rate{B}{\beta} = \rate{C}{\beta} = 1$, $\rate{A}{\gamma} = 2$, $\rate{B}{\gamma} = 0.9837$, and $\rate{C}{\gamma} = 1.6$. All the remaining weights are set to $0$.  We refer to this \icc\ instance as $I_2$.

\begin{lemma}
\label{Lemma:var_node}
A variable node $X_i$ holds object $\object{\alpha}$ (resp.,
$\object{\beta}$) if and only if node $\bar{X}_i$ holds object
$\object{\beta}$ (resp., $\object{\alpha}$).
\end{lemma}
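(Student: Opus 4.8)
The plan is to prove the (clearly equivalent) statement that in every equilibrium $P$ of the instance under consideration, the nodes $X_i$ and $\bar{X_i}$ store \emph{distinct} objects taken from $\{\alpha,\beta\}$, i.e. $\{P_{X_i},P_{\bar{X_i}}\}=\{\{\alpha\},\{\beta\}\}$. Once this is known, both biconditionals in the statement follow at once: each of $X_i,\bar{X_i}$ holds exactly one object, so $X_i$ holds $\alpha$ iff $\bar{X_i}$ does not hold $\alpha$ iff $\bar{X_i}$ holds $\beta$, and symmetrically with $\alpha$ and $\beta$ interchanged. The reasoning is the same for $I_1$ and for $I_2$.

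First I would record two structural facts, valid in both instances. (i) A variable node has positive access rate only for $\alpha$ and $\beta$, and equal rates for the two; hence in any equilibrium it stores one of $\alpha,\beta$. For $I_1$ there is nothing else to store; for $I_2$ one checks that storing $\gamma$ is never a best response, since $\gamma$ contributes nothing to the node's utility, while switching to, say, $\alpha$ drops $X_i$'s access cost to $\alpha$ to $0$ without changing its cost to $\beta$, a strict improvement. (ii) Reading off the costs listed in the construction (and taking shortest-path distances for the pairs whose cost is not listed), we have $\cost{X_i}{\bar{X_i}}=0.5$ and $\cost{X_i}{y}\ge 1$ for every node $y\notin\{X_i,\bar{X_i}\}$, and the same with $\bar{X_i}$ in place of $X_i$; in particular $\bar{X_i}$ is the unique node nearest to $X_i$ other than $X_i$ itself. (Since the server holds every object at finite cost, the nearest holder of any object is always well-defined.)

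The core step is the claim that in any equilibrium $P$ one cannot have $P_{X_i}=P_{\bar{X_i}}$. Suppose otherwise; by (i) and symmetry assume $P_{X_i}=P_{\bar{X_i}}=\{\alpha\}$. Then no node in $\{X_i,\bar{X_i}\}$ holds $\beta$, so by (ii) the nearest $\beta$-holder is at cost at least $1$ from $X_i$, giving $\sumUtility{X_i}(P)=-\bigl(0+\cost{X_i}{\sigma_{X_i}(P,\beta)}\bigr)\le -1$ (the access cost to $\alpha$ is $0$, since $X_i$ stores $\alpha$ itself, and $\gamma$ has rate $0$). Now let $X_i$ deviate to $Q=(\{\beta\},P_{-X_i})$: it now stores $\beta$ itself, and $\bar{X_i}$ still stores $\alpha$, so the access cost of $X_i$ to $\alpha$ in $Q$ is at most $\cost{X_i}{\bar{X_i}}=0.5$; hence $\sumUtility{X_i}(Q)\ge -0.5 > -1 \ge \sumUtility{X_i}(P)$, contradicting that $P$ is an equilibrium. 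The case $P_{X_i}=P_{\bar{X_i}}=\{\beta\}$ is identical after exchanging $\alpha$ and $\beta$. Combining this claim with (i) gives $\{P_{X_i},P_{\bar{X_i}}\}=\{\{\alpha\},\{\beta\}\}$, which is exactly what the lemma asserts.

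The argument is short, so the main work is really the bookkeeping behind fact (ii): one must confirm from the gadget-and-clauses construction, together with the convention used for the costs that are not written explicitly, that $\bar{X_i}$ is $X_i$'s \emph{strictly} nearest neighbour, and this has to be verified for both the directed two-object instance $I_1$ and the undirected three-object instance $I_2$; in $I_2$ one additionally has to rule out a variable node storing the auxiliary object $\gamma$, as in fact (i). Neither point is hard, but since this lemma underpins all three hardness reductions of Theorem~\ref{app:npc-theorem}, both instances should be checked explicitly.
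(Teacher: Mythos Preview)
Your proposal is correct and follows essentially the same approach as the paper: the paper's proof is a one-sentence observation that $\bar{X_i}$ is $X_i$'s nearest node and both nodes have equal weights on $\alpha$ and $\beta$, and you have simply unpacked this into an explicit contradiction argument together with the bookkeeping (your facts (i) and (ii)) that the paper leaves implicit. The only minor point is that the paper does not separate the cases $I_1$ and $I_2$ (nor later $I_3$) at this step, treating the claim as self-evident once ``nearest node'' and ``equal interest'' are noted; your added care about ruling out $\gamma$ in $I_2$ and verifying the strict nearest-neighbour property is appropriate but not a different idea.
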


\begin{proof}
The proof is immediate, since $\bar{X}_i$ (resp., $X_i$) is $X_i$'s (resp., $\bar{X}_i$'s) nearest node, and both $X_i$ and $\bar{X_i}$ are interested equally in $\alpha$ and $\beta$.
\end{proof}

\begin{lemma}
\label{Lemma:clause_node}
Clause node $\node{C_i}$ holds object $\object{\alpha}$ if and only if its variable nodes $\node{L_{i,j}}$, for $j \in \{1,2,3\}$ hold object $\object{\beta}$.
\end{lemma}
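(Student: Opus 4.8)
The plan is to prove both directions by analysing the best response of the clause node $C_i$ in the constructed instance. First I would record the structural facts that drive everything. Since $C_i$ has positive access weight only on $\alpha$ and $\beta$ (weight $0$ on $\gamma$ in $I_2$), in any equilibrium $C_i$ stores one of $\{\alpha,\beta\}$; and by the elementary computation underlying Lemma~\ref{lem:sum_pair_pref} applied to the sum utility, $C_i$'s best response to $P_{-C_i}$ is to store the object $\delta\in\{\alpha,\beta\}$ that maximises $r_{C_i}(\delta)\cdot\cost{C_i}{\sigma_{C_i}(P_{-C_i},\delta)}$ — i.e., the object that would be most expensive to fetch from elsewhere. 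Next I would pin down the possible fetch costs: the only nodes within cost strictly less than $2$ of $C_i$ are its three literal nodes $L_{i,1},L_{i,2},L_{i,3}$, each at cost $1$; node $S$ sits at cost $2$; and the server, which holds every object, sits at cost $\distsrv$; everything else is at cost $\ge 2$. Hence $\cost{C_i}{\sigma_{C_i}(P_{-C_i},\delta)}$ equals $1$ if some literal node of $C_i$ holds $\delta$, and otherwise lies between $2$ and $\distsrv$ (since $S$ may or may not hold $\delta$, but the server always does).

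For the $(\Leftarrow)$ direction, suppose all three literal nodes hold $\beta$. Then the nearest copy of $\beta$ to $C_i$ has cost $1$ while the nearest copy of $\alpha$ has cost at least $2$, so the relevant comparison is $r_{C_i}(\alpha)\cdot(\ge 2)$ versus $r_{C_i}(\beta)\cdot 1$; with the chosen weights (so that $r_{C_i}(\alpha)\cdot 2 > r_{C_i}(\beta)$), the unique maximiser is $\alpha$, hence $C_i$ holds $\alpha$. For the $(\Rightarrow)$ direction I would argue the contrapositive: if some literal node $L_{i,j}$ does not hold $\beta$ then, by the variable‑node structure, it holds $\alpha$, so $\sigma_{C_i}(P_{-C_i},\alpha)$ has cost $1$; I then compare the cost $C_i$ incurs storing $\alpha$, namely $r_{C_i}(\beta)\cdot\cost{C_i}{\sigma_{C_i}(P_{-C_i},\beta)}$, against what it would incur storing $\beta$, namely $r_{C_i}(\alpha)\cdot 1$, and split on the object held by the other two literal nodes of $C_i$ and on whether $S$ currently holds the cheaper of $\alpha,\beta$. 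In each case the numeric weights are set so that the comparison is strict and forces $C_i$ to prefer $\beta$ over $\alpha$, contradicting the assumption; this yields that every literal node of $C_i$ holds $\beta$. (The directed instance $I_1$ is handled by the symmetric version of this argument with its own weights.)

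The main obstacle is the $(\Rightarrow)$ direction: one must rule out \emph{every} placement of $L_{i,1},L_{i,2},L_{i,3}$ in which not all of them hold $\beta$ — in particular the "split" placements where some hold $\alpha$ and some hold $\beta$ — and simultaneously account for whether $S$ supplies a cheaper copy than the server. I would organise this as a short table of cases indexed by the multiset of objects held by $\{L_{i,1},L_{i,2},L_{i,3}\}$ together with $S$'s placement, verifying in each row that the inequality between $r_{C_i}(\alpha)\cdot\cost{C_i}{\sigma_{C_i}(P_{-C_i},\alpha)}$ and $r_{C_i}(\beta)\cdot\cost{C_i}{\sigma_{C_i}(P_{-C_i},\beta)}$ goes the intended way; the fragile point, and the reason the constants in the construction are what they are, is keeping all of these strict.
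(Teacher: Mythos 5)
Your overall strategy is the paper's: treat $C_i$'s choice as a comparison of $\rate{C_i}{\delta}\cdot\cost{C_i}{\sigma_{C_i}(P_{-C_i},\delta)}$ for $\delta\in\{\alpha,\beta\}$ and do a case analysis on what the literal nodes hold. However, there is a concrete error in the structural fact on which you build both directions. You assert that the only nodes within cost strictly less than $2$ of $C_i$ are the three literal nodes $L_{i,1},L_{i,2},L_{i,3}$ at cost $1$, with ``everything else at cost $\ge 2$.'' This is false: the complementary literal nodes $\bar{L_{i,j}}$ sit at cost $\cost{C_i}{\bar{L_{i,j}}} = 1.5$ (via $\cost{C_i}{L_{i,j}}=1$ and $\cost{L_{i,j}}{\bar{L_{i,j}}}=0.5$), and by Lemma~\ref{Lemma:var_node} each $\bar{L_{i,j}}$ always holds exactly the object that $L_{i,j}$ does not. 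These are precisely the nodes the paper's proof uses as the nearest holders of the ``other'' object. In your $(\Leftarrow)$ direction the inequality you propose to verify, $\rate{C_i}{\alpha}\cdot 2 > \rate{C_i}{\beta}\cdot 1$, is not the one that must hold; the decisive comparison is $0.85\cdot 1.5 = 1.275$ versus $1\cdot 1 = 1$, which is considerably tighter and is exactly why the constants are chosen as they are. Your conclusion happens to survive only because the true inequality also goes the right way, but the proof as written establishes the claim from a false premise about the metric.

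The same omission distorts your $(\Rightarrow)$ direction: because you have lost the $\bar{L_{i,j}}$ nodes, you are forced to split cases on whether $S$ (at cost $2$) or the server supplies the nearest copy of $\beta$. In fact $S$ and the server are never relevant here -- whenever some $L_{i,j}$ holds $\alpha$, the nearest $\beta$ is at cost $1$ (another literal node) or $1.5$ (a node $\bar{L_{i,j'}}$), and the comparison is $\rate{C_i}{\alpha}\cdot 1 = 0.85$ against $\rate{C_i}{\beta}\cdot 1 = 1$ or $\rate{C_i}{\beta}\cdot 1.5 = 1.5$, both of which force $C_i$ to store $\beta$. So the correct case table is indexed only by the placements of the six nodes $L_{i,j},\bar{L_{i,j}}$, which Lemma~\ref{Lemma:var_node} collapses to essentially two rows; the table you propose, indexed additionally by $S$'s placement, is organized around the wrong parameter. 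Separately, your parenthetical that instance $I_1$ is handled by ``the symmetric version with its own weights'' should not be waved through: with $\rate{C_i}{\alpha}=1$ and $\rate{C_i}{\beta}=0.7$ the inequalities do not simply mirror those of $I_2$, and the split case needs its own verification.
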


\begin{proof}
First, assume that $\node{L_{i,j}}$, for $j \in \{1,2,3\}$ hold $\object{\beta}$.  These nodes are $C_i$'s nearest nodes holding $\object{\beta}$.  By Lemma~\ref{Lemma:var_node} we know that nodes $\bar{L}_{i,j}$, for $j \in \{1,2,3\}$ hold $\alpha$, and they are $C_i$'s nearest nodes holding $\object{\alpha}$.  Node's $C_i$ cost for holding $\object{\alpha}$ and accessing $\object{\beta}$ from $\node{L_{i,j}}$, for $j\in\{1,2,3\}$, is $\rate{C_i}{\beta} \cost{C_i}{L_{i,j}} = 1$; while the cost for holding $\object{\beta}$ and accessing $\object{\alpha}$ from $\node{\bar{L}_{i,j}}$, for $j \in \{1,2,3\}$, is $\rate{C_i}{\alpha} \cost{C_i}{\bar{L}_{ij}} = 1.275$.  Obviously, node $\node{C_i}$ prefers to replicate $\object{\alpha}$.

Now assume that at least one of the nodes $\bar{L}_{i,j}$, for $j \in \{1,2,3\}$ holds $\object{\alpha}$.  These nodes are $C_i$'s nearest nodes holding $\object{\alpha}$.  Also, by Lemma~\ref{Lemma:var_node}, $C_i$'s nearest nodes holding $\object{\beta}$ are all the remaining nodes from the set $L_{i,j}$, $\bar{L_{i,j}}$, for $j \in \{1,2,3\}$, that don't hold $\alpha$.  Node's $C_i$ cost for holding $\object{\beta}$ and accessing $\object{\alpha}$ from $\node{L_{i,j}}$, for $j \in \{1,2,3\}$, is $\rate{C_i}{\alpha} \cost{C_i}{L_{i,j}} = 0.85$; while the cost for holding $\object{\alpha}$ and accessing $\object{\beta}$ from node $\node{\bar{L}_{i,j}}$ (resp., $\node{L_{i,j}}$), is $\rate{C_i}{\beta} \cost{C_i}{\bar{L}_{i,j}} = 1.5$ (resp., $\rate{C_i}{\beta} \cost{C_i}{L_{i,j}} = 1$).  Obviously, in any case node $\node{C_i}$ prefers to replicate $\object{\beta}$.
\end{proof}

\begin{lemma}
\label{Lemma:S_node}
Node $\node{S}$ holds object $\object{\alpha}$ if and only if all
clause nodes $\node{C_1}, \ldots, \node{C_k}$ hold object
$\object{\beta}$.
\end{lemma}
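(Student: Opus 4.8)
The plan is to reduce the lemma to a single best-response computation at node $S$. In any equilibrium $P$ the placement $P_S$ is a best response to $P_{-S}$, so it suffices to show that in the construction (instance $I_2$) the unique best response of $S$ is $\{\alpha\}$ exactly when every clause node $C_1,\dots,C_k$ holds $\beta$ in $P_{-S}$. Since $S$ has zero weight on the third object $\gamma$, storing $\gamma$ is strictly dominated at $S$ by storing either object $S$ actually accesses, so the only comparison that matters is between $\{\alpha\}$ and $\{\beta\}$. Writing $d_\alpha$ (resp.\ $d_\beta$) for the distance from $S$ to its nearest node other than itself holding $\alpha$ (resp.\ $\beta$), the cost to $S$ of holding $\alpha$ is $r_S(\beta)\,d_\beta = d_\beta$ and the cost of holding $\beta$ is $r_S(\alpha)\,d_\alpha = 0.85\,d_\alpha$; hence $\{\alpha\}$ is the best response iff $d_\beta < 0.85\,d_\alpha$.

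The two ingredients I would then establish are (i) the distances from $S$ in the constructed network, and (ii) the resulting arithmetic. For (i): each clause node is a neighbor of $S$ at cost $2$, each variable node $L_{i,j}$ is reached from $S$ through its clause at cost $2+1=3$, the gadget nodes $A$ and $B$ are at cost $3$ (and $C$ at cost $5$), and the server is at cost $5$; so $d(S,C_j)=2$ for every clause node and $d(S,v)\ge 3$ for every other node $v\ne S$. The server also guarantees that a copy of each of $\alpha$ and $\beta$ is always present, so $d_\alpha$ and $d_\beta$ are finite. For (ii), split into the two directions. If every clause node holds $\beta$, then $d_\beta = 2$ while $d_\alpha \ge 3$ (no clause holds $\alpha$), so $d_\beta = 2 < 2.55 \le 0.85\,d_\alpha$, whence $\{\alpha\}$ is the strict best response and $S$ holds $\alpha$ in $P$. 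Conversely, if some clause node holds $\alpha$, then $d_\alpha = 2$, and since no node lies within distance $2$ of $S$ we have $d_\beta \ge 2$, so $0.85\,d_\alpha = 1.7 < 2 \le d_\beta$, whence $\{\beta\}$ is the strict best response and $S$ does not hold $\alpha$ in $P$. The two implications together are exactly the claimed equivalence.

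The main obstacle is step (i): verifying that in the shortest-path metric induced by the construction the clause nodes sit at distance exactly $2$ from $S$ while every other node is at distance at least $3$. The whole argument rests on this strict separation between $2$ and $3$ together with the chosen weights $r_S(\alpha)=0.85$ and $r_S(\beta)=1$, which are tuned precisely so that $2 < 0.85\cdot 3$ but $0.85\cdot 2 < 2$. I would also be careful about the corner points — that a $\beta$-holder (hence a finite $d_\beta$) always exists because of the server, that only the distance values and not the tie-breaking among nearest nodes matter here, and that holding $\gamma$ is never competitive at $S$ — after which the rest is a routine arithmetic check.
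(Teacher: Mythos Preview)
Your proposal is correct and follows essentially the same route as the paper: a best-response comparison at $S$ between holding $\alpha$ and holding $\beta$, driven by the fact that clause nodes sit at distance $2$ from $S$ while every other node is at distance at least $3$, together with the weights $r_S(\alpha)=0.85$, $r_S(\beta)=1$. The only minor difference is that in the converse direction you argue with the bound $d_\beta \ge 2$ directly, whereas the paper invokes Lemma~\ref{Lemma:clause_node} to locate the nearest $\beta$-copy at a variable node; your version is arguably cleaner, since it does not require knowing what the other clause nodes hold.
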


\begin{proof}
First, assume that $\node{C_1}, \ldots, \node{C_k}$ are holding $\object{\beta}$.  These nodes are $S$'s nearest nodes holding $\object{\beta}$. Also by Lemma~\ref{Lemma:clause_node}, $S$'s nearest node holding $\object{\alpha}$ is at least one of $\node{L_{i,j}}$ nodes, where $i \in [1,k], j \in \{1,2,3\}$.  The cost for $S$ holding $\object{\alpha}$ and accessing $\object{\beta}$ from a node $\node{C_i}, i \in [1,k]$, is $\rate{S}{\beta} \cost{S}{C_i} = 2$; while the cost for holding $\object{\beta}$ and accessing $\object{\alpha}$ from $\node{L_{i,j}}$, where $i \in [1,k], j \in \{1,2,3\}$, is $\rate{S}{\alpha} \cost{S}{L_{i,j}} = 2.55$.  Obviously, node $\node{S}$ prefers to replicate $\object{\alpha}$.
	
Now assume that at least one of $\node{C_1}, \ldots, \node{C_k}$ holds $\object{\alpha}$.  These nodes are $S$'s nearest node holding $\object{\alpha}$.  Also $S$'s nearest node holding $\object{\beta}$, due to Lemma~\ref{Lemma:clause_node} is one of $\node{L_{i,j}}$, where $i \in [1,k], j \in \{1,2,3\}$.  The cost for holding $\object{\beta}$ and accessing $\object{\alpha}$ from a node $\node{C_i}$, is $\rate{S}{\alpha} \cost{S}{C_i} = 1.7$; while the cost for holding $\object{\alpha}$ and accessing $\object{\beta}$ from a node $\node{L_{ij}}$, where $j \in \{1,2,3\}$, is $\rate{S}{\beta} \cost{S}{L_{i,j}} = 3$.  Obviously, in any case node $\node{S}$ prefers to replicate $\object{\beta}$.
\end{proof}

\begin{theorem}
\label{Theorem:nonmetric}
The \icc\ instance $I_1$ has an equilibrium if and only if node $\node{S}$ holds object $\object{\alpha}$.
\end{theorem}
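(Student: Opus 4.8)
The plan is to exploit the gadget $G$ on nodes $S, A, B, C$ (Figure~\ref{Fig:npc_gadget}) as a one-way ``detector''. Node $S$'s best response is governed entirely by the clause nodes, because the only access cost from $S$ that falls below $\distsrv = 10$ is $\cost{S}{C_i} = 2$ to each clause node $C_i$; meanwhile, the sub-gadget on $A, B, C$ admits a stable placement if and only if $S$ stores $\alpha$. So the substantive direction of the theorem is: in every equilibrium of the \icc\ instance $I_1$, node $S$ stores $\alpha$. The reverse direction reduces to an easy gluing argument.

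First I would record the constraints that hold inside $\{A, B, C\}$ in \emph{any} equilibrium $P$. Since $\rate{B}{\alpha} = \rate{B}{\beta} = 1$ and the only access cost from $B$ below $\distsrv$ is $\cost{B}{C} = 1$ (every other source, including the server, costs $\distsrv$), node $B$'s access cost is $1$ if it stores the object $C$ does not store and at least $\distsrv$ otherwise; hence in $P$, node $B$ stores the complement of $C$'s object. The identical argument using $\cost{C}{A} = 1$ shows $C$ stores the complement of $A$'s object, so $B$ and $A$ store the same object. Now suppose, for contradiction, that $S$ stores $\beta$ in $P$, and examine node $A$, whose only sources below cost $\distsrv$ are $S$ and $B$ (both at cost $1$), with $\rate{A}{\alpha} = 0.7$ and $\rate{A}{\beta} = 1$:
\begin{itemize}
\item if $A$ stores $\alpha$, then $B$ stores $\alpha$ and $C$ stores $\beta$, so $A$ reaches $\beta$ only at $S$ and pays $\rate{A}{\beta}\cdot\cost{A}{S} = 1$, whereas switching to $\beta$ lets it reach $\alpha$ at $B$ for $\rate{A}{\alpha}\cdot\cost{A}{B} = 0.7 < 1$;
\item if $A$ stores $\beta$, then $B$ stores $\beta$ and $C$ stores $\alpha$; since neither $S$ nor $B$ stores $\alpha$ and $C$ is at cost $\distsrv$ from $A$, node $A$ reaches $\alpha$ only at cost $\distsrv$ and pays $0.7\cdot 10 = 7$, whereas switching to $\alpha$ lets it reach $\beta$ at $S$ for $1 \cdot 1 = 1 < 7$.
\end{itemize}
In either case $A$ has a profitable deviation, contradicting that $P$ is an equilibrium. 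Hence $S$ stores $\alpha$ in every equilibrium of $I_1$.

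For the converse, suppose $P$ is a placement of the variable, clause, and server nodes that is internally in equilibrium with $S$ storing $\alpha$; by Lemmas~\ref{Lemma:var_node}, \ref{Lemma:clause_node}, and~\ref{Lemma:S_node} this is exactly the situation in which the truth assignment read off from the variable nodes satisfies $\phi$. I would then extend $P$ by letting $A$ store $\beta$, $C$ store $\alpha$, and $B$ store $\beta$ (forced by the complement relations above), and check directly that the result is an equilibrium of $I_1$: node $A$ reaches $\alpha$ at $S$ for $0.7$ versus $1$ by switching, node $C$ reaches $\beta$ at $A$ for $1$ versus $\distsrv$ by switching, node $B$ reaches $\alpha$ at $C$ for $1$ versus $\distsrv$ by switching, and because $A$, $B$, $C$ sit at cost $\distsrv$ from $S$ and from every variable, clause, and server node, their presence changes none of those nodes' best responses (and $S$'s best response already depends only on the clause nodes). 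Thus $I_1$ has an equilibrium if and only if it has one in which $S$ stores $\alpha$, which is the statement of the theorem.

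The part I expect to be most delicate is the cost bookkeeping forced by the directedness of the network: for each of $A$, $B$, $C$, $S$ one must be precise about which other nodes are reachable at cost $1$ (respectively $2$) and which fall back to the server cost $\distsrv = 10$, and then confirm that the tuned weights $0.7$ and $1$ give a \emph{strict} improvement in each unstable case and no improvement in the stable one. A secondary point, needed to make the converse's gluing legitimate, is to verify the one-directional structure --- that no short access edge points from $A$, $B$, or $C$ back toward $S$ or the variable and clause nodes --- so that the gadget placement can be appended to an \emph{arbitrary} equilibrium of the rest of $I_1$.
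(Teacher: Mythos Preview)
Your approach is cleaner than the paper's: rather than check all $2^3$ placements of $A,B,C$ when $S$ holds $\beta$, you first use the best-response conditions for $B$ and $C$ to force $P_B$ to complement $P_C$ and $P_C$ to complement $P_A$, collapsing the analysis to just two cases indexed by $P_A$. The paper simply grinds through all eight cases. Your structural reduction is a genuine simplification and buys a much shorter argument.

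However, you have misread the cost model. In $I_1$ the listed costs $\cost{A}{S}=\cost{A}{B}=\cost{B}{C}=\cost{C}{A}=1$ are arc weights in a directed graph, and the access cost between an ordered pair is the shortest directed path (with the server at $\distsrv=10$ as a fallback). Hence $\cost{B}{A}=2$ via $B\to C\to A$, $\cost{A}{C}=2$, $\cost{C}{B}=2$, $\cost{C}{S}=2$, $\cost{B}{S}=3$, and so on; these are \emph{not} all $\distsrv$. Concretely: in your second contradiction case ($A,B$ hold $\beta$, $C$ holds $\alpha$, $S$ holds $\beta$), node $A$ reaches $\alpha$ at $C$ for $0.7\cdot 2=1.4$, not $0.7\cdot 10=7$; in your converse verification, $C$'s switching cost for $\alpha$ is $\cost{C}{S}=2$ and $B$'s switching cost for $\beta$ is $\cost{B}{A}=2$, not $\distsrv$. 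Fortunately every inequality still points the right way with the corrected numbers ($1.4>1$, $2>1$), so your argument survives, but the stated justifications are wrong as written. Your structural claims for $B$ and $C$ likewise need the (true) observation that $C$ (respectively $A$) is \emph{strictly} nearest to $B$ (respectively $C$), not that it is the only node below $\distsrv$.
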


\begin{proof}
	First, assume that $S$ is holding $\alpha$. By Lemma~\ref{Lemma:S_node} nodes $C_1, \ldots, C_k$ hold object $\beta$, and by Lemma~\ref{Lemma:clause_node} at least one of nodes $\node{L_{i,j}}$, for $j \in \{1,2,3\}$ for each node $C_i, i \in [1,k]$, holds object $\alpha$, and the corresponding $\bar{L}_{i,j}$ is holding object $\beta$. We claim that the placement where $A$ holds $\beta$, $B$ holds $\beta$, and $C$ holds $\alpha$, is a pure Nash equilibrium. We prove this by showing that none of these nodes wants to deviate from their strategy.
	
	Node $A$ does not want to deviate since its cost for holding object $\beta$ and accessing $\alpha$ from $A$'s nearest node $S$, is $\rate{A}{\alpha} \cost{A}{S} = 0.7$; while the cost for holding object $\alpha$ and accessing $\beta$ from $A$'s nearest node $B$, is $\rate{A}{\alpha} \cost{A}{B} = 1$.
	Node $B$ does not want to deviate since its cost for holding object $\beta$ and accessing $\alpha$ from $B$'s nearest node $C$, is $\rate{B}{\alpha} \cost{B}{C} = 1$; while the cost for holding object $\alpha$ and accessing $\beta$ from $B$'s nearest node $A$, is $\rate{B}{\beta} \cost{B}{A} = 2$.
	Node $C$ does not want to deviate since its cost for holding object $\alpha$ and accessing $\beta$ from $C$'s nearest node $A$, is $\rate{C}{\beta} \cost{C}{A} = 1$; while the cost for holding object $\beta$ and accessing $\alpha$ from $C$'s nearest node $S$, is $\rate{C}{\alpha} \cost{C}{S} = 2$.
	Also note that none of $S, C_1, \ldots, C_k, L_{ij}, \bar{L}_{ij}$ for $i \in [1,k], j \in \{1,2,3\}$ is getting affected of the objects been held by the gadget nodes.\\
	
	Now assume that node $\node{S}$ holds object $\object{\beta}$. We are going to prove that for every possible placement over nodes $A$, $B$, and $C$, at least one node wants to deviate from its strategy. Consider the following cases:
	\begin{itemize}
		\item Nodes $A$, $B$, and $C$ hold object $\alpha$: Node $B$ (resp., $C$) wants to deviate, since the cost for holding object $\alpha$ and accessing $\beta$ from $B$'s (resp., $C$'s) nearest node $S$, is $\rate{B}{\beta} \cost{B}{S} = 3$ (resp., $\rate{C}{\beta} \cost{C}{S} = 2$); while the cost for holding object $\beta$ and accessing $\alpha$ from $B$'s nearest node $A$, is $\rate{B}{\beta} \cost{B}{A} = 2$ (resp., $\rate{C}{\beta} \cost{C}{A} = 1$).
		
		\item Two nodes hold object $\alpha$ and the third holds $\beta$:
		In the case where $A$ and $B$ hold $\alpha$, $A$ wants to deviate since the cost while holding $\alpha$ and accessing $\beta$ from $A$'s nearest node $S$ is $\rate{A}{\beta} \cost{A}{S} = 1$; while the cost for holding $\beta$ and accessing $\alpha$ from $A$'s nearest node $B$ is $\rate{A}{\alpha} \cost{A}{B} = 0.7$.
		In the case where $A$ and $C$ hold $\alpha$, then $C$ wants to deviate since the cost while holding $\alpha$ and accessing $\beta$ from $C$'s nearest node $B$ is $\rate{C}{\beta} \cost{C}{B} = 2$; while the cost for holding $\beta$ and accessing $\alpha$ from $C$'s nearest node $A$ is $\rate{C}{\alpha} \cost{C}{A} = 1$.
		In the case where $B$ and $C$ hold $\alpha$, $B$ wants to deviate since the cost while holding $\alpha$ and accessing $\beta$ from $B$'s nearest node $A$ is $\rate{B}{\beta} \cost{B}{A} = 2$; while the cost for holding $\beta$ and accessing $\alpha$ from $B$'s nearest node $C$ is $\rate{B}{\alpha} \cost{B}{C} = 1$.
		
		\item One node holds $\alpha$: If $A$ (resp., $B$, or $C$) holds $\alpha$, $B$ (resp., $C$, $A$) wants to deviate since the cost while holding $\beta$ and accessing $\alpha$ from $B$'s (resp., $C$'s, or $A$'s) nearest node $A$ (resp., $B$, or $C$) is $\rate{B}{\alpha} \cost{B}{A} = 2$ (resp.,  $\rate{C}{\alpha} \cost{C}{B} = 2$, or  $\rate{A}{\alpha} \cost{A}{C} = 1.4$); while the cost for holding $\alpha$ and accessing $\beta$ from $B$'s (resp., $C$'s, or $A$'s) nearest node $C$ (resp., $A$, or $B$), is $\rate{B}{\beta} \cost{B}{C} = 1$ (resp., $\rate{C}{\beta} \cost{C}{A} = 1$, or $\rate{A}{\beta} \cost{A}{B} = 1$).
		
		\item Nodes $A$, $B$, and $C$ hold $\beta$: All of them want to deviate. Node $A$ wants to deviate since the cost while holding $\beta$ and accessing $\alpha$ from $A$'s nearest node $C_i$, for some $i \in [1,k]$, is $\rate{A}{\beta} \cost{A}{C_i} = 3$; while the cost for holding $\beta$ and accessing $\alpha$ from $A$'s nearest node $S$ is $\rate{A}{\alpha} \cost{A}{S} = 0.7$. Similar proof holds for nodes $B$ and $C$.
		
	\end{itemize}
	Obviously the system does not have a pure Nash equilibrium, which completes the proof.
\end{proof}

\begin{theorem}
\label{Theorem:metric}
The \icc\ instance $I_2$ has an equilibrium if and only if node $\node{S}$ holds object $\object{\alpha}$.
\end{theorem}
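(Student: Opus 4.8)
The plan is to follow the proof of Theorem~\ref{Theorem:nonmetric} essentially verbatim, replacing the directed two‑object gadget by its undirected three‑object counterpart and re‑running the same case analysis with the new (shortest‑path) cost function. As in that proof, Lemmas~\ref{Lemma:var_node}, \ref{Lemma:clause_node} and~\ref{Lemma:S_node} reduce the whole question to the four gadget nodes $S,A,B,C$: in any equilibrium $S$ stores $\alpha$ or $\beta$, so it suffices to show that $A,B,C$ can be stabilized precisely when $S$ stores $\alpha$. I would first record three bookkeeping facts. \emph{(i) Domination:} $A$ has positive weight only on $\alpha$ and $\gamma$, and $B,C$ only on $\beta$ and $\gamma$, and all pairwise distances are strictly positive; hence in any equilibrium $A\in\{\{\alpha\},\{\gamma\}\}$ and $B,C\in\{\{\beta\},\{\gamma\}\}$, leaving at most eight gadget configurations to examine. \emph{(ii) Isolation:} $\gamma$ interests no node outside the gadget, and the only copy of $\beta$ the gadget can create (at $B$) is strictly farther from $S$, from every clause node, and from every literal node than the copy each of those nodes already uses (e.g.\ $\cost{B}{C_i}=\cost{B}{S}+\cost{S}{C_i}=5$ versus $\cost{C_j}{L_{j,r}}=1$ and $\cost{X_h}{\bar{X_h}}=0.5$); so the gadget never changes a best response elsewhere. \emph{(iii) External data:} consequently the only thing the gadget sees from the outside is $S$'s choice: if $S$ stores $\alpha$ (so the clause nodes hold $\beta$, by Lemma~\ref{Lemma:S_node}) then $A$ has a copy of $\alpha$ at distance $\cost{A}{S}=3$ while $B$'s nearest external copy of $\beta$ is a clause node at distance $5$; if $S$ stores $\beta$ these roles swap, $A$'s nearest $\alpha$ being at distance $5$ and $B$ reaching $\beta$ at $S$, distance $3$.

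For the ``if'' direction I assume $S$ holds $\alpha$ (so, by Lemmas~\ref{Lemma:var_node}--\ref{Lemma:S_node}, the clause nodes hold $\beta$ and the literal nodes are in the forced configuration) and claim that $P_A=\{\gamma\}$, $P_B=\{\beta\}$, $P_C=\{\gamma\}$ is a Nash equilibrium. The three deviation checks are short: $A$ pays $\rate{A}{\alpha}\cdot\cost{A}{S}=3$ for keeping $\gamma$ against $\rate{A}{\gamma}\cdot\cost{C}{A}=4$ for switching to $\alpha$; $C$ pays $\rate{C}{\beta}\cdot\cost{B}{C}=3.05$ for keeping $\gamma$ against $\rate{C}{\gamma}\cdot\cost{C}{A}=3.2$ for switching to $\beta$; and $B$ pays $\rate{B}{\gamma}\cdot\cost{B}{C}=0.9837\cdot 3.05$ for keeping $\beta$ against $\rate{B}{\beta}\cdot 5=5$ for switching to $\gamma$ (its nearest copy of $\beta$ then being a clause node at distance $5$). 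With Isolation, this is a global equilibrium.

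For the ``only if'' direction I assume $S$ holds $\beta$ and rule out all eight gadget configurations, exactly as in the bulleted case analysis of Theorem~\ref{Theorem:nonmetric}. Now $A$'s nearest $\alpha$ sits at distance $5$, so $A$ strictly prefers to store $\alpha$ (cost $\le\rate{A}{\gamma}\cdot\cost{C}{A}=4$) the moment some neighbor within distance $2.5$ holds $\gamma$, and prefers $\gamma$ otherwise; meanwhile $S$ is now a cheap source of $\beta$ for $B$ at distance $3$, and the weights $0.9837$ and $1.6$ are tuned so that whenever $B$ can fall back on $S$ (distance $3$) or $C$ can fall back on $B$ (distance $3.05$) for $\beta$, that node strictly prefers to release $\beta$ and store $\gamma$. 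Running through the eight cases with the exact completed distances then exhibits, in each, a strictly profitable deviation for one of $A,B,C$; hence $I_2$ has no equilibrium with $S$ holding $\beta$, and combined with the ``if'' direction the biconditional follows.

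I expect the only real difficulty to be arithmetic care: several comparisons are razor‑thin (for instance $0.9837\cdot 3.05=3.000285$ against $3$, and $0.9837\cdot 3.1=3.04947$ against $3.05$), so every inequality must be checked against the exact shortest‑path cost function rather than estimated, and one must be certain of the completed distances on the non‑adjacent pairs — in particular $\cost{C}{C_i}=\cost{C}{A}+\cost{A}{S}+\cost{S}{C_i}=7$ and $\cost{A}{L_{i,j}}=\cost{A}{S}+\cost{S}{C_i}+\cost{C_i}{L_{i,j}}=6$ — since a single wrong distance would invalidate one of the eight cases.
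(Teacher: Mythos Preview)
Your proposal is correct and follows essentially the same route as the paper: reduce via Lemmas~\ref{Lemma:var_node}--\ref{Lemma:S_node} to the four gadget nodes, exhibit the equilibrium $P_A=\{\gamma\}$, $P_B=\{\beta\}$, $P_C=\{\gamma\}$ when $S$ holds $\alpha$, and rule out all eight configurations of $A,B,C$ by direct arithmetic when $S$ holds $\beta$. Your explicit ``Domination/Isolation/External data'' bookkeeping and your flagging of the razor-thin comparisons ($0.9837\cdot 3.05$ vs.\ $3$ and $0.9837\cdot 3.1$ vs.\ $3.05$) are exactly the checks the paper performs case by case. One small slip: in Isolation you say ``the only copy of $\beta$ the gadget can create (at $B$)'', but $C$ may also hold $\beta$; this is harmless since $C$ is even farther from every non-gadget node than $B$ is, so the isolation claim still holds.
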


\begin{proof}
First, assume that $S$ is holding $\alpha$.  By Lemma~\ref{Lemma:S_node} nodes $C_1, \ldots, C_k$ hold object $\beta$, and by Lemma~\ref{Lemma:clause_node} at least one of nodes $\node{L_{i,j}}$, for $j \in \{1,2,3\}$ for each node $C_i, i \in [1,k]$, holds object $\alpha$, and the corresponding $\bar{L}_{i,j}$ is holding object $\beta$.  We claim that the placement where $A$ holds $\gamma$, node $B$ holds $\beta$, and $C$ holds $\gamma$ is a pure Nash equilibrium.  We prove this by showing that none of these nodes wants to deviate from their strategy.  Node $A$ doesn't want to deviate since the cost for holding object $\gamma$ and accessing object $\alpha$ from node $S$ is $\rate{A}{\alpha} \cost{A}{S} = 3$; while the cost for holding $\alpha$ and accessing $\gamma$ from node $C$ increases to $\rate{A}{\gamma} \cost{A}{C} = 4$.  Node $B$ doesn't want to deviate since the cost for holding object $\beta$ and accessing object $\gamma$ from node $C$ is $\rate{B}{\gamma} \cost{B}{C} = 3.000285$; while the cost for holding object $\beta$ and accessing $\gamma$ from the server increases to $\rate{B}{\gamma} \distsrv = 5$.  Node $\node{C}$ doesn't want to deviate since the cost for holding object $\gamma$ and accessing $\beta$ from node $B$ is $\rate{C}{\beta} \cost{C}{B} = 3.05$; while the cost for holding object $\beta$ and accessing $\gamma$ from node $A$ increases to $\rate{C}{\beta} \cost{C}{A} = 3.2$.
	
Now assume that node $\node{S}$ holds object $\object{\beta}$.  We are going to prove that for every possible placement over nodes $A$, $B$, and $C$, at least one node wants to deviate from its strategy.  Consider the following cases:
\begin{itemize}
	\item Node $A$ holds $\alpha$, node $B$ holds $\gamma$, and node $C$ holds $\beta$: Node $A$ wants to deviate since the cost while it is holding object $\alpha$ and accessing object $\gamma$ from node $B$ is $(\rate{A}{\gamma} \cost{A}{B} = 6.2)$; while the cost for holding object $\gamma$ and accessing $\alpha$ from the server decreases to $\rate{A}{\alpha} \distsrv = 5$.
			
	\item Node $A$ holds $\gamma$, node $B$ holds $\gamma$, and node $C$ holds $\beta$: Node $B$ wants to deviate since the cost while it is holding object $\gamma$ and accessing object $\beta$ from node $C$ is $(\rate{B}{\beta} \cost{B}{C} = 3.05)$; while the cost for holding object $\beta$ and accessing $\gamma$ from node $A$ decreases to $\rate{B}{\gamma} \cost{B}{A} = 3.04947$.
			
	\item Node $A$ holds $\gamma$, node $B$ holds $\beta$, and node $C$ holds $\beta$: Node $C$ wants to deviate since the cost while it is holding object $\beta$ and accessing object $\gamma$ from node $A$ is $(\rate{C}{\gamma} \cost{C}{A} = 3.2)$; while the cost for holding object $\gamma$ and accessing $\beta$ from node $B$ decreases to $\rate{C}{\beta} \cost{C}{B} = 3.05$.
			
	\item Node $A$ holds $\gamma$, node $B$ holds $\beta$, and node $C$ holds $\gamma$: Node $A$ wants to deviate since the cost while it is holding object $\gamma$ and accessing object $\alpha$ from the server is $(\rate{A}{\alpha} \distsrv = 5)$; while the cost for holding object $\alpha$ and accessing $\gamma$ from node $C$ decreases to $\rate{A}{\gamma} \cost{A}{C} = 4$.
			
	\item Node $A$ holds $\alpha$, node $B$ holds $\beta$, and node $C$ holds $\gamma$: Node $B$ wants to deviate since the cost while it is holding object $\beta$ and accessing object $\gamma$ from node $C$ is $(\rate{B}{\gamma} \cost{B}{C} = 3.000285)$; while the cost for holding object $\gamma$ and accessing $\beta$ from node $S$ decreases to $\rate{B}{\beta} \cost{B}{S} = 3$.
			
	\item Node $A$ holds $\alpha$, node $B$ holds $\gamma$, and node $C$ holds $\gamma$: Node $C$ wants to deviate since the cost while it is holding object $\gamma$ and accessing object $\beta$ from the server is $(\rate{C}{\beta} \distsrv = 5)$; while the cost for holding object $\beta$ and accessing $\gamma$ from $B$ decreases to $\rate{C}{\gamma} \cost{B}{C} = 4.88$.
	
	\item Node $A$ holds $\alpha$, node $B$ holds $\beta$, and node $C$ holds $\beta$: Node $C$ wants to deviate since the cost while it is holding object $\beta$ and accessing object $\gamma$ from the server is $(\rate{C}{\gamma} \distsrv = 4.9185)$; while the cost for holding object $\gamma$ and accessing $\beta$ from node $B$ decreases to $\rate{B}{\beta} \cost{C}{B} = 3.05$.
			
	\item Node $A$ holds $\gamma$, node $B$ holds $\gamma$, and node $C$ holds $\gamma$: Node $A$ wants to deviate since the cost while it is holding object $\gamma$ and accessing object $\alpha$ from the server is $(\rate{A}{\alpha} \distsrv = 5)$; while the cost for holding object $\alpha$ and accessing $\gamma$ from $C$ decreases to $\rate{A}{\gamma} \cost{A}{C} = 4$.
			
\end{itemize}
The remaining placements where $A$ holds $\alpha$, $B$ holds $\alpha$, and $C$ holds $\alpha$, obviously are not stable since none of the nodes are interested in these objects.  Since there does not exist a stable placement, an equilibrium does not exist.
\end{proof}

\BfPara{Binary object preferences over three objects}.  For the binary object preferences, we introduce two extra nodes $K$ and $L$. We set $\cost{C_i}{K}$, for $i \in [1,k]$, between clause nodes and $K$ to be $1.4$, $\cost{S}{L}$ to be $2.1$, and $\cost{A}{S}$, $\cost{A}{B}$, $\cost{B}{C}$, $\cost{C}{A}$ to be $1$.  The server node, which is at access cost $\distsrv = 10$ from all nodes in $\nodeSet$, stores a fixed copy of three objects $\alpha$, $\beta$, and $\gamma$.  Each node $i$ has a set $S_i$ of objects in which it is equally interested.  For nodes $X_i$, $\bar{X}_i$, for $i \in [1,n]$, we set $S_{X_i} = \{\alpha, \beta\}$ and $S_{\bar{X}_i} = \{\alpha, \beta\}$.  For nodes $C_i$, for $i \in [1,k]$, we set $S_{C_i} = \{\alpha, \gamma\}$.  For node $K$ we set $S_K = \{\gamma\}$; while for node $L$ we set $S_L = \{\beta\}$.  For node $S$ we set $S_S = \{\alpha, \beta\}$.  For nodes $A$, $B$, and $C$ we set $S_A$, $S_B$, and $S_C$ correspondingly to be the set $\{\alpha, \gamma\}$.  As we mentioned in the binary object preference definition for our utility function $\sumUtility{i}$, equally interested means weight $1$ for all objects in $S_i$, and $0$ for the remaining.  We refer to this instance as $I_3$.

Lemma~\ref{Lemma:var_node} holds as it is for the binary object preferences directed case.

\begin{lemma}
\label{Lemma:clause_node_01}
Clause node $\node{C_i}$ holds object $\object{\alpha}$ if and only if its variable nodes $\node{L_{i,j}}$, for $j \in \{1,2,3\}$ hold object $\object{\beta}$.
\end{lemma}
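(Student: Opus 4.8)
The plan is to follow the template of the proof of Lemma~\ref{Lemma:clause_node}, but with the $\{0,1\}$ weights coming from binary object preferences, and with the auxiliary node $K$ playing the role that weighting played in the earlier instances: it fixes the price at which $C_i$ can fetch $\gamma$ and thereby breaks the symmetry between the two objects $C_i$ cares about. First I would record the structural facts that hold in every equilibrium $P$ of $I_3$. (a) Since $S_K=\{\gamma\}$, storing $\gamma$ is strictly dominant for $K$, so $K$ holds $\gamma$ in $P$; reading off the distance specification of $I_3$, $K$ is at distance $\cost{C_i}{K}=1.4$ from $C_i$ and is the closest possible $\gamma$-provider to $C_i$ (the only nodes strictly closer to $C_i$ are the literal nodes $L_{i,j}$ at distance $1$, and those are variable nodes interested only in $\{\alpha,\beta\}$), so $\sigma_{C_i}(P,\gamma)$ is at distance exactly $1.4$. (b) By Lemma~\ref{Lemma:var_node}, each literal node $L_{i,j}$ (which is some $X_h$ or $\bar X_h$) holds $\alpha$ or $\beta$, and it holds $\alpha$ exactly when its complement $\bar L_{i,j}$ holds $\beta$; this makes the two directions below exhaustive. (c) For $C_i$, which has $S_{C_i}=\{\alpha,\gamma\}$ and unit cache, storing $\beta$ or nothing is strictly dominated by storing $\alpha$, so the best-response choice is between $\{\alpha\}$, with cost $\cost{C_i}{\sigma_{C_i}(P,\gamma)}=1.4$ by (a), and $\{\gamma\}$, with cost $\cost{C_i}{\sigma_{C_i}(P,\alpha)}$. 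Hence in $P$, $C_i$ holds $\alpha$ iff its nearest $\alpha$-holder (in $V\setminus\{C_i\}$) is at distance strictly greater than $1.4$; note there is never a tie, since the unique node at distance exactly $1.4$ from $C_i$ is $K$, which holds $\gamma$, not $\alpha$.

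For the ``if'' direction, assume $L_{i,1},L_{i,2},L_{i,3}$ all hold $\beta$. By (b) the nodes $\bar L_{i,1},\bar L_{i,2},\bar L_{i,3}$ hold $\alpha$, and they lie at distance $1.5>1.4$ from $C_i$. Checking the distance specification of $I_3$, in this situation no $\alpha$-holder is within distance $1.4$ of $C_i$: the nodes at distance at most $1.4$ from $C_i$ are precisely the three $L_{i,j}$ (distance $1$, now holding $\beta$) and $K$ (distance $1.4$, holding $\gamma$). So $C_i$'s nearest $\alpha$-holder is at distance strictly greater than $1.4$, and by (c), $C_i$ holds $\alpha$.

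For the ``only if'' direction I would argue the contrapositive. Suppose some $L_{i,j}$ does not hold $\beta$; then by (b) it holds $\alpha$, and it sits at distance $\cost{C_i}{L_{i,j}}=1<1.4$ from $C_i$, so $C_i$'s nearest $\alpha$-holder is at distance at most $1$. By (c) this forces $C_i$ to hold $\gamma$, in particular not $\alpha$. Combining the two directions gives the lemma.

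The main obstacle, exactly as in the companion lemmas, is not the arithmetic (the inequalities $1<1.4<1.5$ do all the work) but the bookkeeping behind the nearest-neighbour claims: one must verify against the full cost specification of $I_3$ that, from $C_i$'s vantage point, (i) $K$ at distance $1.4$ really is the closest $\gamma$-provider available in every equilibrium — in particular the gadget nodes $A,B,C$ and the server, which can hold $\gamma$, are placed farther than $1.4$ from $C_i$ — and (ii) the only nodes within distance $1.4$ of $C_i$ are the literal nodes $L_{i,j}$ and $K$, so that when all $L_{i,j}$ hold $\beta$ the nearest $\alpha$-holder is the complementary literal at distance $1.5$. Since $I_3$ is a directed instance with an arbitrarily prescribed cost function, there are no shortcut paths to rule out, so this amounts to reading off the table; combined with the dominant-strategy argument for $K$ and the strict domination of $\{\beta\}$ and $\emptyset$ for $C_i$, the argument closes in the same way as Lemma~\ref{Lemma:clause_node}.
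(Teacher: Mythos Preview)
Your proposal is correct and follows essentially the same approach as the paper: both argue that $C_i$'s best response reduces to comparing the cost $1.4$ of fetching $\gamma$ from $K$ against the distance to the nearest $\alpha$-holder, which is $1.5$ (via $\bar L_{i,j}$) when all $L_{i,j}$ hold $\beta$ and at most $1$ otherwise. Your version is actually more careful than the paper's in making the preliminary structural facts explicit (that $K$ necessarily holds $\gamma$, that $C_i$ never stores $\beta$, and the nearest-neighbour bookkeeping), but the underlying argument is the same cost comparison $1<1.4<1.5$.
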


\begin{proof}
First, assume that $\node{L_{i,j}}$, for $j \in \{1,2,3\}$ hold $\object{\beta}$. By Lemma~\ref{Lemma:var_node} we know that nodes $\bar{L}_{i,j}$, for $j \in \{1,2,3\}$ hold $\alpha$, and they are $C_i$'s nearest nodes holding $\object{\alpha}$; while $C_i$'s nearest node holding $\gamma$ is node $K$.  Node's $C_i$ cost for holding $\object{\alpha}$ and accessing $\object{\gamma}$ from $K$ is $\cost{C_i}{K} = 1.4$; while the cost for holding $\object{\gamma}$ and accessing $\object{\alpha}$ from $\node{\bar{L}_{i,j}}$, for $j \in \{1,2,3\}$, is $\cost{C_i}{\bar{L}_{ij}} = 1.5$. Obviously, node $\node{C_i}$ prefers to replicate $\object{\alpha}$.
	
Now assume that at least one of the nodes $\bar{L}_{i,j}$, for $j \in \{1,2,3\}$ holds $\alpha$. These nodes are  $C_i$'s nearest nodes holding $\alpha$; while again $C_i$'s nearest node holding $\gamma$ is node $K$.  Node's $C_i$ cost for holding $\gamma$ and accessing $\object{\alpha}$ from $\node{L_{i,j}}$, for $j \in \{1,2,3\}$, is $\cost{C_i}{L_{i,j}} = 1$; while the cost for holding $\alpha$ and accessing $\gamma$ from node $K$ is $\cost{C_i}{K} = 1.4$.  Obviously, node $\node{C_i}$ prefers to replicate $\object{\gamma}$.
\end{proof}

\begin{lemma}
\label{Lemma:S_node_01}
Node $\node{S}$ holds object $\object{\alpha}$ if and only if all clause nodes $\node{C_1}, \ldots, \node{C_k}$ hold object $\object{\gamma}$.
\end{lemma}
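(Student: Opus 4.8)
The plan is to adapt the proof of Lemma~\ref{Lemma:S_node} to the binary, three-object instance $I_3$. The two new features to handle are the auxiliary node $\node{L}$, which anchors $\node{S}$'s cost of accessing $\object{\beta}$, and the fact that $I_3$ is directed, so $\node{S}$ is unable to reach the gadget nodes $\node{A},\node{B},\node{C}$ cheaply. As usual the statement is read inside the equilibrium analysis, so I may assume the variable and clause nodes satisfy Lemmas~\ref{Lemma:var_node} and~\ref{Lemma:clause_node_01}; moreover $\node{L}$ holds $\object{\beta}$ and $\node{K}$ holds $\object{\gamma}$ (forced, since $S_L=\{\beta\}$ and $S_K=\{\gamma\}$ are singletons), and $\node{S}$ holds $\object{\alpha}$ or $\object{\beta}$ in any equilibrium (with $S_S=\{\alpha,\beta\}$, storing $\object{\gamma}$ is strictly dominated).

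The first step is to pin down $\node{S}$'s two candidate costs. Because the gadget's internal edges $\node{A}\to\node{S}$, $\node{A}\to\node{B}$, $\node{B}\to\node{C}$, $\node{C}\to\node{A}$ are directed, $\node{S}$ cannot reach $\node{A},\node{B},\node{C}$ more cheaply than it reaches the server (which holds every object at cost $\distsrv=10$); the only non-server nodes $\node{S}$ reaches at cost below $3$ are the clause nodes (at cost $\cost{S}{C_i}=2$) and $\node{L}$ (at cost $\cost{S}{L}=2.1$). Among these only $\node{L}$ can hold $\object{\beta}$, so the cost to $\node{S}$ of storing $\object{\alpha}$ is always exactly $\rate{S}{\beta}\cdot\cost{S}{L}=2.1$. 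The cost of storing $\object{\beta}$ equals $\rate{S}{\alpha}$ times the cost to $\node{S}$ of its nearest $\object{\alpha}$-holder: this is $2$ if some clause node holds $\object{\alpha}$, and is otherwise exactly $3$, since by Lemma~\ref{Lemma:clause_node_01} each clause node holding $\object{\gamma}$ has a literal node $\node{L_{i,j}}$ holding $\object{\alpha}$, at cost $\cost{S}{C_i}+\cost{C_i}{L_{i,j}}=3$ from $\node{S}$, while no node $\node{S}$ reaches at cost in $(2,3)$ holds $\object{\alpha}$ (the only such node, $\node{L}$, holds $\object{\beta}$).

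With the two numbers $2.1$ and $\{2,3\}$ in hand, both directions follow. If some clause node holds $\object{\alpha}$, storing $\object{\beta}$ costs at most $2<2.1$, so $\node{S}$'s best response is $\object{\beta}$, not $\object{\alpha}$; contrapositively, $\node{S}$ holding $\object{\alpha}$ forces every clause node to hold $\object{\gamma}$ (a clause node holds $\object{\alpha}$ or $\object{\gamma}$). Conversely, if all of $\node{C_1},\ldots,\node{C_k}$ hold $\object{\gamma}$, then storing $\object{\beta}$ costs $3>2.1$ while storing $\object{\alpha}$ costs $2.1$, so $\node{S}$'s best response --- hence its placement in any equilibrium --- is $\object{\alpha}$.

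The main obstacle is the distance bookkeeping behind the first step: one must check that every cheap route leaving $\node{S}$ is accounted for --- the direct edges $\cost{S}{C_i}=2$, $\cost{S}{L}=2.1$ and the further hops $\cost{C_i}{L_{i,j}}=1$, $\cost{C_i}{K}=1.4$ --- and, crucially, that the orientation of the four gadget edges keeps $\node{A},\node{B},\node{C}$ out of $\node{S}$'s reach, which is what makes $\node{S}$'s best response independent of the gadget. One should also confirm that $\cost{S}{L}=2.1$ lies strictly between the clause-node distance $2$ and the literal-node distance $3$ --- this gap is the entire reason $\node{L}$ was introduced, and a wrong value would collapse one of the two directions.
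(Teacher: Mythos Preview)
Your proposal is correct and follows essentially the same approach as the paper: compute the cost to $\node{S}$ of storing $\object{\alpha}$ (namely $\cost{S}{L}=2.1$ to fetch $\object{\beta}$) versus storing $\object{\beta}$ (namely $2$ or $3$ to fetch $\object{\alpha}$, depending on whether some clause node holds $\object{\alpha}$), and compare. You are more careful than the paper in justifying why the gadget nodes $\node{A},\node{B},\node{C}$ do not affect $\node{S}$'s nearest-holder calculations --- the paper simply asserts the nearest holders without this check --- but the core argument is identical.
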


\begin{proof}
	First, assume that $\node{C_1}, \ldots, \node{C_k}$ are holding $\gamma$. By Lemma~\ref{Lemma:clause_node_01}, $S$'s nearest node holding $\alpha$ is at least one of $\node{L_{i,j}}$ nodes, where $i \in [1,k], j \in \{1,2,3\}$; while $S$'s nearest nodes holding $\beta$ is node $L$. The cost for $S$ holding $\alpha$ and accessing $\beta$ from node $L$, is $\cost{S}{L} = 2.1$; while the cost for holding $\beta$ and accessing $\alpha$ from $\node{L_{i,j}}$, where $i \in [1,k], j \in \{1,2,3\}$, is $\cost{S}{L_{i,j}} = 3$. Obviously, node $\node{S}$ prefers to replicate $\alpha$.
	
	Now assume that at least one of $\node{C_1}, \ldots, \node{C_k}$ holds $\object{\alpha}$. These nodes are $S$'s nearest node holding $\alpha$; while again $S$'s nearest node holding $\beta$ is $L$. The cost for holding $\beta$ and accessing $\alpha$ from a node $\node{C_i}$, is $\cost{S}{C_i} = 2$; while the cost for holding $\alpha$ and accessing $\beta$ from a node $L$ is $\cost{S}{L} = 2.1$. Obviously, node $\node{S}$ prefers to replicate $\beta$.
\end{proof}

\begin{theorem}
\label{Theorem:nonmetric_01}
There exists an equilibrium for the \icc\ instance $I_3$ if and only
if node $S$ holds object $\alpha$.
\end{theorem}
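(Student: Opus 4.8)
The plan is to follow the same two-part template used for Theorems~\ref{Theorem:nonmetric} and~\ref{Theorem:metric}. The first observation is that, outside the gadget $\{A,B,C\}$, every node's equilibrium behaviour is essentially forced: by the binary preference sets each variable node holds $\alpha$ or $\beta$, each clause node holds $\alpha$ or $\gamma$, node $S$ holds $\alpha$ or $\beta$, node $K$ holds $\gamma$, node $L$ holds $\beta$, and the server is fixed; and Lemmas~\ref{Lemma:var_node}, \ref{Lemma:clause_node_01}, and~\ref{Lemma:S_node_01} describe exactly when these nodes are in best response. So, as in the earlier proofs, the whole question reduces to analysing the three gadget nodes $A$, $B$, $C$, each of which (since $S_A = S_B = S_C = \{\alpha,\gamma\}$) holds either $\alpha$ or $\gamma$, under the directed shortest-path distances $\cost{A}{B} = \cost{B}{C} = \cost{C}{A} = 1$, $\cost{A}{C} = \cost{B}{A} = \cost{C}{B} = 2$, $\cost{A}{S} = 1$, $\cost{C}{S} = 2$, $\cost{B}{S} = 3$, while every distance from $\{A,B,C\}$ to any node outside the gadget other than $S$ is at least $3$.

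For the direction ``there is an equilibrium with $S$ holding $\alpha$ $\Rightarrow$ an equilibrium exists'' I would argue exactly as in Theorem~\ref{Theorem:metric}: assuming $S$ holds $\alpha$, Lemma~\ref{Lemma:S_node_01} forces every clause node to hold $\gamma$, Lemma~\ref{Lemma:clause_node_01} forces at least one literal node $L_{i,j}$ of each clause to hold $\alpha$ (with $\bar{L}_{i,j}$ holding $\beta$ by Lemma~\ref{Lemma:var_node}), and all of these nodes together with $K$, $L$, and the server are then in best response. I would then take $P_A = \{\gamma\}$, $P_B = \{\gamma\}$, $P_C = \{\alpha\}$, and check directly that no gadget node strictly benefits from deviating: $C$ accesses $\gamma$ at $A$ at cost $1$ while any copy of $\alpha$ it could use costs at least $2$; $B$ accesses $\alpha$ at $C$ at cost $1$ while any copy of $\gamma$ costs at least $2$; and $A$ accesses $\alpha$ at $S$ at cost $1$ and $\gamma$ at $B$ at cost $1$, so it is indifferent and does not strictly prefer to deviate. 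Since the gadget nodes are at distance at least $3$ from the variable, clause, $K$, and $L$ nodes, they do not disturb anyone else's best response, so this global placement is an equilibrium.

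For the converse, ``$S$ holds $\beta$ $\Rightarrow$ no equilibrium'', I would show that in any putative equilibrium with $P_S = \{\beta\}$ the gadget cannot be stable. With $S$ holding $\beta$, node $A$ has no cheap source of $\alpha$ (the edge $\cost{A}{S} = 1$ is now useless), and a short case check over the (at most four) placements of $B$ and $C$ shows that $A$'s unique best response is $\{\alpha\}$ when $B$ holds $\gamma$ and $\{\gamma\}$ when $B$ holds $\alpha$; that is, $A$ must hold the object that $B$ does not. Symmetrically $B$ must hold the object that $C$ does not, and $C$ must hold the object that $A$ does not. These three constraints around the cycle $A \to B \to C \to A$ are jointly unsatisfiable over the two-element set $\{\alpha,\gamma\}$ (they force $A = C$ and $A \neq C$), so no equilibrium exists; hence every equilibrium of $I_3$ has $S$ holding $\alpha$. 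Combining the two directions gives the theorem.

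I expect the main obstacle to be the case analysis underlying the converse: one must check, for each gadget node and each configuration of the other two, that the claimed best response is \emph{strictly} preferred (not merely tied), since it is this strictness that makes the ``holds the complementary object'' conclusion, and hence the parity contradiction, go through --- and, dually, that the single directed asymmetry $\cost{A}{S} = 1$ gives $A$ a genuinely cheap copy of $\alpha$ precisely when $S$ holds $\alpha$, producing the indifference at $A$ that lets $(\gamma,\gamma,\alpha)$ be stable. The remaining numerical choices (such as $\cost{C_i}{K} = 1.4$ and $\cost{S}{L} = 2.1$) need no further attention here, since they are exactly what Lemmas~\ref{Lemma:clause_node_01} and~\ref{Lemma:S_node_01} were set up to absorb.
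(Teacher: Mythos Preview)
Your proposal is correct. The forward direction --- exhibiting the placement $(P_A, P_B, P_C) = (\gamma, \gamma, \alpha)$ when $S$ holds $\alpha$, with $A$ indifferent and $B$, $C$ strictly satisfied --- is exactly the paper's argument. For the converse, the paper proceeds by exhaustive enumeration of the eight placements in $\{\alpha,\gamma\}^3$ (grouped by the number of $\alpha$'s held) and exhibits a deviating gadget node in each case. Your argument is a cleaner distillation of the same computation: you observe that once $S$ no longer supplies $\alpha$ at distance~$1$, each gadget node's unique strict best response is the complement of the object held by its out-neighbour in the directed $3$-cycle $A \to B \to C \to A$, and then invoke the odd-cycle obstruction to a proper $2$-colouring. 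Both routes rest on precisely the strictness checks you anticipate (that the third gadget node and all non-gadget sources lie at distance at least~$2$, so cannot produce a tie), but yours makes transparent \emph{why} the gadget works, whereas the paper's case analysis leaves this structural reason implicit.
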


\begin{proof}
	First, assume that $S$ is holding $\alpha$. By Lemma~\ref{Lemma:S_node_01} nodes $C_1, \ldots, C_k$ hold object $\gamma$, and by Lemma~\ref{Lemma:clause_node_01} at least one of nodes $\node{L_{i,j}}$, for $j \in \{1,2,3\}$ for each node $C_i, i \in [1,k]$, holds object $\alpha$, and the corresponding $\bar{L}_{i,j}$ is holding object $\beta$. We claim that the placement where $A$ holds $\gamma$, $B$ holds $\gamma$, and $C$ holds $\alpha$, is a pure Nash equilibrium. We prove this by showing that none of these nodes wants to deviate from their strategy.
	
	Node $A$ does not want to deviate since its cost for holding object $\gamma$ and accessing $\alpha$ from $A$'s nearest node $S$, is $\cost{A}{S} = 1$; while the cost for holding object $\alpha$ and accessing $\gamma$ from $A$'s nearest node $B$, is still $\cost{A}{B} = 1$.
	Node $B$ does not want to deviate since its cost for holding object $\gamma$ and accessing $\alpha$ from $B$'s nearest node $C$, is $\cost{B}{C} = 1$; while the cost for holding object $\alpha$ and accessing $\gamma$ from $B$'s nearest node $A$, is still $\cost{B}{A} = 1$.
	Node $C$ does not want to deviate since its cost for holding object $\alpha$ and accessing $\gamma$ from $C$'s nearest node $A$, is $\cost{C}{A} = 1$; while the cost for holding object $\gamma$ and accessing $\alpha$ from $C$'s nearest node $S$, is still $\cost{C}{S} = 1$.
	Also note that none of $S, C_1, \ldots, C_k, L_{ij}, \bar{L}_{ij}$ for $i \in [1,k], j \in \{1,2,3\}$ is getting affected of the objects been holded by the gadget nodes.\\
	
	Now assume that node $\node{S}$ holds object $\beta$. We are going to prove that for every possible placement over nodes $A$, $B$, and $C$, at least one node wants to deviate from its strategy. Consider the following cases:
	\begin{itemize}
		\item Nodes $A$, $B$, and $C$ hold object $\alpha$: Node $B$ (resp., $C$) wants to deviate, since the cost for holding object $\alpha$ and accessing $\gamma$ from $B$'s (resp., $C$'s) nearest node $C_i$, for some $i \in [1,k]$ or from node $K$, is $\cost{B}{C_i} = 5$ or $\cost{B}{K} = 6.4$ (resp., $\cost{C}{C_i} = 4$ or $\cost{C}{K} = 5.4$); while the cost for holding object $\gamma$ and accessing $\alpha$ from $B$'s nearest node $A$, is $\cost{B}{A} = 2$ (resp., $\cost{C}{A} = 1$).
		
		\item Two nodes hold object $\alpha$ and the third holds $\gamma$:
		In the case where $A$ and $B$ hold $\alpha$, $A$ wants to deviate since the cost while holding $\alpha$ and accessing $\gamma$ from $A$'s nearest node $C$ is $\cost{A}{C} = 2$; while the cost for holding $\gamma$ and accessing $\alpha$ from $A$'s nearest node $B$ is $\cost{A}{B} = 1$.
		The other cases are symmetric.
		
		\item One node holds $\alpha$: If $A$ holds $\alpha$, $B$ wants to deviate since the cost while holding $\gamma$ and accessing $\alpha$ from $B$'s nearest node $A$ is $\cost{B}{A} = 2$; while the cost for holding $\alpha$ and accessing $\gamma$ from $B$'s nearest node $C$, is $\cost{B}{C} = 1$. The other cases are symmetric.
		
		\item Nodes $A$, $B$, and $C$ hold $\gamma$: All of them want to deviate. Node $A$ wants to deviate since the cost while holding $\gamma$ and accessing $\alpha$ from $A$'s nearest node $C_i$, for some $i \in [1,k]$, is $\cost{A}{C_i} = 3$; while the cost for holding $\alpha$ and accessing $\gamma$ from $A$'s nearest node $B$ is $\cost{A}{B} = 1$. The other cases are symmetric.
		
	\end{itemize}
	Obviously the system does not have a pure Nash equilibrium, which completes the proof.
\end{proof}

We now show that $\phi$ is satisfiable if and only if the above \icc\
games (both undirected and directed cases) (resp., for the binary object
preferences, directed case) has a pure Nash equilibrium. Suppose
that $\phi$ is satisfiable and consider a satisfying assignment for
$\phi$.  If the assignment of a variable $x_i$ is True, then we
replicate object $\alpha$ in cache of variable node $X_i$; otherwise,
we replicate object $\beta$. By Lemma~\ref{Lemma:var_node} we know
that a variable node $X_i$ holds object $\object{\alpha}$ (resp.,
$\object{\beta}$) if and only if node $\bar{X}_i$ holds object $\beta$
(resp., $\alpha$). In this way we keep the consistency between truth
assignment of a variable and its negation.  By
Lemma~\ref{Lemma:clause_node} (resp.,
Lemma~\ref{Lemma:clause_node_01}) we know that a clause node $C_i$,
will replicate object $\beta$ (resp., $\gamma$) if and only if at
least one of its variable nodes, holds object $\alpha$. From above,
any clause node $C_i$ will hold object $\beta$ (resp., $\gamma$) only
if at least one of clause $c_i$ literals is True.  By
Lemma~\ref{Lemma:S_node} (resp., Lemma~\ref{Lemma:S_node_01}), we know
that node $\node{S}$, will replicate object $\object{\alpha}$ if and
only if all clause nodes $C_1, \ldots, C_k$ are holding object
$\object{\beta}$ (resp., $\gamma$). Thus, node $\node{S}$ replicates
object $\alpha$ only if all clauses $c_1, \ldots, c_k$ are True.  By
Theorems~\ref{Theorem:nonmetric} and \ref{Theorem:metric} (resp.,
\ref{Theorem:nonmetric_01}), we know that there exists a pure Nash
Equilibrium if and only if object $\object{\beta}$ is stored to node
$\node{S}$; thus, there exists a pure Nash Equilibrium if and only if
all clauses are True. This gives our proof.

\subsection{Binary preferences over two objects}
\label{sec:dir2obj}
Consider the problem \twobin: does a given \icc\ instance with two objects and binary preferences possess an equilibrium?  We prove that \twobin\ is polynomial-time equivalent to the
notorious \evencycle\ problem~\cite{Younger73}: does a given digraph
contain an even cycle?  Despite intensive efforts, the complexity of
the problem \evencycle\ was open until~\cite{McCuaigRST97, RST99}
provided a tour de force polynomial-time algorithm.  Our result thus also places \twobin\ in
\Poly.

\begin{theorem}
  \evencycle\ is polynomial-time equivalent to \twobin. 
\end{theorem}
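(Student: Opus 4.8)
The plan is to establish both reductions between \evencycle\ and \twobin, the engine of each being a purely combinatorial description of when a two-object binary-preference \icc\ instance admits an equilibrium. First I would normalize the instance: a node interested in exactly one object is \emph{pinned} --- it stores that object in every equilibrium, since doing so gives it zero access cost for its only object of interest; a node interested in neither object is a \emph{wildcard} --- its stored object is unconstrained but it is still usable as a provider; the remaining \emph{active} nodes are interested in both $\alpha$ and $\beta$, hence store exactly one of the two. For an active node $i$, let $T_i$ be the top tier of $\nodePref{i}$, i.e., the set of most-$i$-preferred nodes in $\nodeSet\setminus\{i\}$; note that $T_i$ is fixed, independent of the placement. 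Spelling out the best-response rule under binary consistency --- a node stores whichever object is (weakly) more expensive to fetch, equivalently the object whose nearest \emph{other} copy is (weakly) less $i$-preferred --- one checks that, in a placement $P$ viewed as a $2$-coloring (``which object does each node store''), an active node $i$ is best-responding iff $\{i\}\cup T_i$ is \emph{not} monochromatic. Hence an equilibrium exists iff the $2$-coloring instance whose vertices are the active and wildcard nodes, whose pinned nodes are pre-colored by their objects, and which has one ``not-all-equal'' hyperedge $\{i\}\cup T_i$ per active node $i$, is satisfiable. This already reduces \twobin\ to a rooted set-splitting problem, computable in polynomial time from the succinct preference specification and the binary-consistency oracle.

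Second, I would pin down why this rooted set-splitting problem is exactly as hard as \evencycle. When every active node has a \emph{unique} top element ($|T_i|=1$), the hyperedges are edges forcing $\chi(i)\ne\chi(\mathrm{top}(i))$, so the constraint digraph is functional; its weakly connected pieces are trees with at most one cycle, pinned and wildcard nodes being sinks, and a consistent alternating coloring exists iff every directed cycle has even length --- decidable in linear time, but already exhibiting the parity flavor. The genuinely circular behavior comes from ties ($|T_i|\ge 2$), which create odd obstructions that are not detectable by mere inspection. For \twobin\ $\le_P$ \evencycle, I would work inside each strongly connected component, orient the remaining constraints toward a candidate cycle, and translate the set-splitting condition into a digraph in which an even directed cycle is precisely the ``slack'' that lets a locally non-monochromatic coloring be globally completed. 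Conversely, for \evencycle\ $\le_P$ \twobin, given a digraph $D$ I would build an \icc\ instance --- necessarily \emph{directed}, since equilibria always exist and are poly-time computable in the undirected two-object case --- whose active nodes are essentially the vertices of $D$ with $T_v$ realizing $N^+(v)$, using wildcard nodes as gadgets both to realize arbitrary top tiers via the node-preference preorders and to glue the strong components, so that the instance has an equilibrium iff $D$ has no even cycle; since $\Poly$ is closed under complement, either polarity yields the equivalence. Combining with the known polynomial-time algorithm for \evencycle~\cite{RST99} also places \twobin\ in $\Poly$.

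The main obstacle is the second step: showing that the rooted set-splitting/coloring condition is equivalent to an even-cycle condition rather than, say, \NP-hard (unrestricted set splitting is \NP-hard, so the restriction ``each hyperedge is rooted at, and contains, a distinct active node'' must be doing real work). The leverage I expect to use is exactly this rooting together with strong connectivity: after choosing a witness cycle and orienting all remaining constraints toward it, an alternating coloring propagates consistently precisely when the witness cycle is even, and one has to argue that \emph{every} obstruction to a satisfying coloring can be localized to such an odd cycle. Making this precise while juggling many simultaneous ties, getting the strong-component and polarity bookkeeping right, and designing wildcard gadgets whose induced preorders realize the required top tiers without blowing up the instance size --- that is the delicate part; the normalization, the best-response characterization, and the membership arguments are routine by comparison.
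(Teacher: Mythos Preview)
Your normalization and best-response characterization are correct and match the paper's: once you pass to the digraph with an arc $i \to j$ whenever $j \in T_i$, an active node is in equilibrium iff it has a bichromatic outgoing arc, so the equilibrium question becomes exactly the $2$-coloring question you describe. The paper calls this restricted problem \exacttwodirbin\ and reaches it by the same route (it absorbs pinned and wildcard nodes via small gadgets rather than pre-coloring, but either device works).

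Where you go astray is the polarity of the core equivalence, and this is not merely cosmetic. On a strongly connected digraph the instance has an equilibrium iff the digraph \emph{contains} an even cycle, not iff it lacks one. One direction you never state and it is the easy half: in any stable $2$-coloring, start anywhere and repeatedly follow a bichromatic outgoing arc; you must eventually close a cycle, and that cycle alternates, hence is even. The other direction is precisely the ear-decomposition idea you sketch in your last paragraph: color the even cycle alternately, then process each ear backward from its head to stabilize every new vertex. So your intuition from the functional case (``coloring exists iff every cycle is even'') does not extend; once $|T_i|\ge 2$ the constraints are disjunctive, and a single even cycle anywhere suffices to seed a global stable coloring. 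Your proposed instance with $T_v = N^+(v)$ therefore yields ``equilibrium $\Leftrightarrow$ $D$ has an even cycle'', so the complement dodge is both unnecessary and a symptom that you had not yet pinned down which way the equivalence goes.

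You are also making the bridge harder than it is. There is no auxiliary digraph to construct in the $\twobin\to\evencycle$ direction: the top-tier digraph \emph{is} the \evencycle\ instance. What you do need, and what your outline does not supply, is the reduction from arbitrary digraphs to strongly connected ones on the \exacttwodirbin\ side. The paper's device is a one-line lemma: every strongly connected digraph is $1$-critical (all but one vertex can be stabilized, again by ear-decomposition from an arbitrary cycle). Given that, a general digraph is stable iff every sink SCC of its condensation contains an even cycle, since any non-sink SCC has an outgoing arc which stabilizes its tail and $1$-criticality handles the rest. That lemma is the real glue between components; the wildcard gadgets you propose for this purpose are not needed.
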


We prove the polynomial-time equivalence of \twobin\ and \evencycle\
by a series of reductions.  We first show the equivalence between
\twobin\ and \twodirbin, which is the sub-class of \twobin\ instances
in which the node preferences are specified by an unweighted directed
graph (henceforth {\em digraph}); in a \twodirbin\ instance, we are
given a digraph, and the preference of a node for the other nodes
increases with decreasing distance in the graph.

\begin{lemma}
  \twobin\ is polynomial-time equivalent to \twodirbin.
\end{lemma}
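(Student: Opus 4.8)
The plan is to establish the two reductions separately. One direction is immediate: \twodirbin\ is a sub-class of \twobin, so given a digraph we run a breadth-first search from each vertex, let the resulting distances define the node preference preorders (\(j \nodePref{i} k\) iff \(d(i,j)\le d(i,k)\); since \(d(i,i)=0\), node \(i\) is most \(i\)-preferred, as the framework demands), keep the supplied object-interest sets \(S_i\) and the server node, and treat unreachable pairs as least preferred. This is already a \twobin\ instance with the same equilibria. All the content is in the converse: an arbitrary collection of node preorders on \(V\) must be simulated by shortest-path distances in a single digraph.

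The difficulty is that distances in a digraph satisfy the triangle inequality, so one cannot realize arbitrary preorders exactly on \(V\) alone — a node \(i\) automatically ``inherits'' cheap access to whatever its top-ranked neighbour ranks highly. I would get around this by a narrow-band construction on an enlarged vertex set. Put \(N=n=|V|\). For each node \(i\), let \(\mathrm{rank}_i(j)\in\{1,\dots,n-1\}\) be the rank (under competition ranking) of \(j\)'s \(\nodePrefEq{i}\)-class in \(\nodePref{i}\), and set target distances \(\rho_i(i)=0\) and \(\rho_i(j)=N+\mathrm{rank}_i(j)\) for \(j\neq i\); every inter-node target then lies in the band \([N+1,2N-1]\), and \(\rho_i\) induces exactly \(\nodePref{i}\). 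Now build the digraph \(G\): its vertices are \(V\) together with, for each \(i\), a private directed path \(i=w_{i,0}\to w_{i,1}\to\cdots\) of fresh relay vertices, long enough that \(w_{i,\rho_i(j)-1}\) exists for every \(j\); the only other edges are, for each ordered pair \(i\neq j\), a single edge \(w_{i,\rho_i(j)-1}\to j\). The relay vertices are given empty object-interest sets and zero cache capacity, so they never hold an object and are trivially at equilibrium; we carry over the original server node unchanged.

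The crux is the lemma \(d_G(i,j)=\rho_i(j)\) for all \(i,j\in V\). The structural observation is that each ladder is a simple path enterable only through \(i\) — every cross-edge points into an original vertex, never into a relay vertex — so any walk from \(i\) that reaches \emph{any} original vertex other than \(i\) must first traverse at least \(\min_{j\neq i}\rho_i(j)=N+1\) edges; hence \(d_G(i,i')\ge N+1\) for every original \(i'\neq i\). Since every in-edge of an original vertex \(j\) has the form \(w_{i',\rho_{i'}(j)-1}\to j\) and \(i'\)'s ladder is reached only through \(i'\), every \(i\)--\(j\) walk has length at least \(d_G(i,i')+\rho_{i'}(j)\) for the \(i'\) it uses, so \(d_G(i,j)=\min\bigl(\rho_i(j),\ \min_{i'\neq i,j}(d_G(i,i')+\rho_{i'}(j))\bigr)\); but each term of the second kind is at least \((N+1)+(N+1)=2N+2>2N-1\ge\rho_i(j)\), so the minimum is the direct ladder route and \(d_G(i,j)=\rho_i(j)\). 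Therefore the preorder \(G\) induces on \(V\) is exactly \(\nodePref{i}\) for each \(i\).

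Equilibria then transfer cleanly. Because relay vertices hold nothing, for any original node \(i\) the most \(i\)-preferred holder of \(\alpha\) or \(\beta\) in \(G\) is always an original vertex, and ``nearer in \(G\)'' restricted to \(V\) is exactly \(\nodePref{i}\); under the binary-consistency form of the utility with the same \(S_i\), node \(i\)'s best-response correspondence in the \(G\)-instance therefore coincides with its best-response correspondence in \(\mathcal{I}\). Hence \(P\mapsto P|_V\) is a bijection between equilibria of the \(G\)-instance and those of \(\mathcal{I}\), so one has an equilibrium iff the other does; and \(G\), with \(O(n^2)\) vertices and edges, is built in polynomial time. Together with the easy direction this gives the claimed polynomial-time equivalence. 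I expect the main obstacle to be precisely the distance lemma — choosing the band width and ladder lengths so that no detour through a foreign original or relay vertex ever undercuts the direct route — together with the smaller point that relay vertices are genuine players and so must be prevented, by zero capacity, from ever holding an object in some equilibrium.
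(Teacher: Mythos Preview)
Your construction is correct, but it is far more elaborate than what the problem requires. The paper's proof hinges on a single observation you did not exploit: with only two objects and binary preferences, a node $i$ with $|S_i|=2$ is in equilibrium precisely when the object it does \emph{not} hold sits at one of its most $i$-preferred nodes in $V\setminus\{i\}$ --- nothing beyond the top rank of $\nodePref{i}$ ever matters. Consequently, the paper simply takes the \emph{same} vertex set $V$ and declares $(i,j)$ an arc of $G$ iff $j$ is a most $i$-preferred node in $V\setminus\{i\}$; then the distance-$1$ neighbours of $i$ in $G$ are exactly those top-ranked nodes, and the equilibrium conditions in $I$ and in the digraph instance $I'$ coincide placement-by-placement. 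No auxiliary vertices, no ladders, no distance lemma.

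Your route --- realising the full preorder via narrow-band target distances $\rho_i(j)\in[N{+}1,2N{-}1]$ on private ladders, then arguing that foreign detours cost at least $2N{+}2$ --- is a genuinely different and more general technique: it would let you embed arbitrary node-preference preorders into digraph distances, not just their top ranks. The cost is the extra $O(n^2)$ relay vertices and the need to neutralise them as players; your choice of zero cache capacity works but is a mild modelling assumption you should flag (an alternative is to note that if relays are allowed to cache, the reduction breaks, so zero capacity is essential to your approach). Both proofs are polynomial-time, but the paper's buys simplicity by recognising that the two-object binary case collapses the preference data to a single tier.
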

\begin{proof}
  Given a \twobin\ instance $I$ with node set $\nodeSet$, two objects,
  node preference relations $\{\nodePref{i}: i \in \nodeSet\}$, and
  interest sets $\{S_i: i \in \nodeSet\}$, we construct a \twodirbin\
  instance $I'$ with the same node set, objects, and interest sets,
  but with the node preference relations specified by an unweighted
  digraph $G$.  Our construction will ensure that any equilibrium in
  $I$ is an equilibrium in $I'$ and vice-versa.  For distinct nodes
  $i$ and $j$, we have an edge from $i$ to $j$ if and only if $j$ is a
  most $i$-preferred node in $\nodeSet \setminus \{i\}$.  We now argue
  that $I$ has an equilibrium if and only if $I'$ has an equilibrium.
  A placement for $I$ is an equilibrium if and only if the following
  holds for each node $i$: (a) if $|S_i| = 1$, then $i$ holds the lone
  object in $S_i$; (b) if $|S_i| = 2$, then the object not held by $i$
  is at an $i$-most preferred node.  Similarly, any equilibrium
  placement for $I'$ satisfies the following condition for each $i$:
  (a) if $|S_i| = 1$, then $i$ holds the lone object in $S_i$; (b) if
  $|S_i| = 2$, then the object not held by $i$ is at a neighbor of
  $i$.  By our construction of the instances, equilibria of $I$ are
  equilibria of $I'$ and vice-versa.
\end{proof}

We next define \exacttwodirbin, which is the subclass of \twodirbin\
games where each node is interested in both objects; thus, an
\exacttwodirbin\ instance is completely specified by a digraph $G$.
We say that a node $i$ is {\em stable}\/ in a given placement $P$ if
$P_i$ is a best response to $P_{-i}$.  We say that an \exacttwodirbin\
instance $G$ is {\em stable}\/ (resp., {\em 1-critical}) if there
exists a placement in which all nodes (resp., all nodes except at most
one) are stable.  Since each node has unit cache capacity, each
placement is a 2-coloring of the nodes: think of a node as colored by
the object it holds in its cache.  Given a placement, an arc is said
to be bichromatic if its head and tail have different colors. Note
that for any \exacttwodirbin\ instance, a node is stable in a
placement iff it has a bichromatic outgoing arc.

\begin{lemma}
\label{lem:dir-bin-general}
\twodirbin\ and \exacttwodirbin\ are polynomial-time
equivalent on general digraphs.
\end{lemma}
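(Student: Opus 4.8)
The plan is to prove the two polynomial-time reductions separately. One direction is immediate: \exacttwodirbin\ is, by definition, the restriction of \twodirbin\ to instances in which every node is interested in both objects, so the identity map is a trivial polynomial-time reduction from \exacttwodirbin\ to \twodirbin. All the content lies in the other direction, a reduction from \twodirbin\ to \exacttwodirbin.

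For that direction I would start from a \twodirbin\ instance given by a digraph $G=(V,E)$ together with interest sets, and partition $V$ into $V_\alpha$, $V_\beta$, $V_2$ according to whether a node is interested only in the first object, only in the second, or in both. Recall the characterization used in the previous lemma: a placement is a $2$-coloring of $V$ (color of a node $=$ object it stores), a node of $V_2$ is in equilibrium iff it has a bichromatic out-arc (the other object sits at a nearest node), and a node of $V_\alpha$ (resp.\ $V_\beta$) is in equilibrium iff it stores its unique object of interest. The first key observation is that swapping the two colors everywhere is a symmetry of the whole instance; hence $G$ has an equilibrium if and only if there is a $2$-coloring of $V$ in which $V_\alpha$ is monochromatic, $V_\beta$ is monochromatic, $V_\alpha$ and $V_\beta$ receive different colors whenever both are nonempty, and every node of $V_2$ has a bichromatic out-arc. (If both $V_\alpha$ and $V_\beta$ are empty, $G$ is already an \exacttwodirbin\ instance and we are done.)

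Given that reformulation, I would build the target \exacttwodirbin\ instance, i.e.\ a digraph $G'$, as follows. Add two fresh nodes $R_\alpha$ and $R_\beta$ with arcs $R_\alpha\to R_\beta$ and $R_\beta\to R_\alpha$ and no other out-arcs; since each has exactly one out-arc, any equilibrium of $G'$ must give them different colors. For each $i\in V_\alpha$, delete all out-arcs of $i$ and add the single arc $i\to R_\alpha$; symmetrically, for each $j\in V_\beta$, delete all out-arcs of $j$ and add the single arc $j\to R_\beta$. Leave the out-arcs of every node of $V_2$ unchanged (and give to any $V_2$ node of out-degree zero the two arcs to $R_\alpha$ and $R_\beta$, which are always bichromatic). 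This is computable in polynomial time. Deleting the out-arcs of nodes in $V_\alpha\cup V_\beta$ is harmless because equilibria depend only on each node's nearest nodes: the out-neighbourhoods of the nodes of $V_2$ are untouched, while each node of $V_\alpha\cup V_\beta$, now having a single out-arc, is forced in any equilibrium to take the color $\overline{c(R_\alpha)}$ (resp.\ $\overline{c(R_\beta)}=c(R_\alpha)$), so $V_\alpha$ and $V_\beta$ become monochromatic with distinct colors, and each node of $V_2$ still needs a bichromatic out-arc exactly as in $G$. Conversely, a coloring of $V$ witnessing an equilibrium of $G$ extends to an equilibrium of $G'$ by setting $c(R_\alpha)$ to be the complement of the (monochromatic) color of $V_\alpha$ and $c(R_\beta)$ its complement. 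Thus $G$ has an equilibrium iff $G'$ does, which finishes the reduction.

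The step I expect to be the main obstacle is the ``forcing'' argument: because the two objects are interchangeable, no gadget can pin a node to a fixed object, so the reduction cannot literally reproduce the $V_\alpha$/$V_\beta$ constraints. The fix is to pin the colors of the restricted nodes only \emph{relative} to the two auxiliary reference nodes $R_\alpha,R_\beta$ (which a single out-arc can do, whereas two or more out-arcs cannot force anything), and then to lean on the global color symmetry of both problems to translate ``relative monochromaticity'' back into the existence statement for $G$. A secondary care point is justifying the deletion of out-arcs; this is legitimate precisely because stability is governed by nearest nodes, so only a node's own out-neighbourhood matters, and we never alter the out-neighbourhood of any node whose stability is non-trivial.
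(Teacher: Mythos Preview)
Your argument is correct and the reduction is valid; it differs from the paper's construction in the gadgetry used to force the colors of the restricted nodes. The paper also removes the out-arcs of all nodes not in $V_2$, but then pads $V_\beta$ with dummy nodes so that $|V_\alpha|=|V_\beta|$ and threads a single directed cycle alternately through $V_\alpha$ and $V_\beta$; this cycle forces the two classes to receive opposite colors in any equilibrium. Your construction instead introduces just two reference nodes joined by a $2$-cycle and a single out-arc from each restricted node to its reference, which is more economical and avoids the padding step. Both approaches rely on the same global color-swap symmetry to pass from ``$V_\alpha$ and $V_\beta$ are monochromatic of opposite colors'' to ``$V_\alpha$ stores $\alpha$ and $V_\beta$ stores $\beta$''.

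One small omission: the paper also treats nodes with $S_i=\emptyset$ (its set $V_0$), which your trichotomy $V_\alpha,V_\beta,V_2$ does not cover. Such a node is always stable in the \twodirbin\ instance regardless of what it caches, so the fix is the same as the one you already give for out-degree-zero $V_2$ nodes: add out-arcs from it to both $R_\alpha$ and $R_\beta$, making it automatically stable in $G'$ as well. (The paper pairs each such node with a private partner via a $2$-cycle, but your solution works just as well.)
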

\begin{proof}
  Since \exacttwodirbin\ games are a special subclass of
  \twodirbin\ games, we only need to show that \twodirbin\ games
  reduce to \exacttwodirbin\ games. Given an instance of a
  \twodirbin\ game, we need to handle the nodes that are interested in
  at most one object.  First, note that we can remove the outgoing
  arcs from all such nodes. Let $V_0$ consist of the nodes with no
  objects of interest.  For each node $u$ in $V_0$ we add a new node
  $u_0$ to $V_0$ along with arcs $(u, u_0)$ and $(u_0, u)$.  Let red
  and blue denote the two objects.  Let $V_r$ and $V_b$ denote the set
  of nodes interested in red and blue, respectively.  Without loss of
  generality, let $|V_r| \geq |V_b|$.  Add $|V_r| - |V_b|$ additional
  nodes to the set $V_b$ (so that $|V_r| = |V_b|$) and connect all the
  nodes in $V_r \bigcup V_b$ with a directed cycle that alternates
  strictly between $V_r$ nodes and $V_b$ nodes.  The rest of the
  network is kept the same and all the nodes are set to have interest
  in both objects.  Now, if the original instance is stable then we
  can stabilize the new instance by having each node in $V_r$ (resp.,
  $V_b$) cache the red (resp., blue) object, the nodes in $V_0$ cache
  any object (so long as an original node $u$ and its associated node
  $u_0$ store complementary objects) and the other nodes cache the
  same object as in the placement that made the original instance
  stable. And in the other direction, if the transformed instance is
  stable then in an equilibrium placement, the nodes in $V_r$ must
  each store an object of one color while each node in $V_b$ stores
  the object of the other color.  By renaming the colors, if
  necessary, we get a stable coloring (placement) for the original
  instance.
\end{proof}

For completeness, we next present some standard graph-theoretic
terminology that we will use in our proof.  A digraph is said to be
{\em weakly} connected if it is possible to get from a node to any
other by following arcs without paying heed to the direction of the
arcs.  A digraph is said to be {\em strongly} connected if it is
possible to get from a node to any other by a directed path.  We will
use the following well-known structure result about digraphs: a
general digraph that is weakly connected is a directed acyclic graph
on the unique set of maximal strongly connected (node-disjoint)
components.  We will also use the following strengthening of the
folklore ear-decomposition of strongly connected
digraphs~\cite{Schrijver}:
\begin{lemma}
\label{lem:ear-decomp}
An ear-decomposition can be obtained starting with any cycle of a strongly
connected digraph.
\end{lemma}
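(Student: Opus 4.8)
The plan is to adapt the standard greedy construction of an ear decomposition of a strongly connected digraph so as to make explicit that the initial cycle may be chosen freely. Let $D$ be strongly connected and let $C_0$ be the given directed cycle of $D$. I will build a nested sequence of subdigraphs $C_0 = D_0 \subsetneq D_1 \subsetneq \cdots \subsetneq D_k = D$ in which each $D_{t+1}$ arises from $D_t$ by attaching one \emph{ear}: either (i) a single arc of $D$ both of whose ends already lie in $V(D_t)$, or (ii) a directed path of $D$ whose two ends (possibly equal) lie in $V(D_t)$ and whose internal vertices are new, with the convention that a type-(ii) ear with equal ends is a directed cycle meeting $D_t$ in exactly one vertex. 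A chain of this shape with $D_k = D$ is, by definition, an ear decomposition of $D$ beginning with $C_0$, so producing one proves the lemma.

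The heart of the argument is the extension step: whenever $D_t \neq D$, an ear can be attached. If $D$ contains an arc $(u,v)$ with $u,v \in V(D_t)$ and $(u,v) \notin E(D_t)$, attach it as a type-(i) ear, strictly enlarging $E(D_t)$. Otherwise every arc of $D$ inside $V(D_t)$ is already present, so $D_t \neq D$ forces $V(D_t) \subsetneq V(D)$, and it suffices to attach an ear contributing at least one new vertex. Call a directed walk of $D$ \emph{good} if its first and last vertices lie in $V(D_t)$ while some vertex of it does not; good walks exist since $D$ is strongly connected (pick $w \notin V(D_t)$ and $s \in V(D_t)$ and concatenate a directed $s$-to-$w$ path with a directed $w$-to-$s$ path). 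Let $W = x_0 x_1 \cdots x_m$ be a good walk of minimum length. Minimality forces $x_1, \dots, x_{m-1} \notin V(D_t)$, since otherwise one of the two sub-walks obtained by splitting $W$ at an interior vertex in $V(D_t)$ is still good and strictly shorter; minimality likewise forces $x_1, \dots, x_{m-1}$ to be pairwise distinct, since a repeated vertex would let us excise the intervening loop and obtain a strictly shorter good walk. Consequently $W$ is a simple directed path from $x_0 \in V(D_t)$ to $x_m \in V(D_t)$ with all internal vertices outside $V(D_t)$ if $x_0 \neq x_m$, and a directed cycle meeting $V(D_t)$ only at $x_0$ if $x_0 = x_m$; either way it is a legitimate ear introducing at least the new vertex $x_r \notin V(D_t)$, so attach it to form $D_{t+1}$.

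Finally, each attachment strictly increases $|V(D_t)| + |E(D_t)|$, which is bounded by $|V(D)| + |E(D)|$, so the procedure halts at some $D_k$; by the extension step the case $D_k \neq D$ is impossible, hence $D_k = D$ and the chain constructed is an ear decomposition of $D$ starting with the prescribed cycle $C_0$. I expect the one delicate point to be the minimality argument in the second paragraph --- verifying that a shortest good walk is automatically a simple path or a simple cycle meeting $V(D_t)$ in exactly the right vertices --- while existence of good walks (from strong connectivity) and termination (from finiteness) are routine.
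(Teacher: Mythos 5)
Your proof is correct and takes essentially the same approach as the paper's: both grow a partial ear decomposition from the prescribed cycle and use strong connectivity to extract the next ear, the only difference being that you phrase it as an explicit greedy construction with a shortest-good-walk argument while the paper phrases it as a contradiction with a maximal decomposition. Your version is somewhat more careful (separating the arc-only case and verifying the ear is a simple path or cycle), but the underlying idea is identical.
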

\begin{proof}
  The proof is by contradiction. Suppose not, then consider a subgraph
  with a maximal ear-decomposition obtainable from the cycle in
  question.  If it is not the entire digraph then consider any arc
  leaving the subgraph. Note that the digraph is strongly connected
  and hence such an arc must exist. Further, note that every arc in a
  digraph is contained in a cycle since there is a directed path from
  the head of the arc to the tail. Starting from the arc follow this
  cycle until it intersects the subgraph again, as it must because it
  ends at the tail which lies in the subgraph. This forms an ear that
  contradicts the maximality of the decomposition.
\end{proof}

\begin{lemma}
\label{lem:even-cycle}
\evencycle\ on strongly connected digraphs and \evencycle\ on general
digraphs are polynomial-time equivalent.
\end{lemma}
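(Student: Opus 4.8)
The plan is to prove the two directions of the polynomial-time equivalence separately, with essentially all the content in one of them. One direction is immediate: every strongly connected digraph is in particular a digraph, so \evencycle\ on strongly connected digraphs is a sub-problem of \evencycle\ on general digraphs, and the identity map serves as a trivial (many-one) reduction.

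For the reverse direction, I would reduce \evencycle\ on an arbitrary digraph $G$ to a family of \evencycle\ queries on strongly connected digraphs, one per strongly connected component of $G$. First compute, in linear time, the maximal strongly connected components $C_1,\ldots,C_r$ of $G$ and the induced subdigraphs $G[C_1],\ldots,G[C_r]$, each of which is strongly connected. The key structural observation, which follows directly from the stated fact that the condensation of $G$ onto its strongly connected components is a directed acyclic graph, is that every directed cycle of $G$ has all of its vertices --- and hence, since an arc between two vertices of the same component already lies in that component's induced subdigraph, all of its arcs --- inside a single $C_i$; conversely every cycle of $G[C_i]$ is a cycle of $G$. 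Hence $G$ contains an even cycle if and only if $G[C_i]$ contains an even cycle for some $i$. Running the \evencycle\ procedure for strongly connected digraphs on each $G[C_i]$ and taking the disjunction of the answers therefore decides \evencycle\ on $G$; since $r \le |V(G)|$ and each query is on a subdigraph of $G$, this is a polynomial-time (Turing) reduction. Trivial single-vertex components never contribute a cycle and can be handled or discarded as convenient.

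The only points that need care --- and they are minor --- are (i) the justification that a directed cycle cannot cross component boundaries, which is precisely the acyclicity of the condensation recalled above, and (ii) noting that the relevant notion here is polynomial-time Turing equivalence, the reduction making a linear number of oracle calls; this is exactly the notion used when chaining this lemma with the other equivalences in this section toward \twobin. I do not anticipate any substantive obstacle.
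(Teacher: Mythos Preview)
Your proposal is correct and follows essentially the same argument as the paper: one direction is trivial since strongly connected digraphs are a subclass, and for the other direction you decompose into strongly connected components, observe that every directed cycle lies entirely within a single component, and check each component for an even cycle. The paper's proof is terser but identical in substance; your explicit remarks about linear-time SCC computation, the condensation being a DAG, and the reduction being a polynomial-time Turing reduction are all reasonable elaborations of what the paper leaves implicit.
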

\begin{proof}
  Since strongly connected digraphs are a special subclass of general
  digraphs it suffices to show that \evencycle\ on general digraphs
  can be reduced to {\sc EVEN-CYCLE} on strongly connected digraphs.
  Remember that a general digraph has a unique set of maximal strongly
  connected components that are disjoint and computable in
  polynomial-time. Further any cycle, including even cycles, must lie
  entirely within a strongly connected component. Thus a digraph
  possesses an even cycle iff one of its strongly connected components
  does. Hence it follows that {\sc EVEN-CYCLE} on general digraphs reduces
  to \evencycle\ on strongly connected digraphs.
\end{proof}

\begin{lemma}
\label{lem:strong}
\evencycle\ and \exacttwodirbin\ games are polynomial-time equivalent
on strongly connected digraphs.
\end{lemma}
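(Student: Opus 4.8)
The plan is to prove the stronger fact that for \emph{every} strongly connected digraph $G$, the \exacttwodirbin\ game on $G$ has a pure Nash equilibrium if and only if $G$ contains an even cycle. The identity map $G \mapsto G$ then furnishes a polynomial-time reduction in both directions, so \evencycle\ and \exacttwodirbin\ are polynomial-time equivalent on strongly connected digraphs; the witness transformations described below (turning an even cycle into an equilibrium, and vice versa) are also polynomial. Throughout I use the two facts recorded just before this lemma: a placement in an \exacttwodirbin\ game is simply a $2$-coloring $\chi$ of the vertices (by the object each vertex caches), and a vertex is stable in $\chi$ precisely when it has an outgoing arc whose endpoints get different colors. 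Hence an equilibrium is exactly a $2$-coloring in which every vertex has a bichromatic outgoing arc.

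For the direction ``even cycle $\Rightarrow$ equilibrium'', I would take an even cycle $v_0 \to v_1 \to \cdots \to v_{2k-1} \to v_0$ in $G$ and color its vertices alternately; this is consistent because the cycle has even length, and each cycle vertex then owns the bichromatic outgoing cycle-arc. For a vertex $u$ not on the cycle $C$, let $d(u)$ be the length of a shortest path from $u$ to $C$, which is finite by strong connectivity. Process the vertices outside $C$ in nondecreasing order of $d(u)$: the second vertex $w$ on a shortest $u$-to-$C$ path satisfies $d(w) = d(u) - 1$, so $w$ is already colored (or lies on $C$), and setting $\chi(u)$ opposite to $\chi(w)$ gives $u$ the bichromatic outgoing arc $(u,w)$. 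The resulting coloring is an equilibrium, and the construction is clearly polynomial once an even cycle is supplied.

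For the direction ``equilibrium $\Rightarrow$ even cycle'', given an equilibrium $\chi$ I would pick for each vertex $v$ one bichromatic outgoing arc and let $g(v)$ be its head, obtaining a function $g\colon V \to V$. Iterating $g$ from an arbitrary vertex must eventually revisit a vertex, which exposes a cycle $v_0 \to g(v_0) \to \cdots \to g^{\ell-1}(v_0) \to v_0$ all of whose arcs are bichromatic; here $\ell \ge 2$ since a loop cannot be bichromatic. Along this cycle $\chi$ strictly alternates between the two colors, so consistency at the point where the cycle closes forces $\ell$ to be even. Thus $G$ has an even cycle, extracted from $\chi$ in polynomial time.

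Combining the two implications gives the ``if and only if'', completing the proof. I do not expect a genuine obstacle here; the only step needing care is the extension of the alternating coloring off the even cycle to the rest of $G$ while keeping every vertex stable, and the key observation that makes this work is the BFS-layering toward $C$ by the quantity $d(u)$ --- which is finite exactly because $G$ is strongly connected (and is not needed for the reverse implication). This is precisely why the equivalence is stated for strongly connected, rather than arbitrary, digraphs, and it dovetails with Lemma~\ref{lem:even-cycle} and Lemma~\ref{lem:dir-bin-general}, which reduce the general-digraph versions of both problems to the strongly connected case.
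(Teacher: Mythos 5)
Your proof is correct, and it establishes exactly the same underlying claim as the paper: a strongly connected digraph is stable iff it contains an even cycle, with the identity map as the reduction. Your ``equilibrium $\Rightarrow$ even cycle'' direction is essentially identical to the paper's (follow a chosen bichromatic outgoing arc from each node until the walk closes into a cycle, which is even because its arcs alternate colors). Where you differ is in the ``even cycle $\Rightarrow$ equilibrium'' direction: the paper invokes its Lemma~\ref{lem:ear-decomp} to obtain an ear-decomposition rooted at the even cycle and stabilizes each ear by working backwards from its point of attachment, whereas you layer the remaining vertices by their shortest-path distance \emph{to} the cycle and color each vertex opposite to its successor on such a shortest path. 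The two constructions do the same job --- grow a stable coloring outward from the alternately colored cycle, always pointing each new vertex's bichromatic arc back toward the already-stabilized part --- but your BFS-layering version is slightly more self-contained, since it needs only strong connectivity and not the strengthened ear-decomposition lemma. One caveat: Lemma~\ref{lem:ear-decomp} is still used elsewhere in the paper (in Lemma~\ref{lem:1critical}), so your route simplifies this lemma's proof but does not eliminate that tool from the overall argument.
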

\begin{proof}
  To show the polynomial-time equivalence, we show that a strongly
  connected digraph is stable iff it has an even cycle. One direction
  is easy.  If the digraph is stable then consider the placement in
  which every node is stable.  So every node has a bichromatic
  outgoing arc; by starting at any node and following outgoing
  bichromatic edges we will eventually loop back on ourselves. The
  loop so obtained is the required even cycle; it is even because it
  is composed of bichromatic arcs.  In the other direction, if there
  is an even cycle then we take the ear-decomposition starting with
  that cycle (Lemma~ \ref{lem:ear-decomp}), stabilize that cycle (by
  making each arc bichromatic since it is of even cardinality) and
  then stabilize each node in each ear by working backwards along the
  ear.
\end{proof}

\begin{lemma}
\label{lem:1critical}
Any \exacttwodirbin\ game on a strongly connected digraph is 1-critical. 
\end{lemma}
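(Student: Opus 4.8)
The plan is to exhibit, for any strongly connected digraph $G$, a $2$-coloring of its vertices (i.e.\ a placement) in which at most one vertex fails to be stable, by processing the ear decomposition guaranteed by Lemma~\ref{lem:ear-decomp}. The one fact that makes this work is a monotonicity observation: in an \exacttwodirbin\ game a vertex is stable in a placement exactly when it has a bichromatic outgoing arc, and enlarging the digraph (adding new vertices and arcs) never deletes an outgoing arc of an old vertex, so it can only create, never destroy, bichromatic outgoing arcs.

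First I would pick an arbitrary cycle $C_0$ of $G$ and, by Lemma~\ref{lem:ear-decomp}, fix an ear decomposition $G = C_0 \cup P_1 \cup \cdots \cup P_k$ beginning with $C_0$. I color $C_0$ by walking around it and alternating the two colors: if $|C_0|$ is even every arc of $C_0$ is bichromatic; if $|C_0|$ is odd then exactly one arc $(u,w)$ of $C_0$ is monochromatic, and $u$ is then the only vertex of $C_0$ that might be unstable, since every other vertex of $C_0$ has its cycle-arc bichromatic. This vertex $u$ (if it exists) will be the single allowed exception.

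Next I extend the coloring one ear at a time, maintaining the invariant that in the current subgraph $G_j = C_0 \cup P_1 \cup \cdots \cup P_j$ every vertex is stable except possibly $u$. When $P_{j+1}$ is an ear $v_0, v_1, \ldots, v_\ell$ with $v_0, v_\ell$ already in $G_j$ (and already colored) and $v_1, \ldots, v_{\ell-1}$ new, I color the new internal vertices from the far endpoint inward: set $c(v_{\ell-1}) := \overline{c(v_\ell)}$, then $c(v_{\ell-2}) := \overline{c(v_{\ell-1})}$, and so on down to $c(v_1)$. By construction each new vertex $v_i$ with $1 \le i \le \ell-1$ has the bichromatic outgoing ear-arc $(v_i, v_{i+1})$ and is therefore stable. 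No vertex of $G_j$ is recolored and none loses an outgoing arc, so by the monotonicity observation every vertex of $G_j$ that was stable remains stable; hence the invariant passes to $G_{j+1}$. (The arc $(v_0, v_1)$ may be monochromatic, but $v_0 \in G_j$ was already either stable via an untouched arc or equal to $u$, so this costs nothing.) The same reasoning handles the degenerate ears: a closed ear is the case $v_0 = v_\ell$, and an arc joining two vertices already in $G_j$ needs no new colors and, again by monotonicity, does no harm. After all $k$ ears are processed we have colored $G_k = G$ with every vertex stable except possibly the single vertex $u$, which is precisely the statement that $G$ is $1$-critical.

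The step requiring the most care in the full write-up is exactly the ear-insertion argument — verifying that coloring the new internal vertices of an ear backwards makes each of them stable while disturbing no previously stable vertex, and checking the closed-ear and chord cases — but all of it reduces to the monotonicity observation above. As a sanity check, note that when $C_0$ can be taken even this construction stabilizes \emph{every} vertex, which re-derives the ``even cycle $\Rightarrow$ stable'' direction of Lemma~\ref{lem:strong}; the real content of the present lemma is the complementary guarantee that even when all cycles of $G$ are odd, one exceptional vertex always suffices.
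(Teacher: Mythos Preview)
Your proposal is correct and follows essentially the same approach as the paper: take an ear decomposition starting from an arbitrary cycle, two-color the cycle alternately (leaving at most one unstable vertex), and then stabilize each successive ear by coloring its internal vertices backwards from the point of attachment. Your write-up is more careful than the paper's---you make explicit the monotonicity observation and treat the degenerate ear cases---but the underlying argument is the same.
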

\begin{proof}
  Consider an ear-decomposition of the strongly connected digraph
  starting with a cycle.  Observe that all but at most one node of the
  cycle can be stabilized by arbitrarily assigning one color to a
  node, and then assigning alternate colors to the nodes as we
  progress along the cycle.  Every node in the cycle, other than
  possibly the initial node, is stable.  The rest of the digraph can
  be stabilized ear by ear, stabilizing each ear by working backwards
  from the point of attachment.  Hence, all but one node of the digraph
  can be stabilized.
\end{proof}

\begin{lemma}
\label{lem:dir-bin-exact}
\exacttwodirbin\ on general digraphs is polynomial-time equivalent to
\exacttwodirbin\ on strongly connected digraphs.
\end{lemma}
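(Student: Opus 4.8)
The plan is to establish the characterization that a general digraph $G$ is stable if and only if every \emph{sink} strongly connected component (SCC) of $G$ --- an SCC from which no arc leaves in the condensation --- is stable when regarded as a stand-alone \exacttwodirbin\ instance. Given this, the equivalence is immediate in both directions. Since strongly connected digraphs form a special case of general digraphs, \exacttwodirbin\ on strongly connected digraphs trivially reduces to \exacttwodirbin\ on general ones. Conversely, given $G$ I would compute its SCCs and condensation in polynomial time, identify the sink SCCs $C_1,\dots,C_p$, and declare ``$G$ is stable'' exactly when all of the queries ``is $C_t$ stable?'' --- each an \exacttwodirbin\ instance on a strongly connected digraph --- return yes; this is a polynomial-time (Turing) reduction, of the same flavor as Lemma~\ref{lem:even-cycle}, and combined with Lemma~\ref{lem:strong} it also connects \exacttwodirbin\ on general digraphs to \evencycle.

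For the easy direction of the characterization, suppose $P$ makes every node of $G$ stable and let $C$ be a sink SCC. Recall a node is stable iff it has a bichromatic outgoing arc. Since $C$ is a sink, every outgoing arc of a node of $C$ has its head in $C$, so the restriction of $P$ to $C$ already gives every node of $C$ a bichromatic outgoing arc inside $C$; hence $C$ is stable on its own. (In particular a trivial sink SCC --- a single node with no outgoing arc --- is never stable, which correctly forces $G$ to be unstable and keeps the characterization consistent for that degenerate case.)

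For the converse, assume every sink SCC of $G$ is stable on its own and process the SCCs in reverse topological order of the condensation (sinks first), incrementally building a 2-coloring of all nodes of $G$. On reaching a sink SCC, color it by a placement witnessing its stability, which exists by hypothesis (by Lemma~\ref{lem:strong} such an SCC contains an even cycle). On reaching a non-sink SCC $C$, pick a node $u\in C$ with an arc $u\to w$ into some downstream SCC; such a $w$ exists because $C$ is not a sink, and $w$ is already colored. If $C$ is nontrivial it is strongly connected with at least two nodes, so $u$ lies on a cycle of $C$; take an ear-decomposition of $C$ starting from that cycle (Lemma~\ref{lem:ear-decomp}) and apply the stabilization of Lemma~\ref{lem:1critical}, which stabilizes every node of $C$ except possibly the initial node $u$ using only arcs internal to $C$ and leaves the color of $u$ free. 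Choose $u$'s color opposite to $w$'s, so $u\to w$ is bichromatic and $u$ too is stable. (If $C$ is trivial there is nothing internal to stabilize, and we just color $u$ opposite $w$.) Since coloring $C$ can affect only the stability of nodes with an arc \emph{into} $C$, and no downstream SCC has such an arc (that would merge it with $C$), all previously stabilized nodes remain stable; after all SCCs are processed every node of $G$ is stable.

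The main obstacle is exactly the treatment in the converse of a non-sink SCC that contains no even cycle and is therefore not stable in isolation: the resolution is that 1-criticality (Lemma~\ref{lem:1critical}) lets us confine the one unavoidable defect to a node of our choosing --- using the freedom in the starting cycle granted by Lemma~\ref{lem:ear-decomp} --- and then discharge that defect along an arc to an already-colored downstream component that we never need to revisit. The remaining points --- that SCCs, the condensation, and a reverse topological order are polynomial-time computable, and that recoloring within $C$ cannot disturb downstream nodes --- are routine and I would only note them briefly.
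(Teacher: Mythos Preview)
Your proposal is correct and follows essentially the same route as the paper: both reduce to showing that a general digraph is stable iff every sink (what the paper calls ``minimal'') SCC is stable, handle non-sink SCCs via $1$-criticality by placing the one possible defect at the tail of an outgoing inter-SCC arc, and process components in (reverse) topological order. Your write-up is in fact a bit more careful than the paper's --- you make explicit that the defect node can be chosen to be any prescribed $u$ with its color free (which is implicit in the proof of Lemma~\ref{lem:1critical} but not in its statement), and you handle trivial SCCs and the non-interference with downstream components explicitly.
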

\begin{proof}
  Since strongly connected digraphs are a subclass of general digraphs
  we need only show that the problem \exacttwodirbin\ on general
  digraphs reduces to \exacttwodirbin\ on strongly connected digraphs.
  A general digraph is stable iff all of its weakly connected
  components are. A weakly connected component is a directed acyclic
  graph (dag) on the strongly connected components.  It is clear that
  a weakly connected component cannot be stabilized if any one of the
  strongly connected components that is a minimal element of the
  directed acyclic graph cannot be stabilized.  Interestingly, the
  converse is also true. If all of the strongly connected components
  that are minimal elements of the dag can be stabilized then the
  entire weakly connected component can be stabilized because each of
  the other strongly connected components has at least one outgoing
  arc which is used to stabilize its tail while the rest of the
  strongly connected component can be stabilized because strongly
  connected components are 1-critical by Lemma~\ref{lem:1critical}.
  We can determine such a stable placement by processing the strongly
  connected components in topologically sorted order (according to the
  dag) starting from the minimal elements.  Thus a digraph is stable
  iff every strongly connected component that is a minimal element is
  stable. Hence, \exacttwodirbin\ on general digraphs is reducible in
  polynomial-time to strongly connected digraphs.
\end{proof}

\section{Fractional replication games}
\label{sec:frac}
We introduce a new class of capacitated replication games where nodes
can store fractions of objects, as opposed to whole objects, and
satisfy an object access request by retrieving enough fractions that
make up the whole object.  Rather than associate different identities
with different fractions of a given object, we view each portion of an
object as being fungible, thus allowing any set of fractions of an
object, adding up to at least one, to constitute the whole object.
Such fractional replication scenarios naturally arise when objects are
encoded and distributed within a network to permit both efficient and
reliable access.

Several implementations of fractional replication, in fact, already
exist.  For instance, fountain codes~\cite{Byers98,Shok06} and the
information dispersal algorithm~\cite{Rabin89} present two ways of
encoding an object as a number of smaller pieces -- of size, say $1/m$
fraction of the full object size, where $m$ is an integer -- such that
the full object may be reconstructed from any $m$ of the pieces.  A
natural formalization is to view each object as a polynomial of high
degree, and consider each piece of the object as the evaluation of the
polynomial on a random point in a suitable large field.  Then,
accessing an object is equivalent (with very high probability) to
accessing a sufficient number of pieces of the object.

We now present fractional capacitated selfish replication (\fcc)
games, which are an adaptation of the game-theoretic framework
developed in Section~\ref{sec:model} to fractional replication.  We
have a set $\nodeSet$ of nodes sharing a set $\objSet$ of objects.  In
an \fcc\ game, the strategies are {\em fractional placements}; a
fractional placement $\widetilde{P}$ is a $|V|$-tuple
$\{\widetilde{P}_i: i \in V\}$ where $\widetilde{P}_i: \objSet
\rightarrow \Re$ under the constraint that sum of
$\widetilde{P}_i(\alpha)$, over all $\alpha$ in $\objSet$, is at most
the cache size of $i$.

We begin by presenting \fcc\ games in the special case
of sum utilities, where the generalization from the integral to the
fractional setting is most natural.
For sum utilities, recall that we are given a cost function
$\costNoArgs$ and node-object weights $r_i(\alpha)$, $i \in \nodeSet$,
$\alpha \in \objSet$.  Given a fractional global placement
$\widetilde{P}$, we define the cost incurred by $i$ for accessing
object $\alpha$ as the minimum value of $x_j \cost{i}{j}$ under the
constraints that $\sum_{j} x_j = 1$ and $x_j \le
\widetilde{P}_j(\alpha)$ for all $j$.  Then, the total cost incurred
by $i$ is the sum, over all objects $\alpha$, of $r_i(\alpha)$ times
the cost incurred by $i$ for accessing $\alpha$.  For a given
fractional global placement $\widetilde{P}$, the utility of $i$ is the
negative of the total cost incurred by $i$ under $\widetilde{P}$.

We now consider \fcc\ games under the more general setting of utility
preference relations.  As before, each node $i$ has a node preference
relation $\nodePref{i}$ and a preference relation $\placePref{i}$
among global (integral) placements.  Recall that the node and
placement preference relations of each node $i$ induce a preorder
$\pairPref{i}$ among the elements of $\objSet \times (\nodeSet
\setminus \{i\})$ (see Section~\ref{sec:model}).  For \fcc\ games, we
require the existence of a {\em total}\/ preorder $\pairPref{i}$, for
all $i$.  We now specify the best response function for each player
for a given fractional global placement $\fP$.  For each node $i$ and
object $\alpha$, we determine the assignment $\opt_{i,\fP,\alpha}:
\nodeSet \setminus \{i\} \rightarrow \Re$ that is lexicographically
minimal under the node preference relation $\nodePref{i}$ subject to
the condition that $\opt_{i,\fP,\alpha} \le \fP_k(\alpha)$ for each
$k$ and $\sum_k \opt_{i,\fP,\alpha}(k) = 1$.  We next compute
$\best_{i,\fP} : \objSet \times (\nodeSet \setminus \{i\}) \rightarrow
\Re$ to be the lexicographically maximal assignment under
$\pairPref{i}$ subject to the condition that $\best_{i,\fP} (\alpha,k)
\le \opt_{i,\fP, \alpha}(k)$ for all $k$ and $\sum_{\alpha, k}
\best_{i,\fP} (\alpha,k)$ is at most the size of $i$'s cache.  The
best response of a player $i$ is then to store $\sum_k \best_{i,\fP}
(\alpha, k)$ of $\alpha$ in their cache.  This completes the
definition of \fcc\ games.

Using standard fixed-point machinery, we show that every \fcc\ game
has an equilibrium.  We also show that finding equilibria in
\fcc\ games is \ppad-complete.

\begin{theorem}
\label{thm:fcc}
Every \fcc\ instance has a pure Nash equilibrium.  Finding an
equilibrium in an \fcc\ game is \ppad-complete.
\end{theorem}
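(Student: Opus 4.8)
The plan has three parts: existence, membership in \ppad, and \ppad-hardness.

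\textbf{Existence.} The strategy set of each node $i$ is the polytope $\Delta_i = \{x \in \Re^{\objSet}_{\ge 0} : \sum_{\alpha} x(\alpha) \le c_i\}$, compact and convex, and the joint space $\Delta = \prod_i \Delta_i$ is therefore compact and convex. The first step is to replace the general preorder setting by a numerical surrogate: since $\nodePref{i}$ is a total preorder on the finite set $\nodeSet$ and (by the \fcc\ assumption) $\pairPref{i}$ is a total preorder on the finite set $\objSet \times (\nodeSet \setminus \{i\})$, each admits a rational numerical representation $\delta_i(\cdot)$ (with $\delta_i(i)=0$) and $v_i(\cdot,\cdot)$. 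Using these, define node $i$'s generalized access cost, given $\fP$, as $\sum_{\alpha}\sum_{k \neq i} v_i(\alpha,k)\, q_{i,\alpha,k}(\fP)$, where $q_{i,\alpha,k}(\fP)$ is the amount of $\alpha$ that $i$ retrieves from $k$ when $i$ self-serves $\min(\fP_i(\alpha),1)$ of $\alpha$ and satisfies the residual demand from the other nodes as in the definition of $\opt$. Because pair preference item~1 forces $v_i(\alpha,\cdot)$ to be monotone in $\delta_i(\cdot)$, the marginal saving from increasing $\fP_i(\alpha)$ is nonincreasing, so this cost is a continuous function of $\fP$ that is \emph{convex} in $\fP_i$; hence its negative is a continuous surrogate utility $\Psi_i$ that is concave in $i$'s own strategy, and one checks that the $\Psi_i$-best-response correspondence of each node coincides as a set with the \fcc\ best response (both are the greedy/fractional-knapsack solution set over $(\alpha,k)$ ordered by $\pairPref{i}$ with capacities given by the residual-demand amounts). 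By Berge's maximum theorem these correspondences are nonempty-, compact-, and convex-valued and upper hemicontinuous; Kakutani's fixed-point theorem applied to the product correspondence (equivalently, the Debreu--Glicksberg--Fan theorem for concave games) yields a pure Nash equilibrium of the surrogate, which is an equilibrium of the original \fcc\ instance. In the pure sum-utility case this is even more direct: the cost to access $\alpha$ is the value of the LP $\min\{\sum_j x_j \cost{i}{j} : \sum_j x_j = 1,\ 0 \le x_j \le \fP_j(\alpha)\}$ (including $j=i$), which is convex in the right-hand side, hence $\sumUtility{i}$ is continuous and concave in $\fP_i$.

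\textbf{Membership in \ppad.} Following the standard route from Brouwer/Kakutani fixed points to \ppad~\cite{Papadimitriou94,DaskalakisGoldbergPapadimitriou06}, I would exhibit a single displacement map $F:\Delta \to \Delta$, computable in polynomial time and Lipschitz, whose fixed points are exactly the equilibria: take the Nash map for the concave surrogate above, i.e.\ let $F(\fP)_i$ be the unique minimizer over $y \in \Delta_i$ of $\mathrm{cost}_i(y,\fP_{-i}) + \|y-\fP_i\|^2$. This is a strictly convex quadratic program with rational data obtained in polynomial time from the node preferences and the $\pairPref{i}$-oracle, so $F$ is single-valued, poly-time computable, and Lipschitz, and $\fP$ is an equilibrium iff $F(\fP)=\fP$. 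Since $\Delta$ is a polytope, computing a fixed point of $F$ — hence an equilibrium, exactly or to any desired inverse-polynomial accuracy depending on the precise formalization — reduces to \textsc{End-of-Line}, placing the search problem in \ppad.

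\textbf{\ppad-hardness.} I would reduce from a \ppad-complete problem of arithmetic flavor — the generalized-circuit problem in the style of Chen--Deng--Teng~\cite{ChenDengTengJACM} — realizing each gate as a small \fcc\ gadget with the sum-of-distances utility. The enabling observation is that with unit caches and two ``signal'' objects a node's fractional placement is a single real variable $\fP_i(\alpha) \in [0,1]$, and, given the neighbors' variables, a node's best-response self-storage level is a piecewise-linear function of those variables whose breakpoints and slopes are tuned by the access costs $\cost{i}{j}$ and the rates $r_i(\cdot)$; this is exactly the expressive power needed to build gadgets for copy, constant, scaled sum, truncated subtraction, and comparison, using auxiliary ``fictional-player''-style anchor nodes (at carefully chosen access costs, as in Section~\ref{sec:2-obj-metric}) to pin reference values. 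Wiring the gadgets according to the circuit, an equilibrium of the resulting \fcc\ instance must encode a solution of the generalized circuit and conversely; as usual, the delicate point is controlling error propagation so that approximate equilibria correspond to approximate circuit solutions. I expect this hardness reduction — specifically the design of arithmetic gadgets out of access-cost geometry and the error analysis of their composition — to be the main obstacle; the existence proof is routine fixed-point theory once the surrogate reduction and the convexity coming from pair-preference monotonicity are in place, and the \ppad\ upper bound is immediate from the poly-time Nash map.
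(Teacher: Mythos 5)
Your existence argument is essentially the paper's: compact convex strategy polytopes, payoffs given by the value of a parametric LP (hence continuous and quasi-concave in one's own placement), and a Debreu--Glicksberg--Fan/Kakutani fixed point. Your numerical-surrogate step for general pair preorders is if anything more careful than the paper, which verifies the hypotheses only for the sum utility. For \ppad\ membership you take a genuinely different route: the paper reduces \fcc\ to \fspp\ (already known to be in \ppad\ from~\cite{KintaliPRST09_focs}) via an explicit gadget graph of $\hold$/$\serve$ vertices, whereas you go through a Lipschitz Nash map and \textsc{End-of-Line}. Your route is standard but you should be aware that it natively certifies only \emph{approximate} fixed points; you flag this, but turning an inverse-polynomial-accuracy fixed point into an exact equilibrium needs an additional rounding or rationality argument that you do not supply, and the paper's reduction to \fspp\ is precisely how it avoids that issue.

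The genuine gap is the hardness direction. You propose to simulate generalized circuits with arithmetic gadgets (copy, constant, scaled sum, truncated subtraction, comparison) built from access-cost geometry, but you construct none of them; you only assert that a node's best-response self-storage level is a tunable piecewise-linear function of its neighbors' placements, and you yourself identify the gadget design and error propagation as ``the main obstacle.'' That obstacle is the entire content of the hardness proof, so what you have is a plan, not an argument --- and it is not clear the plan goes through: with sum utilities the best response is a greedy fractional-knapsack allocation, and realizing an arbitrary scaled sum or a comparison gate (the notoriously brittle gate in Chen--Deng--Teng-style reductions) out of such responses, while keeping the cost function a metric, is exactly the kind of claim that needs explicit constants and verification. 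The paper sidesteps circuit simulation entirely: it reduces from \emph{preference games}~\cite{KintaliPRST09_focs}, whose equilibrium structure (each player fractionally fills a unit of demand from options ranked by a preference order, capped by others' choices) is already isomorphic to the \fcc\ best response. The reduction is then a direct encoding --- $n^2+3n$ nodes, $2n$ objects, a small metric with distances in $\{1,\ldots,6\}$, and weights chosen so that the $V_2,V_3,V_4$ nodes are forced (Lemma~\ref{lem:frac.best}) and the $V_1$ nodes' best responses reproduce the preference-game dynamics (Lemma~\ref{lem:frac.V_1.best}) --- rather than a gate-by-gate simulation. If you want to complete your proof along your own lines you must actually exhibit and verify the gadgets; otherwise the cleaner path is the paper's choice of a \ppad-complete source problem that already looks like \fcc.
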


We prove Theorem~\ref{thm:fcc} by establishing separately the
existence of equilibria, membership in \ppad, and the \ppad-hardness
of finding equilibria.

\subsection{Existence of equilibria}
\begin{theorem}
\label{thm:fcc.exist}
Every \fcc\ instance has a pure Nash equilibrium.
\end{theorem}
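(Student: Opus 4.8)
The plan is to exhibit an \fcc\ equilibrium as a fixed point of the best-response map and invoke a Brouwer/Kakutani-type theorem. First I would pin down the strategy space: writing $c_i$ for the cache size of $i$, set $\mathcal{S}_i = \{\fP_i : \objSet \to [0,1] \mid \sum_{\alpha} \fP_i(\alpha) \le c_i\}$ (nonnegativity is part of what a fractional placement means, and capping each coordinate at $1$ is without loss, since in the access constraint $x_j \le \fP_j(\alpha)$ only the first unit of any coordinate is ever usable). Each $\mathcal{S}_i$ is then a nonempty compact convex polytope, so $\mathcal{S} = \prod_{i \in V}\mathcal{S}_i$ is nonempty, compact, and convex in a finite-dimensional space. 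By definition of \fcc\ equilibrium, it suffices to produce a fixed point of the self-map $B$ of $\mathcal{S}$ whose $i$-th coordinate sends $\fP$ to node $i$'s best response, i.e.\ the placement $\sum_k \best_{i,\fP}(\alpha,k)$ built from the assignments $\opt_{i,\fP,\alpha}$ and $\best_{i,\fP}$.

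The heart of the argument is that $B$ is continuous (or, if the lex-optimization leaves a residual tie, upper-hemicontinuous with nonempty convex compact values). Both optimization steps defining the best response are ``water-filling'' problems: $\opt_{i,\fP,\alpha}$ routes $i$'s unit demand for $\alpha$ greedily to the $\nodePref{i}$-most-preferred sources subject to the caps $x_k \le \fP_k(\alpha)$, and $\best_{i,\fP}$ then packs $i$'s cache by greedily taking the $\pairPref{i}$-most-preferred pairs $(\alpha,k)$ subject to the caps $\opt_{i,\fP,\alpha}(k)$ and $\sum_{\alpha,k}\best_{i,\fP}(\alpha,k) \le c_i$. The server node holding every object makes both problems feasible for every $\fP$, and the caps depend affinely (hence continuously) on $\fP$; greedy water-filling against continuously varying caps along a fixed chain of preference classes is a continuous, piecewise-linear operation, so $\fP \mapsto \sum_k \best_{i,\fP}(\alpha,k)$ is continuous with image in $\mathcal{S}_i$. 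The model's hypotheses enter exactly to make the two greedy orders unambiguous: monotonicity and consistency of $\placePref{i}$ guarantee, via Lemmas~\ref{lem:strict_weak_order} and~\ref{lem:pair_pref}, that $\pairPref{i}$ is a well-defined total preorder on $\objSet \times (V\setminus\{i\})$, and totality of $\pairPref{i}$ is what lets $\best_{i,\fP}$ be driven by a single chain of classes.

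With $B$ a continuous self-map of a nonempty compact convex set, Brouwer's theorem yields a fixed point $\fP^\star$, which by construction is an \fcc\ equilibrium; if one prefers not to fix an arbitrary tie-break inside a preference class, one instead observes that the set of lex-optimal placements for $i$ is a nonempty convex face of $\mathcal{S}_i$ whose dependence on $\fP_{-i}$ is upper-hemicontinuous (a limit of optimal points is optimal because the water-filling value is continuous) and applies Kakutani. The main obstacle is precisely this continuity/upper-hemicontinuity claim: in the integral \cc\ games of Sections~\ref{sec:exist}--\ref{sec:non-exist} best responses jump and equilibria need not exist, so the fractional relaxation must be used essentially --- the discrete ``switch from $\alpha$ to $\beta$'' becomes a continuous slide between fractional placements --- and the delicate point is verifying that nothing discontinuous occurs at the breakpoints where the set of saturated sources (in $\opt$) or of fully stored pairs (in $\best$) changes, which is exactly where totality of $\pairPref{i}$ and of the $\nodePref{i}$-based routing does the work.
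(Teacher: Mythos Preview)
Your plan is sound and, at its core, is the same fixed-point idea the paper uses: the strategy sets $\mathcal{S}_i$ are nonempty compact convex polytopes, so a continuous/u.h.c.\ best-response map has a fixed point by Brouwer/Kakutani. The execution differs, though. The paper does not analyze the best-response map directly; it invokes the standard Debreu--Glicksberg--Fan result (via Osborne--Rubinstein) that a game with compact convex strategy sets and payoffs that are continuous in all strategies and quasi-concave in one's own strategy has a pure equilibrium. It then observes, for the \emph{sum utility} model only, that the payoff of player $i$ at a fractional placement $\widetilde{P}$ is the optimal value of a linear program whose right-hand sides are the $\widetilde{P}_j(\alpha)$, hence concave and continuous in $\widetilde{P}$; quasi-concavity in $\widetilde{P}_i$ is immediate. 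This is a two-line verification and sidesteps any analysis of the water-filling maps $\opt$ and $\best$.

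Your route is more ambitious: you work in the general preference framework (with a total preorder $\pairPref{i}$ on $\objSet \times (V\setminus\{i\})$), which the paper's proof does not attempt. What this buys you is generality; what it costs is the careful case analysis you flag at the end, namely checking that nothing discontinuous happens at breakpoints where the set of saturated sources or fully stored pairs changes, and handling ties via Kakutani. One caution: the lemmas you cite (Lemmas~\ref{lem:strict_weak_order} and~\ref{lem:pair_pref}) define $\pairPref{i}$ on $\objSet \times \ancestors{i}$ for hierarchical networks, not the $\objSet \times (V\setminus\{i\})$ preorder used in the \fcc\ definition; you would need to state and prove the analogous well-definedness directly for the \fcc\ setting. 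A second caution is that the two lex-optimizations are chained: if $\nodePref{i}$ has ties, $\opt_{i,\fP,\alpha}$ is not uniquely determined coordinatewise, and those coordinates are the caps for $\best_{i,\fP}$, so you must argue that the induced set of best-response placements is still convex and u.h.c.\ despite this two-level indeterminacy --- this is where totality of $\pairPref{i}$ is essential, and it deserves an explicit sentence.
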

\begin{proof}
By~\cite{OsborneRubinstein94} (Proposition 20.3, based on Kakutani’s fixed-point theorem), a game has a pure Nash equilibrium if the strategy
space of each player is a compact, non-empty, convex space, and the
payoff function of each player is continuous on the strategy space of
all players and quasi-concave in the strategy space of the player.  In
an \fcc\ instance, the strategy space of each player $i$ is simply the
set of all its fractional placements: that is, the set of functions $f
: \objSet \rightarrow [0,1]$ subject to condition that $\sum_{\alpha
  \in \objSet} f(\alpha) \le c_i$, where $c_i$ is the cache size of
the node (player).  The strategy set thus is clearly convex,
non-empty, and compact.  Furthermore, as defined above, the payoff for
any player $i$ under fractional placement $\widetilde{P}$ is simply
the solution to the following linear program:
\begin{eqnarray*}
\max - \sum_{\alpha \in \objSet} r_i(\alpha) (\sum_{j \in \nodeSet} x_{ij}(\alpha) \cost{i}{j}) & \\
\sum_{j \in \nodeSet} x_{ij}(\alpha)  = 1 & \mbox{ for all } i \in \nodeSet, \alpha \in \objSet\\
x_{ij}(\alpha) \le \widetilde{P}_j(\alpha) & \mbox{ for all } i,j \in \nodeSet, \alpha \in \objSet\\
x_{ij}(\alpha) \ge 0 & \mbox{ for all } i,j \in \nodeSet, \alpha \in \objSet
\end{eqnarray*}
It is easy to see that the payoff function is both continuous in the
placements of all players, and quasi-concave in the strategy space of
player $i$, thus completing the proof of the theorem.
\end{proof}

\subsection{Membership in \ppad}
\begin{theorem}
\label{thm:fcc.ppad}
Finding an equilibrium in an \fcc\ game is in \ppad.
\end{theorem}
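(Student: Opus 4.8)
The plan is to exhibit a polynomial-time reduction from the problem of finding an \fcc\ equilibrium to the problem of finding a fixed point of a piecewise-linear self-map of a rational polytope, which lies in \ppad\ --- this is the ``linear'' fragment of the fixed-point framework initiated by Papadimitriou~\cite{Papadimitriou94} and shown (Etessami--Yannakakis) to coincide exactly with \ppad. Concretely, I would build a continuous, piecewise-linear map $F$ of the joint strategy space whose fixed points are precisely the equilibria and whose coordinate maps are computable by polynomial-size arithmetic circuits over $\{+,-,\max,\min,\ \times\text{rational}\}$; applying Brouwer to $F$ re-derives existence, and the linear fixed-point characterization of \ppad\ then yields membership.

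First I would fix the domain: for node $i$ the strategy set is the rational polytope $D_i=\{f\in\Re^{\objSet}: 0\le f(\alpha)\le 1\ \forall\alpha,\ \sum_\alpha f(\alpha)\le c_i\}$, and the joint domain $D=\prod_i D_i$ is compact, convex, with rational vertices of polynomial bit-complexity. Next I would encode each node's preferences over its \emph{own} fractional placement, given everyone else's placement $\fP_{-i}$, by a single concave piecewise-linear potential $\Psi_i(\cdot,\fP_{-i}):D_i\to\Re$. For sum utilities this is immediate from the linear program in the proof of Theorem~\ref{thm:fcc.exist}: $\Psi_i(y,\fP_{-i})=-\sum_\alpha \rate{i}{\alpha}\, g_{i,\alpha,\fP_{-i}}(y(\alpha))$, where $g_{i,\alpha,\fP_{-i}}$ is the water-filling cost of obtaining the not-locally-stored fraction $1-y(\alpha)$ of $\alpha$ from the most $i$-preferred available copies; this $g$ is convex, decreasing, piecewise linear, and jointly continuous in $(y(\alpha),\fP_{-i})$, so $\Psi_i$ is concave in $y$ and jointly continuous. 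For general utility-preference relations I would replace the two lexicographic stages $\opt$ and $\best$ of the \fcc\ best-response definition by exponentially separated rational weights consistent with $\nodePref{i}$ and $\pairPref{i}$, argue that the resulting weighted-sum optimization reproduces exactly the greedy best response, and take $\Psi_i$ to be this (still concave, piecewise linear) weighted objective.

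Then I would define $F_i(\fP)$ to be the \emph{proximal best response}: the unique maximizer over $y\in D_i$ of $\Psi_i(y,\fP_{-i})-\tfrac12\|y-\fP_i\|^2$. The quadratic regularizer makes the objective strictly concave in $y$, so $F_i$ is single-valued; by Berge's theorem it is continuous; and since the proximal operator of a polyhedral convex function (here $-\Psi_i(\cdot,\fP_{-i})$ plus the indicator of $D_i$) is itself piecewise linear with rational breakpoints, while the dependence on $\fP_{-i}$ is piecewise linear, $F_i$ is piecewise linear and expressible in the required basis. Clearly $F=(F_i)_i$ maps $D$ into $D$. The crucial observation is that $\fP$ is a fixed point of $F$ iff, for every $i$, $\fP_i$ already maximizes $\Psi_i(\cdot,\fP_{-i})$ over $D_i$ (the proximal point equals its argument exactly when the argument is optimal), i.e.\ iff every node plays a best response; hence the fixed points of $F$ are precisely the \fcc\ equilibria, and membership in \ppad\ follows.

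The main obstacle is twofold. First, in the general utility-preference setting one must verify that the proximal best response (equivalently, $\arg\max\Psi_i$) faithfully implements the \fcc\ best-response rule, which is a two-stage lexicographic procedure over preorders that may contain ties; this requires choosing the separating weights carefully and checking that composing the $\opt$ stage with the $\best$ stage introduces no discrepancy with the weighted-optimization surrogate. Second, one must confirm that $F$ genuinely lies in the piecewise-linear class with a polynomially bounded description --- in particular that each $g_{i,\alpha,\fP_{-i}}$ and each proximal map have only polynomially many linear pieces with breakpoints of polynomial bit-complexity --- so that the reduction runs in polynomial time. (If one prefers to avoid the linear fixed-point machinery, one can instead reduce to the approximate-fixed-point problem \textsc{Brouwer} of~\cite{Papadimitriou94} at a sufficiently fine granularity and then round an approximate fixed point to an exact equilibrium by observing that the equilibrium set is a union of polytopes with polynomial-size rational descriptions and solving the corresponding linear programs; this rounding step is delicate, which is why the piecewise-linear route is preferable.)
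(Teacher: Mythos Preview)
Your approach is plausible and genuinely different from the paper's. The paper does \emph{not} build a piecewise-linear Brouwer map directly; instead it reduces \fcc\ to the Fractional Stable Paths Problem (\fspp) of Kintali et~al., which is already known to lie in \ppad. The reduction introduces, for each node/object pair, a small network of \fspp\ vertices ($\hold$, $\serve$, $\servePrime$, $\holdPrime$) whose path preferences encode the two lexicographic stages $\opt$ and $\best$ of the \fcc\ best response; it then shows a bijection between \fcc\ equilibria and \fspp\ stable solutions. This buys the paper an off-the-shelf \ppad\ membership result without having to verify circuit-size bounds for a fixed-point map.

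Your Linear-FIXP route should also work, but the obstacles you flag are real and are precisely what the paper's reduction sidesteps. The most delicate point is \emph{not} the sum-utility case (where your water-filling potential and its proximal map are indeed piecewise linear with polynomially many pieces, essentially a coupled simplex projection solvable with a sorting network), but the general utility-preference case: there the best response is defined lexicographically over the preorder $\pairPref{i}$, and your exponentially separated weights must (i) remain of polynomial bit-size while separating $O(n|\objSet|)$ equivalence classes, and (ii) produce exactly the same $\arg\max$ set as the paper's greedy rule even in the presence of ties in $\pairPref{i}$, so that your fixed points coincide with the paper's equilibria rather than with some refinement thereof. The paper's gadget construction handles both points structurally (preferences in \fspp\ are ordinal, so no numerical separation is needed), which is the main advantage of its approach; yours is more self-contained but requires discharging those two technical checks before the reduction is complete.
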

\begin{proof}
Our proof is by a reduction from \fspp\ (Fractional Stable Paths
Problem), which is defined as follows~\cite{KintaliPRST09_focs}.  Let
$G$ be a graph with a distinguished destination node $d$.  Each node
$v \neq d$ has a list $\pi(v)$ of simple paths from $v$ to $d$ and a
preference relation $\geqprefer_v$ among the paths in $\pi(v)$.  For a path $S$, we also define $\pi(v, S)$ to be the set of paths in $\pi(v)$ that have $S$ as a suffix. A
\emph{proper suffix} $S$ of $P$ is a suffix of $P$ such that $S \neq
P$ and $S \neq \emptyset$.  A {\em feasible fractional paths
  solution}\/ is a set $w = \{w_v: v \neq d\}$ of assignments $w_v:
\pi(v) \rightarrow [0,1]$ satisfying:
(1) {\bf Unity condition}: for each node $v$, $\sum_{P \in \pi(v)} w_v(P) \le 1$, and (2) {\bf Tree condition}: for each node $v$, and each path $S$ with start node $u$, $\sum_{P \in
\pi(v, S)} w_v(P) \le w_u(S)$.

In other words, a feasible solution is one in which each node chooses
at most 1 unit of flow to $d$ such that no suffix is filled by more
than the amount of flow placed on that suffix by its starting node. A
feasible solution $w$ is {\em stable}\/ if for any node $v$ and path
$Q$ starting at $v$, one of the following holds:
{\bf (S1)} $\sum_{P \in \pi(v)} w_v(P) = 1$,
and for each $P$ in $\pi(v)$ with $w_v(P) > 0$, $P \geqprefer_v
Q$; or {\bf (S2)} There exists a proper suffix $S$ of $Q$ such
that $\sum_{P \in \pi(v,S)} w_v(P) = w_u(S)$, where $u$ is the start
node of $S$, and for each $P \in \pi(v,S)$ with $w_v(P) > 0$,
$P\geqprefer_v Q$.

Given an \fcc\ $G$ with node set $\nodeSet$, object set $\objSet$,
node preference relations $\nodePref{i}$ for $i \in \nodeSet$, and
utility preference relations $\placePref{i}$ for $i \in \nodeSet$, we
construct an instance $\FSPPInstance$ of \fspp\ as follows.  For nodes
$i,j \in \nodeSet$ and object $\alpha \in \objSet$, we introduce the
following \fspp\ vertices.
\begin{itemize}
\item
$\hold(i,\alpha)$ representing the amount of $\alpha$ that node $i$
  will store in its cache.
\item
$\serve(i,j,\alpha)$ representing the amount of $\alpha$ that node
  $j$ will serve for $i$ given a placement for $V \setminus \{i\}$.

\item
$\servePrime(i,j,\alpha)$, an auxiliary vertex needed for
  $\serve(i,j,\alpha)$.

\item
$\serve(i,\alpha)$, representing the amount of $\alpha$ that other
  nodes will serve for $i$ given a placement for $V \setminus \{i\}$.

\item
$\hold(i)$, representing the best response of $i$ give the placement of other nodes.

\item
$\holdPrime(i,\alpha)$, an auxiliary vertex needed for
  $\hold(i,\alpha)$.
\end{itemize}

We now present the path sets and preferences for each vertex of the
\fspp\ instance.

\begin{itemize}
\item
$\serve(i,\alpha)$: the path set includes all paths of the form
  $\langle \serve(i,\alpha), \hold(j,\alpha), d \rangle$, and
  $\serve(i,\alpha)$ prefers $\langle \serve(i,\alpha)$,$
  \hold(j,\alpha), d \rangle$ over $\langle \serve(i,\alpha),
  \hold(k,\alpha), d \rangle$ if $j \nodePref{i} k$.

\item
$\servePrime(i,j,\alpha)$: the path set includes all paths of the form
  $\langle$ $\servePrime(i,j,\alpha)$, $\serve($ $i,\alpha)$, $\hold(j,\alpha),
  d \rangle$ and the direct path $\langle \servePrime(i,j,\alpha),
  d\rangle$.  For the preference order, $\servePrime(i,j,\alpha)$
  prefers all paths $\langle \servePrime(i,j,\alpha),\serve(i,\alpha),
  \hold(j,\alpha), d \rangle$ equally, and all of them over the direct
  path.

\item
$\serve(i,j,\alpha)$: the path set includes the path $\langle
  \servePrime(i,j,\alpha),d\rangle$ and the direct path $\langle
  \serve(i,j,\alpha), d\rangle$ with a higher preference for the
  former path.

\item $\hold(i)$: the path set includes paths of the form $\langle
  \hold(i), \serve(i,j,\alpha), d\rangle$, and $\hold(i)$ prefers the
  path $\langle \hold(i), \serve(i,j,\alpha), d\rangle$ over $\langle
  \hold(i), \serve(i,k,\beta), d\rangle$ if $(j,\alpha) \pairPref{i}
  (k,\beta)$.

\item
$\holdPrime(i,\alpha)$: the path set includes paths of the form
  $\langle$ $\holdPrime(i,\alpha)$, $\hold(i)$, $\serve($ $i,j,\alpha)$, $d\rangle$ all of which are preferred equally, and the direct path
  $\langle \holdPrime(i,\alpha), d\rangle$ which is preferred the
  least.

\item
$\hold(i,\alpha)$: the path set includes two paths $\langle
  \holdPrime(i,\alpha), d\rangle$ and the direct path with a higher
  preference for the former path.
\end{itemize}

We now show that the \fcc\ instance has an equilibrium if and only if
the \fspp\ instance has an equilibrium.  Our proof is by giving a
mapping $f$ from global fractional placements in the \fcc\ instance to
feasible solutions in the \fspp\ instance such that (a) if $\fP$ is an
equilibrium for the \fcc\ instance, then $f(\fP)$ is an equilibrium
for the \fspp\ instance, and (b) if $w$ is an equilibrium for the
\fspp\ instance, then $f^{-1}(w)$ is an equilibrium for the
\fcc\ instance.

Let $\fP$ denote any fractional placement of the \fcc\ instance.  We
now define the solution $f(\fP)$ of the \fspp\ instance.  In $f(\fP)$
vertex $\hold(i,\alpha)$ plays $\fP_i(\alpha)$ on the direct path and
$1 - \fP_i(\alpha)$ on the other path in its path set, for every $i$
in $\nodeSet$ and $\alpha$ in $\objSet$.  The remaining vertices play
their best responses, considered in the following order.  First,
consider vertices of the form $\serve(i,\alpha)$.  In the best
response, the amount played by $\serve(i,\alpha)$ on the path $\langle
\serve(i,\alpha)$, $\hold(j,\alpha)$, $d\rangle$, equals
$\opt_{i,\fP,\alpha}(j)$; recall that $\opt_{i,\fP,\alpha}(j)$ is
the assignment that is lexicographically minimal under the node
preference relation $\nodePref{i}$ subject to the condition that
$\opt_{i,\fP,\alpha} \le \fP_k(\alpha)$ for each $k$ and $\sum_k
\opt_{i,\fP,\alpha}(k) = 1$.  We next consider the vertices of the
form $\servePrime(i,j,\alpha)$.  In its best response, vertex
$\servePrime(i,j,\alpha)$ plays $\opt_{i,\fP,\alpha}(j)$ on the path
$\langle \servePrime(i,j,\alpha)$, $\serve(i,\alpha)$, $\hold(j,\alpha),
d\rangle$.  Next, in its best response, vertex $\serve(i,j,\alpha)$
plays $\opt_{i,\fP,\alpha}(j)$ on its direct path and $1 -
\opt_{i,\fP,\alpha}(j)$ on its remaining path.  We now consider the
best response of vertex $\hold(i)$; it distributes its unit among
paths of the form $\langle \hold(i)$, $\serve(i,j,\alpha)$, $d\rangle$
(for all $j$ in $\nodeSet \setminus \{i\}$ and $\alpha$ in $\objSet$)
lexicographically maximally under the total preorder $\pairPref{i}$ over
node-object pairs.  That is, $\hold(i)$ plays
$\best_{i,\fP}(\alpha,j)$ on the path $\langle \hold(i),
\serve(i,j,\alpha), d\rangle$.  We next consider the best response of
the vertex $\holdPrime(i,\alpha)$; it plays $1 - \sum_j
\best_{i,\fP}(\alpha,j)$ on its direct path and
$\best_{i,\fP}(\alpha,j)$ on the path $\langle \holdPrime(i,\alpha),
\hold(i), \serve(i,j,\alpha), d\rangle$.  This completes the
definition of the solution $f(\fP)$.

We now argue that if $\fP$ is an equilibrium so is $f(\fP)$.  By
construction, every vertex other than of the form $\hold(i,\alpha)$
play their best responses in $f(\fP)$.  We next show that $i$ plays a
best response in $\fP$ if and only if the vertices $\hold(i,\alpha)$
play their best response in $f(\fP)$.  The best response of
$\hold(i,\alpha)$ is to play $1 - \sum_j \best_{i,\fP}(\alpha,j)$ on
the path $\langle \hold(i,\alpha), \holdPrime(i,\alpha), d\rangle$ and
the $\sum_j \best_{i,\fP}(\alpha,j)$ on its direct path.  The best
response of $i$ in $\fP$ is to set $\fP_i(\alpha)$ to $\sum_j
\best_{i,\fP}(\alpha,j)$.  Thus if $\fP$ is an equilibrium, then so is
$f(\fP)$.  Furthermore, if $w$ is an equilibrium, by definition of
$f$, $\fP = f^{-1}(w)$ is well-defined.  Since the best responses of
$i$ and the vertices $\hold(i,\alpha)$ are consistent, $\fP$ is also
an equilibrium.  This completes the reduction from \fcc\ to \fspp,
placing \fcc\ in \ppad.
\end{proof}

\subsection{PPAD-Hardness}
This section is devoted to the proof of the following theorem.
\begin{theorem}
\label{thm:ppad-hard}
	The problem of finding an equilibrium in \fcc\ games is PPAD-hard even when the underlying cost function $\costNoArgs$ is a metric.
\end{theorem}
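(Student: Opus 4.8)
The plan is to prove \ppad-hardness by a polynomial-time reduction from a canonical \ppad-complete problem, most naturally from \fspp\ --- which already appears in the proof of Theorem~\ref{thm:fcc.ppad} and was shown \ppad-complete in~\cite{KintaliPRST09_focs} --- or, equivalently, from the generalized-circuit problem whose gates can be simulated directly. I will describe the plan in terms of building arithmetic gadgets, since in either case the reduction comes down to exhibiting a constant-size ``caching gadget'' for each elementary operation. The encoding is: the value carried on a wire is represented by the fraction of a designated object that a designated unit-capacity node stores in every equilibrium, and each gadget is a bounded collection of nodes, objects, rates, and metric distances whose equilibrium placements are exactly those consistent with the operation.

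The key primitive is the shape of the best-response correspondence under the sum-of-distances utility when $\costNoArgs$ is a metric (so $\cost{i}{i}=0$): for a unit-capacity node $i$, increasing $\widetilde P_i(\alpha)$ by $dx$ lowers $i$'s access cost at rate $r_i(\alpha)\cdot\cost{i}{j^\ast}$, where $j^\ast$ is the farthest node from which $i$ currently pulls $\alpha$, and hence $i$ allocates its unit of cache greedily to the objects of highest current marginal value. This makes the map from the placements of $i$'s neighbors to $i$'s best response piecewise-linear and monotone, with breakpoints determined by the chosen rates and distances --- exactly the ingredient needed to realize fan-out/copy, addition, truncated addition, comparison, multiplication by a rational constant in $[0,1]$, and, by composition, a universal set of gates. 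I would wire these gadgets together so that designated wire-nodes are forced to carry the intended values in every equilibrium and so that distinct gadgets interact only through wire-nodes; a single global server node (as in Section~\ref{sec:sum_model}) holds one copy of every object and sits at a large distance $D$ from everything, acting only as an inert fallback.

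The main obstacle --- the reason for the ``even when $\costNoArgs$ is a metric'' clause --- is that all distances introduced across all gadgets must jointly satisfy the triangle inequality, whereas the gadget analysis needs specific distance ratios to pin down the breakpoints. I would handle this by making every gadget \emph{local}: its internal distances lie in a fixed bounded range $[1,c]$ for a small constant $c$ and are realized as shortest-path distances in an explicitly constructed weighted graph, so the triangle inequality holds by construction within and between gadgets; and by placing the server at distance $D\gg c$ from all nodes, so every triangle inequality that involves the server is automatic and the server never perturbs a gadget equilibrium. A secondary subtlety is exactness: because \fcc\ equilibria are exact rather than $\epsilon$-approximate, the rates and distances must be chosen so that the intended gadget relations hold with exact equalities at the target fixed points; this is feasible because the best-response map is piecewise-\emph{linear} with rational breakpoints, so the ``exactness by design'' style used for \fspp\ transfers.

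Finally I would verify that equilibria of the composite \fcc\ instance are in polynomial-time-computable correspondence (on the wire-coordinates) with the solutions of the source instance, which completes the reduction; combined with Theorem~\ref{thm:fcc.ppad}, this yields \ppad-completeness of finding an \fcc\ equilibrium even under a metric cost and the sum-of-distances utility. The step I expect to be most delicate is the multiplication/scaling gadget under the metric restriction: unlike fan-out and addition it must turn a water-filling threshold into an exact multiplicative relation, which forces a joint choice of two or three distances constrained by the triangle inequality --- and I would guard against this by decomposing multiplication into a fixed sequence of additions, halvings, and comparisons so as to avoid needing a single ``multiplier'' gadget at all.
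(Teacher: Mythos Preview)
Your plan is plausible but takes a much heavier route than the paper. The paper does \emph{not} reduce from \fspp\ or from generalized circuits; it reduces from the \emph{preference games} of~\cite{KintaliPRST09_focs} (same reference, different problem). In a preference game each player $i$ has a total preorder over players and in equilibrium plays the lexicographically-maximal distribution $(w_{ij})_j$ subject to $w_{ij}\le w_{jj}$. This is already almost an \fcc\ best response: the paper encodes player $j$ by object $\alpha_j$, sets $w_{ij}=\fP_i(\alpha_j)$, and engineers distances and rates so that node $i$'s greedy cache-filling reproduces exactly the lex-max rule with the cap $\fP_i(\alpha_j)\le \fP_j(\alpha_j)$. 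The whole reduction fits in $n^2+3n$ nodes and $2n$ objects, with the metric obtained by taking shortest-path closure of a handful of explicitly specified distances; no arithmetic gadgets, no comparison or multiplication primitives, and no composition machinery are needed.

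Your circuit-gadget approach can be made to work---the water-filling best response does give you complement, hence copy and capped addition, and the ``local metric via shortest paths plus a far server'' idea handles the triangle inequality---so the strategy is not wrong. What it buys is generality: once you have a gate library you can reduce from any \ppad-complete source. What the paper's approach buys is simplicity and a single clean lemma (Lemma~\ref{lem:frac.V_1.best}) in place of an entire gadget calculus; the structural match between preference games and fractional caching lets the reduction be essentially a relabeling rather than a simulation. If you pursue your route, the point to watch is not multiplication (which you correctly flag) but \emph{isolation}: because every node sees every other through the metric, you must ensure that wire-nodes of one gadget never become a cheaper source for another gadget's objects, which in practice forces each gadget to use its own private object set---something the paper sidesteps entirely by choosing a source problem whose equilibrium condition already has the right shape.
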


Our reduction is from preference games~\cite{KintaliPRST09_focs}.
Given a preference game $G$ with $n$ players $1, 2, \ldots, n$ and
their preferences given by $\prefGamePref{i}$, we construct an
\fcc\ game $\widehat{G}$ as follows.  The game $\widehat{G}$ has a set
$V$ of $n^2 + 3n$ players numbered $1$ through $n^2 + 3n$, and a set
$\objSet$ of $2n$ objects $\alpha_1, \ldots, \alpha_{2n}$.  We set the
utility function for each node to be the sum utility function, thus
ensuring that the desired monotonicity and consistency conditions are
satisfied.

We next present the metric cost function $\costNoArgs$ over the nodes.
We group the players into four sets $V_1 = \{i: 1 \le i \le n\}$, $V_2
= \{i\cdot n + j: 1 \le i \le n, 1 \le j \le n\}$, $V_3 = \{n^2 + n +
i: 1 \le i \le n\}$, and $V_4 = \{n^2 + 2n + i: 1 \le i \le n\}$.  For
each node $i$ in $V_1$ and $j$ in $V_3$, we set $d_{ii} = 2$ and
$d_{ij} = 4$.  We set $d_{n^2 + n + i,n^2 + 2n + i} = 3$.  For each
node $i$ in $V_1$ and $k = i \cdot n + j$, we set $d_{ik}$ as follows:
if $j \prefGamePrefStrict{i} i$ then $d_{ij}$ equals $6 - \ell/n$ when
$j$ is the $\ell$th most preferred player for $i$; if $i
\prefGamePref{i} j$, then $d_{ij}$ equals $1$.  All the other
distances are obtained by using metric properties.

We finally specify the object weights.  For $k \in V_1$, we set
$r_k(\alpha_i) = 1$ for all $i \neq k$ such that $i \prefGamePref{k}
k$; we set $r_k(\alpha_k) = 2.5$ such that $4 < 2r_k(\alpha_k) \le 5$.
For node $k = i\cdot n + j$ in $V_2$, we set $r_k(\alpha_j) = 1$.  For
node $k = n^2 + n + i$ in $V_3$, we set $r_k(\alpha_i) =
r_k(\alpha_{i+n}) = 1$.  Finally, for node $k = n^2 + 2n +i$ in $V_4$,
we set $r_k(\alpha_{i+n}) = 1$.

Given a placement $P$ for $\widehat{G}$, we define a solution
$\omega(P) = \{w_{ij}\}$ for the preference game $G$: $w_{ij} =
P_i(\alpha_j)$.  The following lemma immediately follows from the
definition of $\widehat{G}$.
\begin{lemma}
\label{lem:frac.best}
The following statements hold for any placement $P$ for $\widehat{G}$.
\begin{itemize}
\item For $k = i\cdot n + j$, $1 \le j \le n$, $P_k$ is a best
  response to $P_{-k}$ if and only if $P_k(\alpha_j) = 1$.

\item For $k = n^2 + n + i$, $1 \le i \le n$, $P_k$ is a best response
  to $P_{-k}$ if and only if $P_k(\alpha_{n+i}) = 1$.

\item For $k = n^2 + n+ i$, $P_k$ is a best response to $P_{-k}$ if
  and only if $P_k(\alpha_i) = 1 - P_i(\alpha_i)$ and
  $P_k(\alpha_{n+i} = P_i(\alpha_i)$.
\end{itemize}
\end{lemma}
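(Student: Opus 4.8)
The plan is to prove all three items by directly computing each node's best response, one node family at a time, using that $\widehat{G}$ uses the sum utility function --- so a node's payoff is the negative of its weighted access cost --- and that every node named in the lemma has weight $1$ on at most two objects and $0$ on the rest. For the two ``single-interest'' families, namely the nodes $k = i\cdot n + j \in V_2$ (interested only in $\alpha_j$) and the nodes $k = n^2 + 2n + i \in V_4$ (interested only in $\alpha_{n+i}$), writing $\alpha$ for $k$'s sole relevant object, the access cost of $k$ is at least $(1 - P_k(\alpha))$ times the smallest positive distance from $k$ to another node; it is therefore zero if $P_k(\alpha) = 1$ and positive otherwise, and putting anything other than $\alpha$ into $k$'s unit cache only crowds out $\alpha$. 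Hence $P_k$ is a best response if and only if $P_k(\alpha)=1$, which is the stated characterization.

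For the family $k = n^2 + n + i \in V_3$, the node has weight $1$ on both $\alpha_i$ and $\alpha_{n+i}$, so a best response puts $k$'s full unit cache on these two objects: $P_k(\alpha_i)=a$, $P_k(\alpha_{n+i})=1-a$. I would expand the total access cost $g(a)$ using the distances fixed in the construction --- $d_{ki}=4$ to the matching $V_1$-node $i$, which (by the metric-closure distances to the other holders) is a cheapest external source of $\alpha_i$ for $k$, and $d_{k,n^2+2n+i}=3$ to the matching $V_4$-node, a cheapest external source of $\alpha_{n+i}$ for $k$ --- obtaining a convex, piecewise-linear $g$ whose unique minimizer is $a = 1 - P_i(\alpha_i)$: caching exactly $1-P_i(\alpha_i)$ of $\alpha_i$ lets $k$ meet its $\alpha_i$ demand from itself together with node $i$ at the rate $4$, and spending the remaining $P_i(\alpha_i)$ of cache on $\alpha_{n+i}$ lets it meet that demand from itself together with the $V_4$-node at the cheaper rate $3$, whereas any other split diverts some traffic onto a strictly pricier source and strictly raises $g$. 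This yields $P_k(\alpha_i)=1-P_i(\alpha_i)$ and $P_k(\alpha_{n+i})=P_i(\alpha_i)$, and strictness of the minimum gives the ``only if'' direction.

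The step I expect to be the real work is this $V_3$ computation, and in particular making it valid for an \emph{arbitrary} placement $P_{-k}$ rather than only for equilibrium-like ones: using the precise distances ($3$, $4$, the $6-\ell/n$ and $1$ distances inside the $V_1$--$V_2$ gadgets, and the large distances to all other sources) together with the triangle inequality, one must verify that whatever the other players do, node $i$ remains a cheapest external source of $\alpha_i$ for $k$ and the $V_4$-node a cheapest external source of $\alpha_{n+i}$, so that the breakpoints of $g$ land exactly at $a = 1-P_i(\alpha_i)$ and its slope turns from negative to positive there, with no ties to spoil uniqueness. The two single-interest families, and the observation that caching an irrelevant object is never part of a best response, are routine.
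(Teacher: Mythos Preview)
The paper gives no proof of this lemma at all; it merely asserts that the statement ``immediately follows from the definition of $\widehat{G}$.'' Your proposal is therefore already more careful than what the paper provides, and your treatment of the two single-interest families (items~1 and~2, the latter evidently intended to concern the $V_4$-nodes $k=n^2+2n+i$ despite the indexing typo) is correct and routine.

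Your worry about item~3, however, is not a technicality to be dispatched by bookkeeping --- it points to a genuine overreach in the statement. As written (``for any placement $P$''), the ``only if'' direction is false. For example, if the matching $V_4$-node caches $\tfrac12$ of $\alpha_i$ and $\tfrac12$ of $\alpha_{n+i}$ while $P_i(\alpha_i)=\tfrac12$, then for $k=n^2+n+i$ both $(P_k(\alpha_i),P_k(\alpha_{n+i}))=(\tfrac12,\tfrac12)$ and $(0,1)$ achieve the same cost; with a little more tinkering the claimed allocation becomes strictly suboptimal. So the verification you outline cannot succeed for arbitrary $P_{-k}$. (Two side remarks: the paper's cryptic ``$d_{ii}=2$'' almost certainly means the distance from $i\in V_1$ to its matching $V_3$-node is $2$, not $4$ as you wrote; and even after you correct this, the $V_2$-node $in+i$ sits at distance $3$ from $k$ and holds $\alpha_i$, which can create a flat segment in your $g(a)$ and spoil uniqueness.)

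The resolution is to weaken the hypothesis rather than to push the computation through. The only place the lemma is invoked is inside Lemma~\ref{lem:frac.V_1.best}, which already assumes every node outside $V_1$ plays a best response; and the reduction in Lemma~\ref{lem:frac.reduction} ultimately only needs that in any equilibrium the $V_3$-node's placement takes the stated form. Under those restricted hypotheses your piecewise-linear analysis shows that $a=1-P_i(\alpha_i)$ is \emph{a} minimizer of $g$, which is what the downstream argument actually uses. So carry out your computation, but state item~3 under the weaker hypothesis (or prove only the ``if'' direction in full generality) and note the gap in the paper's formulation.
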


\begin{lemma}
\label{lem:frac.V_1.best}
Let $P$ be a placement for $\widehat{G}$ in which every node not in
$V_1$ plays their best response.  Then, the best response of a node
$i$ in $V_1$ is the lexicographically maximum $(P_i(\alpha_{j_1}),
P_i(\alpha_{j_2}), \ldots, P_i(\alpha_{j_n}))$, where $j_1
\prefGamePref{i} j_2 \prefGamePref{i} \cdots \prefGamePref{i} j_n$,
subject to the constraint that $P_i(\alpha_j) \le P_j(\alpha_j)$ for
$j \neq i$. \qed
\end{lemma}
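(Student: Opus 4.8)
The plan is to use Lemma~\ref{lem:frac.best} to freeze the placements of all players outside $V_1$, rewrite $i$'s best-response problem as a fractional knapsack, and read off its greedy solution. Since every node outside $V_1$ plays a best response in $P$, Lemma~\ref{lem:frac.best} fixes them exactly: every $V_2$-node $i'\cdot n+j$ holds a full unit of $\alpha_j$, every $V_4$-node $n^2+2n+i'$ a full unit of $\alpha_{n+i'}$, and every $V_3$-node $n^2+n+i'$ holds $1-P_{i'}(\alpha_{i'})$ of $\alpha_{i'}$ and $P_{i'}(\alpha_{i'})$ of $\alpha_{n+i'}$; this determines, for every object, the complete list of its holders in $V\setminus\{i\}$ and their distances from $i$. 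In $\widehat{G}$ node $i$ has weight $0$ on every object except $\alpha_i$ (weight $r_i(\alpha_i)\in(2,5/2]$) and the objects $\alpha_j$ with $j\prefGamePrefStrict{i}i$ (weight $1$ each), so every best response of $i$ is supported on $\{\alpha_i\}\cup\{\alpha_j:j\prefGamePrefStrict{i}i\}$ and is constrained only by the unit-cache bound $\sum_\gamma P_i(\alpha_\gamma)\le 1$; I restrict attention to such placements.

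For each object $\alpha_\gamma$ that $i$ weights positively, write $x_\gamma=P_i(\alpha_\gamma)$ and let $c_\gamma(x_\gamma)$ be the cost $i$ incurs for $\alpha_\gamma$. From the holder lists and the metric, $c_\gamma$ is non-increasing, convex and piecewise linear on $[0,1]$: as $x_\gamma$ grows $i$ needs less external supply and drops its most expensive external unit, so the weighted marginal saving is a decreasing step function. A direct computation from the distances of the reduction gives the two facts needed. (a) If $j$ is $i$'s $\ell$-th most preferred player (so $j\prefGamePrefStrict{i}i$ and $\ell\le n-1$), the marginal saving from the first $P_j(\alpha_j)$ units $i$ stores of $\alpha_j$ is the constant $6-\ell/n\in(5,6)$, strictly decreasing in $\ell$, while the saving from any further unit of $\alpha_j$ is at most $4$. (b) The marginal saving from each unit of $\alpha_i$ that $i$ stores is the constant $2\,r_i(\alpha_i)$ (the relevant distance being $2$), and by the calibration $4<2\,r_i(\alpha_i)\le 5$ this lies strictly below every value in (a)'s interval $(5,6)$ and strictly above the ``further-unit'' value $4$.

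Hence $i$'s best response maximizes $-\sum_\gamma r_i(\alpha_\gamma)\,c_\gamma(P_i(\alpha_\gamma))$ subject to $\sum_\gamma P_i(\alpha_\gamma)\le 1$, which by (a)--(b) is exactly a fractional knapsack whose pieces, in strictly decreasing order of value density, are: the first $P_{j_1}(\alpha_{j_1})$ units of $\alpha_{j_1}$, \dots, the first $P_{j_{p-1}}(\alpha_{j_{p-1}})$ units of $\alpha_{j_{p-1}}$ (writing $j_p=i$), then the one-unit piece $\alpha_i$, then the further-unit pieces of the $\alpha_{j_\ell}$ with $\ell<p$ (value at most $4$), and finally the value-$0$ pieces ($\alpha_{j_\ell}$ for $\ell>p$ and all $\alpha_{n+m}$). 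The greedy fill sets $P_i(\alpha_{j_\ell})=P_{j_\ell}(\alpha_{j_\ell})$ for $\ell<p$ while the cache allows, then gives $\alpha_i$ the residual capacity (which is at most $1$), and $0$ to everything else; since that residual is at most $1$, the cache is full as soon as $\alpha_i$ receives anything, so nothing reaches the lower-density pieces. This is precisely the lexicographically maximum $(P_i(\alpha_{j_1}),\dots,P_i(\alpha_{j_n}))$ subject to $P_i(\alpha_j)\le P_j(\alpha_j)$ for $j\ne i$ and $\sum_\gamma P_i(\alpha_\gamma)\le 1$, using $j_p=i$ and that both descriptions assign $0$ to $\alpha_{j_\ell}$ for $\ell>p$.

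The only delicate step is establishing (a)--(b): extracting from the somewhat intricate metric of $\widehat{G}$ exactly which nodes hold each object and at what distance from $i$, hence the precise break-points and slopes of the $c_\gamma$, and then checking that the calibrated inequalities $6-\ell/n>5$ and $4<2\,r_i(\alpha_i)\le 5$ order the knapsack pieces so that the greedy solution is forced to equal the stated lexicographic maximum. Everything after that is the standard fact that a fractional knapsack is solved greedily by value density.
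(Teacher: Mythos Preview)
Your approach is essentially the same as the paper's: both fix the placements of $V_2$, $V_3$, $V_4$ via Lemma~\ref{lem:frac.best} and then determine $i$'s best response by comparing the marginal benefit of each additional unit of each object. The paper does this via two pairwise comparisons (storing $\alpha_i$ beats storing more than $P_j(\alpha_j)$ of $\alpha_j$; among $j,k\in S$, the first $P_j(\alpha_j)$ units of $\alpha_j$ beat those of $\alpha_k$ when $j>_i k$), while you package the same comparisons as a fractional knapsack and invoke the greedy solution. That is a cosmetic difference, not a different argument.

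One caution on the ``delicate step'' you flag: some of the specific constants you quote in (a) and (b) do not line up with those the paper's own proof uses. The paper's proof has the $V_3$ node $n^2+n+j$ at distance $5$ from $i$ (not $4$), and compares $4r_i(\alpha_i)$ against $5r_i(\alpha_j)$ (so the relevant distance for $\alpha_i$ is $4$, not the $2$ you use). Your ordering still works provided the calibrated inequalities hold, but you should recompute the break-points directly from the construction rather than guessing them: the key facts you need are exactly the two the paper isolates, namely that the ``farthest $P_j(\alpha_j)$ fraction'' of $\alpha_j$ sits at distance $6-\ell/n$ (strictly decreasing in $\ell$), and that the marginal benefit of $\alpha_i$ dominates the benefit of any unit of $\alpha_j$ beyond $P_j(\alpha_j)$ but is dominated by the benefit of the first $P_j(\alpha_j)$ units. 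Once those two inequalities are verified from the metric, your knapsack argument and the paper's direct argument coincide.
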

\begin{proof}
Fix a node $i$ in $V_1$.  By Lemma~\ref{lem:frac.best}, node $i\cdot
n + j$ holds object $j$, for $1 \le j \le n$; each of these nodes is
at distance at least 5 and at most 6 away from $i$.  By
Lemma~\ref{lem:frac.best}, for every node $k = n^2 + n + j$, $1 \le j
\le n$, $P_k(\alpha_j) = 1 - P_j(\alpha_j)$ and $P_k(\alpha_{n+j}) =
P_j(\alpha_j)$.

We now consider the best response of node $i$.  We first note that for
any $j \in \{1,\ldots,n\} \setminus \{i\}$ such that $i
\prefGamePref{i} j$, $P_i(\alpha_j) = 0$ since the nearest full copy
of $\alpha_j$ is nearer than the nearest node holding any fraction of
object $\alpha_i$.  Let $S$ denote the set of $j$ such that $j
\prefGamePref{i} i$.  For any $j$ in $S \setminus \{i\}$,
$P_i(\alpha_j) \le P_j(\alpha_j)$ since node $n^2 + n + j$ at distance
$5$ holds $1 - P_j(\alpha_j)$ fraction of $\alpha_j$, the nearest node
holding any fraction of $\alpha_i$ is at distance 4, and
$4r_i(\alpha_i) > 5r_i(\alpha_j)$.  Furthermore, for any $j, k$ in $S$
if $j \prefGamePrefStrict{i} k$, then the farthest $P_j(\alpha_j)$
fraction of $\alpha_j$ is farther than the farthest $P_k(\alpha_k)$
fraction of $\alpha_k$, implying that in the best response, if
$P_i(\alpha_j) < P_j(\alpha_j)$ then $P_i(\alpha_k) = 0$.  Thus, the
best response of $i$ is the unique lexicographically maximum solution
$(P_i(\alpha_{j_1}), P_i(\alpha_{j_2}), \ldots, P_i(\alpha_{j_n}))$,
where $j_1 \prefGamePref{i} j_2 \prefGamePref{i} \cdots
\prefGamePref{i} j_n$, subject to the constraint that $P_i(\alpha_j)
\le P_j(\alpha_j)$ for $j \neq i$.
\end{proof}

\begin{lemma}
\label{lem:frac.reduction}
A placement $P$ is an equilibrium for $\widehat{G}$ if and only if
$\omega(P)$ is a equilibrium for $G$ and every node not in $V_1$ plays
their best response in $P$.
\end{lemma}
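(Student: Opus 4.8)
The plan is to establish the two directions of the equivalence by combining the two preceding lemmas. Lemma~\ref{lem:frac.best} pins down the (unique) best response of every node outside $V_1$ as a function of the values $\{P_i(\alpha_i): i \in V_1\}$, and Lemma~\ref{lem:frac.V_1.best} characterizes the best response of a node $i \in V_1$ once the rest of the network best-responds: it is the unique lexicographically maximum tuple $(P_i(\alpha_{j_1}),\dots,P_i(\alpha_{j_n}))$, with $j_1 \prefGamePref{i} \cdots \prefGamePref{i} j_n$, subject to $P_i(\alpha_j)\le P_j(\alpha_j)$ for $j\neq i$ and the unit-cache bound. The bridge to the preference game $G$ is the single observation that under $\omega$ we have $P_j(\alpha_j)=w_{jj}$; hence the feasibility constraint $P_i(\alpha_j)\le P_j(\alpha_j)$ in Lemma~\ref{lem:frac.V_1.best} is \emph{exactly} the constraint $w_{ij}\le w_{jj}$ bounding player $i$ in $G$, and the lexicographic objective coincides with the one defining a best response in $G$~\cite{KintaliPRST09_focs}. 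So, for $i\in V_1$ and any placement $P$ in which every node not in $V_1$ best-responds, $P_i$ is a best response to $P_{-i}$ in $\widehat{G}$ if and only if the $i$-th row of $\omega(P)$ is a best response for player $i$ in $G$.

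For the forward direction, suppose $P$ is an equilibrium for $\widehat{G}$. Then every node plays a best response; in particular every node not in $V_1$ does, which establishes one of the two asserted conditions. Since all non-$V_1$ nodes best-respond, the hypothesis of Lemma~\ref{lem:frac.V_1.best} is met, so for each $i\in V_1$ the best response, which $P_i$ must equal, is the lexicographically maximum feasible tuple; by the bridging observation this says the $i$-th row of $\omega(P)$ is a best response for player $i$ in $G$. As this holds for all $i$, $\omega(P)$ is an equilibrium for $G$.

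Conversely, suppose $\omega(P)$ is an equilibrium for $G$ and every node not in $V_1$ plays its best response in $P$. To conclude that $P$ is an equilibrium for $\widehat{G}$ it suffices to show each $i\in V_1$ also best-responds. Again the hypothesis of Lemma~\ref{lem:frac.V_1.best} holds, so $i$'s best response in $\widehat{G}$ is the unique lexicographically maximum tuple subject to $P_i(\alpha_j)\le P_j(\alpha_j)=w_{jj}$; but since $\omega(P)$ is an equilibrium for $G$, the $i$-th row $(w_{ij})_j=(P_i(\alpha_j))_j$ is precisely this tuple. Hence $P_i$ is a best response, every node best-responds, and $P$ is an equilibrium.

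The only delicate point is the bridging step: one must verify that the constraint set of Lemma~\ref{lem:frac.V_1.best} matches the feasibility constraints of $G$ on the nose — in particular that $i$ can obtain no more than $P_j(\alpha_j)$ of $\alpha_j$ from a location it prefers (the surplus $1-P_j(\alpha_j)$ is deliberately parked at node $n^2+n+j$, which is farther away, while for $k$ with $i\prefGamePref{i} k$ the remote full copy of $\alpha_k$ is nearer, forcing $P_i(\alpha_k)=0$), and that the self-coordinate $P_i(\alpha_i)=w_{ii}$ is governed only by the unit-cache bound, just as in $G$. All of this is already embedded in the statement and proof of Lemma~\ref{lem:frac.V_1.best} (the distances and weights in the construction of $\widehat{G}$ were chosen precisely to enforce it), so no new computation is required: the present lemma is essentially a bookkeeping consequence of Lemmas~\ref{lem:frac.best} and~\ref{lem:frac.V_1.best} together with the definition of equilibrium for preference games.
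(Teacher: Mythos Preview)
Your proof is correct and follows essentially the same approach as the paper: both directions are obtained by invoking Lemma~\ref{lem:frac.V_1.best} (whose hypothesis is verified via Lemma~\ref{lem:frac.best}) and then matching the lexicographic best-response characterization for $i\in V_1$ against the definition of equilibrium in the preference game via $\omega$. Your write-up is in fact more explicit than the paper's about the ``bridging'' identification $P_j(\alpha_j)=w_{jj}$ and why the constraint sets coincide, but this is elaboration of the same argument rather than a different route.
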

\begin{proof}
Consider an equilibrium placement $P$ for $\widehat{G}$ Clearly, every node
plays their best response.  We now prove that $\omega(P)$ is an equilibrium
for $G$.  Fix a node $i$ in $V_1$.  By Lemma~\ref{lem:frac.V_1.best},
the best response of $i$ is the unique lexicographically maximum
solution $(P_i(\alpha_{j_1}), P_i(\alpha_{j_2}), \ldots,
P_i(\alpha_{j_n}))$, where $j_1 \prefGamePref{i} j_2 \prefGamePref{i}
\cdots \prefGamePref{i} j_n$, subject to the constraint that
$P_i(\alpha_j) \le P_j(\alpha_j)$ for $j \neq i$.  Since this applies
to every node $i$, it is immediate from the definitions of $\omega(P)$
and preference games that if $P$ is an equilibrium for $\widehat{G}$ then
$\omega(P)$ is an equilibrium for $G$.

We now consider the reverse direction.  Suppose we have a placement
$P$ in which every player not in $V_1$ plays their best response and
$\omega(P)$ is an equilibrium for the preference game $G$.  By
Lemma~\ref{lem:frac.V_1.best} and the definition of $\omega(P)$, the
best response of $i$ in $G$ matches that in the \fcc\ game; hence
every player in $V_1$ also plays their best response in $P$, implying
that $P$ is an equilibrium for $\widehat{G}$.
\end{proof}

The construction of $\widehat{G}$ from $G$ is clearly polynomial time.
Furthermore, given any equilibrium for $\widehat{G}$, an equilibrium
for $G$ can be constructed in linear time.  We thus have a reduction
from a \ppad-complete problem to \fcc\ implying that the latter is
\ppad-hard, thus completing the proof of Theorem~\ref{thm:ppad-hard}.

\section{Concluding remarks} \label{sec:concs}
In this paper, we first define the integral and fractional selfish replication games (\cc\ and \fcc) in networks.  In our setup each node has a bounded cache capacity for uniform size objects.  We prove that every hierarchical network has a pure Nash equilibrium, introducing the notion of fictional players.  We almost completely characterize the complexity of \cc\ games, i.e. which classes have an equilibrium, the complexity of determine whether it exists, and if so, how efficiently it can be found.  For the open complexity question about undirected networks with binary preferences (proved to be potential games), we conjecture that finding equilibria is \pls-hard.  For the cases of games where equilibria exist, we study the convergence of the best response process.  The main focus of this work is in equilibria, leaving the problem of estimating the price of anarchy for all the configurations as future work, extending the work of~\cite{EB2017}.

We also show that \fcc\ games always have equilibria, though they may be hard to find.  It is not hard to argue that an equilibrium in the corresponding integral variant is an equilibrium in the fractional instance.  So whenever an ``integral'' equilibrium can be determined efficiently, so can a ``fractional'' equilibrium.  An interesting direction of research is to identify other special cases of fractional games where equilibria may be efficiently determined.  We also note that our proof of existence of equilibria in \fcc\ games, currently presented for the case of unit-size objects, extends to arbitrary object sizes.

Finally, even though our proofs work for a model that the sets of nodes, objects, and preference relations are all static, we believe that our results will be meaningful for dynamically changing environments.  Developing better models for addressing infrequently changes is a very important practical research direction.

\bibliographystyle{spbasic}      
\bibliography{thebibliography}   

\begin{thebibliography}{72}
\providecommand{\natexlab}[1]{#1}
\providecommand{\url}[1]{{#1}}
\providecommand{\urlprefix}{URL }
\expandafter\ifx\csname urlstyle\endcsname\relax
  \providecommand{\doi}[1]{DOI~\discretionary{}{}{}#1}\else
  \providecommand{\doi}{DOI~\discretionary{}{}{}\begingroup
  \urlstyle{rm}\Url}\fi
\providecommand{\eprint}[2][]{\url{#2}}

\bibitem[{Ahmadyan et~al(2016)Ahmadyan, Etesami, and Poor}]{AEP2016}
Ahmadyan SN, Etesami SR, Poor HV (2016) {A Random Tree Search Algorithm for
  Nash Equilibrium in Capacitated Selfish Replication Games}. In: IEEE 55th
  Conference on Decision and Control (CDC), pp 4439--4444,
  \doi{10.1109/CDC.2016.7798943}

\bibitem[{Angel et~al(2013)Angel, Bampis, Pollatos, and Zissimopoulos}]{ABPZ13}
Angel E, Bampis E, Pollatos GG, Zissimopoulos V (2013) {Optimal Data Placement
  on Networks With Constant Number of Clients}. Theoretical Computer Science

\bibitem[{Arrow(1951)}]{arrow51social}
Arrow K (1951) Social Choice and Individual Values. Yale University Press

\bibitem[{Baev et~al(2008)Baev, Rajaraman, and Swamy}]{BRS08}
Baev ID, Rajaraman R, Swamy C (2008) {Approximation Algorithms for Data
  Placement Problems}. SIAM J Comput 38(4):1411--1429

\bibitem[{Byers et~al(1998)Byers, Luby, Mitzenmacher, and Rege}]{Byers98}
Byers JW, Luby M, Mitzenmacher M, Rege A (1998) {A Digital Fountain Approach to
  Reliable Distribution of Bulk Data}. In: SIGCOMM '98, pp 56--67

\bibitem[{Chen et~al(2009)Chen, Deng, and Teng}]{ChenDengTengJACM}
Chen X, Deng X, Teng SH (2009) {Settling the Complexity of Computing Two-Player
  Nash Equilibria}. Journal of the ACM (JACM) 56(3)

\bibitem[{Chen et~al(2002)Chen, Katz, and Kubiatowicz}]{Yan2002}
Chen Y, Katz RH, Kubiatowicz JD (2002) Scan: A dynamic, scalable, and efficient
  content distribution network. In: Mattern F, Naghshineh M (eds) Pervasive
  Computing, Springer Berlin Heidelberg, Berlin, Heidelberg, pp 282--296

\bibitem[{Chun et~al(2004)Chun, Chaudhuri, Wee, Barreno, Papadimitriou, and
  Kubiatowicz}]{ChunCWBPK04}
Chun BG, Chaudhuri K, Wee H, Barreno M, Papadimitriou CH, Kubiatowicz J (2004)
  {Selfish Caching in Distributed Systems: a Game-Theoretic Analysis}. In: ACM
  Symposium on Principles of Distributed Computing (PODC), pp 21--30

\bibitem[{Dabek et~al(2001)Dabek, Kaashoek, Karger, Morris, and
  Stoica}]{Dabek2001}
Dabek F, Kaashoek MF, Karger D, Morris R, Stoica I (2001) Wide-area cooperative
  storage with cfs. In: Proceedings of the Eighteenth ACM Symposium on
  Operating Systems Principles, ACM, New York, NY, USA, SOSP '01, pp 202--215,
  \doi{10.1145/502034.502054},
  \urlprefix\url{http://doi.acm.org/10.1145/502034.502054}

\bibitem[{Danzig(1998)}]{Danzig1998}
Danzig P (1998) Netcache architecture and deployment. Comput Netw ISDN Syst
  30(22-23):2081--2091, \doi{10.1016/S0169-7552(98)00250-5},
  \urlprefix\url{http://dx.doi.org/10.1016/S0169-7552(98)00250-5}

\bibitem[{Daskalakis et~al(2006)Daskalakis, Goldberg, and
  Papadimitriou}]{DaskalakisGoldbergPapadimitriou06}
Daskalakis C, Goldberg PW, Papadimitriou CH (2006) The complexity of computing
  a {N}ash equilibrium. STOC ACM pp 71--78

\bibitem[{Devanur et~al(2005)Devanur, Garg, Khandekar, Pandit, Saberi, and
  Vazirani}]{devanur05price}
Devanur NR, Garg N, Khandekar R, Pandit V, Saberi A, Vazirani VV (2005) {Price
  of Anarchy, Locality Gap, and a Network Service Provider Game}. In: WINE, pp
  1046--1055

\bibitem[{Douceur and Wattenhofer(2001)}]{Douceur2001}
Douceur JR, Wattenhofer RP (2001) Large-scale simulation of replica placement
  algorithms for a serverless distributed file system. In: MASCOTS 2001,
  Proceedings Ninth International Symposium on Modeling, Analysis and
  Simulation of Computer and Telecommunication Systems, pp 311--319,
  \doi{10.1109/MASCOT.2001.948882}

\bibitem[{Etesami and Basar(2014)}]{Etesami2014PureNE}
Etesami SR, Basar T (2014) {Pure Nash Equilibrium in Capacitated Selfish
  Replication (CSR) Game}. CoRR abs/1404.3442

\bibitem[{Etesami and Basar(2016{\natexlab{a}})}]{EB2016-arxiv}
Etesami SR, Basar T (2016{\natexlab{a}}) {Approximation Algorithm for the
  Binary-Preference Capacitated Selfish Replication Game and a Tight Bound on
  its Price of Anarchy}. CoRR abs/1506.04047v2

\bibitem[{Etesami and Basar(2016{\natexlab{b}})}]{EB2016-arxiv2}
Etesami SR, Basar T (2016{\natexlab{b}}) {Pure Nash Equilibrium in A
  Capacitated Resource Allocation Game with Binary Preferences}. CoRR
  abs/1404.3442v3

\bibitem[{Etesami and Basar(2016{\natexlab{c}})}]{EB2016}
Etesami SR, Basar T (2016{\natexlab{c}}) {Pure Nash Equilibrium in A
  Capacitated Selfish Resource Allocation Game}. IEEE Transactions on Control
  of Network Systems PP(99):1--1

\bibitem[{Etesami and Ba{\c{s}}ar(2017)}]{Etesami2017-book}
Etesami SR, Ba{\c{s}}ar T (2017) {Network Games}, Springer International
  Publishing, Cham, pp 1--46. \doi{10.1007/978-3-319-27335-8_10-1},
  \urlprefix\url{https://doi.org/10.1007/978-3-319-27335-8_10-1}

\bibitem[{Etesami and Başar(2015)}]{EB2015}
Etesami SR, Başar T (2015) {An Approximation Algorithm and Price of Anarchy
  for the Binary-Preference Capacitated Selfish Replication Game}. In: 54th
  IEEE Conference on Decision and Control (CDC), pp 3568--3573,
  \doi{10.1109/CDC.2015.7402771}

\bibitem[{Etesami and Başar(2017)}]{EB2017}
Etesami SR, Başar T (2017) {Price of Anarchy and an Approximation Algorithm
  for the Binary-Preference Capacitated Selfish Replication Game}. Automatica
  76(Supplement C):153--163,
  \doi{https://doi.org/10.1016/j.automatica.2016.10.002}

\bibitem[{Fabrikant et~al(2003)Fabrikant, Luthra, Maneva, Papadimitriou, and
  Shenker}]{fabrikant03network}
Fabrikant A, Luthra A, Maneva E, Papadimitriou CH, Shenker S (2003) {On a
  Network Creation Game}. In: PODC '03: Proceedings of the Twenty-Second Annual
  Symposium on Principles of Distributed Computing, ACM Press, New York, NY,
  USA, pp 347--351, \doi{http://doi.acm.org/10.1145/872035.872088}

\bibitem[{Fan et~al(1998)Fan, Cao, Almeida, and Broder}]{Fan1998}
Fan L, Cao P, Almeida J, Broder AZ (1998) Summary cache: A scalable wide-area
  web cache sharing protocol. In: Proceedings of the ACM SIGCOMM '98 Conference
  on Applications, Technologies, Architectures, and Protocols for Computer
  Communication, ACM, New York, NY, USA, SIGCOMM '98, pp 254--265,
  \doi{10.1145/285237.285287},
  \urlprefix\url{http://doi.acm.org/10.1145/285237.285287}

\bibitem[{Feldman and Chuang(????)}]{Feldman2005}
Feldman M, Chuang J (????) {}overcoming free-riding behavior in peer-to-peer
  systems

\bibitem[{Fudenberg and Levine(1998)}]{FudenbergLevine98}
Fudenberg D, Levine D (1998) The Theory of Learning in Games. MIT Press

\bibitem[{Garces-Erice et~al(2003)Garces-Erice, Biersack, Felber, Ross, and
  Urvoy-Keller}]{Garces2013}
Garces-Erice L, Biersack EW, Felber PA, Ross KW, Urvoy-Keller G (2003)
  {Hierarchical Peer-to-peer Systems}. Parallel Processing Letters
  13(04):643--657, \doi{10.1142/S0129626403001574}

\bibitem[{Garey and Johnson(1979)}]{GareyJohnson}
Garey M, Johnson D (1979) {Computers and Intractability}. Freeman Press

\bibitem[{Goemans et~al(2006{\natexlab{a}})Goemans, Li, Mirrokni, and
  Thottan}]{Goemans2006}
Goemans MX, Li L, Mirrokni VS, Thottan M (2006{\natexlab{a}}) Market sharing
  games applied to content distribution in ad hoc networks. IEEE Journal on
  Selected Areas in Communications 24(5):1020--1033,
  \doi{10.1109/JSAC.2006.872884}

\bibitem[{Goemans et~al(2006{\natexlab{b}})Goemans, Li, Mirrokni, and
  Thottan}]{GoemansLMT06}
Goemans MX, Li L, Mirrokni VS, Thottan M (2006{\natexlab{b}}) {Market Sharing
  Games Applied to Content Distribution in Ad Hoc Networks}. IEEE Journal on
  Selected Areas in Communications 24(5):1020--1033

\bibitem[{Gopalakrishnan et~al(2012)Gopalakrishnan, Kanoulas, Karuturi,
  Pandu~Rangan, Rajaraman, and Sundaram}]{GKKPRS12}
Gopalakrishnan R, Kanoulas D, Karuturi N, Pandu~Rangan C, Rajaraman R, Sundaram
  R (2012) {Cache Me If You Can: Capacitated Selfish Replication Games}. In:
  Latin American Symposium on Theoretical Informatics (LATIN), vol 7256, pp
  420--432

\bibitem[{Gribble et~al(2001)Gribble, Halevy, Ives, Rodrig, and
  Suciu}]{Gribble2001}
Gribble SD, Halevy AY, Ives ZG, Rodrig M, Suciu D (2001) {What can database do
  for peer-to-peer?} In: WebDB Workshop on Databases and the Web

\bibitem[{Hu and Gong(2013)}]{HG2013}
Hu X, Gong J (2013) {PhD forum: Not so Cooperative Caching}. In: 21st IEEE
  International Conference on Network Protocols (ICNP), pp 1--3,
  \doi{10.1109/ICNP.2013.6733656}

\bibitem[{Hu and Gong(2014)}]{HG2014}
Hu XY, Gong J (2014) {Study on the Theoretical Framework of Not So Cooperative
  Caching}. Journal of Internet Technology 15(3):351--362,
  \doi{10.6138/JIT.2014.15.3.04}

\bibitem[{Iyer et~al(2002)Iyer, Rowstron, and Druschel}]{Iyer2002}
Iyer S, Rowstron AIT, Druschel P (2002) Squirrel: a decentralized peer-to-peer
  web cache. In: PODC

\bibitem[{Jain and Vazirani(1999)}]{Jain1999}
Jain K, Vazirani VV (1999) Primal-dual approximation algorithms for metric
  facility location and k-median problems. In: 40th Annual Symposium on
  Foundations of Computer Science (Cat. No.99CB37039), pp 2--13,
  \doi{10.1109/SFFCS.1999.814571}

\bibitem[{Jamin et~al(2000)Jamin, Jin, Jin, Raz, Shavitt, and
  Zhang}]{Jamin2000}
Jamin S, Jin C, Jin Y, Raz D, Shavitt Y, Zhang L (2000) On the placement of
  internet instrumentation. In: Proceedings IEEE INFOCOM 2000. Conference on
  Computer Communications. Nineteenth Annual Joint Conference of the IEEE
  Computer and Communications Societies (Cat. No.00CH37064), vol~1, pp
  295--304, \doi{10.1109/INFCOM.2000.832199}

\bibitem[{Jamin et~al(2001)Jamin, Jin, Kurc, Raz, and Shavitt}]{Jamin2001}
Jamin S, Jin C, Kurc AR, Raz D, Shavitt Y (2001) Constrained mirror placement
  on the internet. In: Proceedings IEEE INFOCOM 2001. Conference on Computer
  Communications. Twentieth Annual Joint Conference of the IEEE Computer and
  Communications Society (Cat. No.01CH37213), vol~1, pp 31--40,
  \doi{10.1109/INFCOM.2001.916684}

\bibitem[{Johnson et~al(1988)Johnson, Papadimitriou, and Yannakakis}]{JPY88}
Johnson DS, Papadimitriou CH, Yannakakis M (1988) {How Easy is Local Search?}
  Journal of Computer and System Sciences 37(1):79--100

\bibitem[{Karger et~al(1997)Karger, Lehman, Leighton, Panigrahy, Levine, and
  Lewin}]{KLLLLP97}
Karger D, Lehman E, Leighton T, Panigrahy R, Levine M, Lewin D (1997)
  {Consistent Hashing and Random Trees: Distributed Caching Protocols for
  Relieving Hot Spots on the World Wide Web}. In: Proceedings of the
  Twenty-Ninth Annual ACM Symposium on Theory of Computing (STOC), pp 654--663

\bibitem[{Kintali et~al(2009)Kintali, Poplawski, Rajaraman, Sundaram, and
  Teng}]{KintaliPRST09_focs}
Kintali S, Poplawski LJ, Rajaraman R, Sundaram R, Teng SH (2009) {Reducibility
  Among Fractional Stability Problems}. Proceedings of the 50th Annual IEEE
  Symposium on Foundations of Computer Science (FOCS)

\bibitem[{Ko and Rubenstein(2005)}]{Ko2005}
Ko BJ, Rubenstein D (2005) Distributed self-stabilizing placement of replicated
  resources in emerging networks. IEEE/ACM Transactions on Networking
  13(3):476--487, \doi{10.1109/TNET.2005.850196}

\bibitem[{Korupolu et~al(2001)Korupolu, Plaxton, and Rajaraman}]{KPR01}
Korupolu M, Plaxton CG, Rajaraman R (2001) {Placement Algorithms for
  Hierarchical Cooperative Caching}. Journal of Algorithms 38:260--302

\bibitem[{Korupolu and Dahlin(2002)}]{KD02}
Korupolu MR, Dahlin M (2002) {Coordinated Placement and Replacement for
  Large-Scale Distributed Caches}. IEEE Trans Knowl Data Eng 14(6):1317--1329

\bibitem[{Koutsoupias and Papadimitriou(1999)}]{koutsoupias99worst}
Koutsoupias E, Papadimitriou C (1999) {Worst-Case Equilibria}. In: STACS, pp
  404--413

\bibitem[{Kubiatowicz et~al(2000)Kubiatowicz, Bindel, Chen, Czerwinski, Eaton,
  Geels, Gummadi, Rhea, Weatherspoon, Weimer, Wells, and
  Zhao}]{Kubiatowicz2000}
Kubiatowicz J, Bindel D, Chen Y, Czerwinski S, Eaton P, Geels D, Gummadi R,
  Rhea S, Weatherspoon H, Weimer W, Wells C, Zhao B (2000) Oceanstore: An
  architecture for global-scale persistent storage. ACM SIGPLAN Notices
  35(11):190--201, \doi{10.1145/356989.357007}

\bibitem[{Laoutaris et~al(2006{\natexlab{a}})Laoutaris, Smaragdakis, Bestavros,
  and Stavrakakis}]{LaoutarisSBS06}
Laoutaris N, Smaragdakis G, Bestavros A, Stavrakakis I (2006{\natexlab{a}})
  {Mistreatment in Distributed Caching Groups: Causes and Implications}. In:
  INFOCOM

\bibitem[{Laoutaris et~al(2006{\natexlab{b}})Laoutaris, Telelis, Zissimopoulos,
  and Stavrakakis}]{LaoutarisSelfish}
Laoutaris N, Telelis O, Zissimopoulos V, Stavrakakis I (2006{\natexlab{b}})
  {Distributed Selfish Replication}. IEEE Trans Parallel Distrib Syst
  17(12):1401--1413

\bibitem[{Laoutaris et~al(2007)Laoutaris, Smaragdakis, Oikonomou, Stavrakakis,
  and Bestavros}]{LaoutarisSOSB07}
Laoutaris N, Smaragdakis G, Oikonomou K, Stavrakakis I, Bestavros A (2007)
  Distributed placement of service facilities in large-scale networks. In:
  INFOCOM, pp 2144--2152

\bibitem[{Leff et~al(1993)Leff, Wolf, and Yu}]{LWY93}
Leff A, Wolf JL, Yu PS (1993) {Replication Algorithms in a Remote Caching
  Architecture}. IEEE Trans Parallel Distrib Syst 4(11):1185--1204

\bibitem[{Li et~al(1999)Li, Golin, Italiano, Deng, and Sohraby}]{Li1999}
Li B, Golin MJ, Italiano GF, Deng X, Sohraby K (1999) On the optimal placement
  of web proxies in the internet. In: IEEE INFOCOM '99. Conference on Computer
  Communications. Proceedings. Eighteenth Annual Joint Conference of the IEEE
  Computer and Communications Societies. The Future is Now (Cat. No.99CH36320),
  vol~3, pp 1282--1290, \doi{10.1109/INFCOM.1999.752146}

\bibitem[{Mahdian et~al(2002)Mahdian, Ye, and Zhang}]{Mahdian2002}
Mahdian M, Ye Y, Zhang J (2002) Improved approximation algorithms for metric
  facility location problems. In: Jansen K, Leonardi S, Vazirani V (eds)
  Approximation Algorithms for Combinatorial Optimization, Springer Berlin
  Heidelberg, Berlin, Heidelberg, pp 229--242

\bibitem[{McCuaig et~al(1997)McCuaig, Robertson, Seymour, and
  Thomas}]{McCuaigRST97}
McCuaig W, Robertson N, Seymour PD, Thomas R (1997) {Permanents, Pfaffian
  Orientations, and Even Directed Circuits}. In: STOC, pp 402--405

\bibitem[{Mettu and Plaxton(2000)}]{Mettu2000}
Mettu RR, Plaxton CG (2000) The online median problem. In: Proceedings of the
  41st Annual Symposium on Foundations of Computer Science, IEEE Computer
  Society, Washington, DC, USA, FOCS '00, p 339

\bibitem[{Nisan et~al(2007)Nisan, Roughgarden, Tardos, and
  Vazirani}]{NisanRTV07}
Nisan N, Roughgarden T, Tardos {\'E}, Vazirani VV (2007) Algorithmic Game
  Theory. Cambridge University Press

\bibitem[{Osborne and Rubinstein(1994)}]{OsborneRubinstein94}
Osborne MJ, Rubinstein A (1994) A Course in Game Theory. MIT Press

\bibitem[{Pacifici(2016)}]{Pacifici2016}
Pacifici V (2016) {Resource Allocation in Operator-owned Content Delivery
  Systems}. PhD thesis, KTH School of Electrical Engineering

\bibitem[{Papadimitriou(1994)}]{Papadimitriou94}
Papadimitriou CH (1994) {On the Complexity of the Parity Argument and Other
  Inefficient Proofs of Existence}. JCSS 48(3):498--532

\bibitem[{Pollatos et~al(2008)Pollatos, Telelis, and
  Zissimopoulos}]{PollatosSelfish}
Pollatos GG, Telelis O, Zissimopoulos V (2008) {On the Social Cost of
  Distributed Selfish Content Replication}. In: Networking, pp 195--206

\bibitem[{Qiu et~al(2001)Qiu, Padmanabhan, and Voelker}]{Qiu2001}
Qiu L, Padmanabhan VN, Voelker GM (2001) On the placement of web server
  replicas. In: Proceedings IEEE INFOCOM 2001. Conference on Computer
  Communications. Twentieth Annual Joint Conference of the IEEE Computer and
  Communications Society (Cat. No.01CH37213), vol~3, pp 1587--1596,
  \doi{10.1109/INFCOM.2001.916655}

\bibitem[{Rabin(1989)}]{Rabin89}
Rabin M (1989) {Efficient Dispersal of Information for Security, Load Balancing
  and Fault Tolerance}. Journal of the ACM 36:335--348

\bibitem[{Rabinovich et~al(1999)Rabinovich, Rabinovich, Rajaraman, and
  Aggarwal}]{Rabinovich1999}
Rabinovich M, Rabinovich I, Rajaraman R, Aggarwal A (1999) A dynamic object
  replication and migration protocol for an internet hosting service. In:
  Proceedings. 19th IEEE International Conference on Distributed Computing
  Systems (Cat. No.99CB37003), pp 101--113, \doi{10.1109/ICDCS.1999.776511}

\bibitem[{Robertson et~al(1999)Robertson, Seymour, and Thomas}]{RST99}
Robertson N, Seymour PD, Thomas R (1999) {Permanents, Pfaffian Orientations,
  and Even Directed Circuits}. Annals of Mathematics pp 929--975

\bibitem[{Rosenwein(1994)}]{Rosenwein1994}
Rosenwein MB (1994) Discrete location theory, edited by p. b. mirchandani and
  r. l. francis, john wiley \& sons, new york, 1990, 555 pp. Networks
  24(2):124--125, \doi{10.1002/net.3230240212}

\bibitem[{Rowstron and Druschel(2001)}]{Rowstron2001}
Rowstron A, Druschel P (2001) Storage management and caching in past, a
  large-scale, persistent peer-to-peer storage utility. In: Proceedings of the
  Eighteenth ACM Symposium on Operating Systems Principles, ACM, New York, NY,
  USA, SOSP '01, pp 188--201, \doi{10.1145/502034.502053},
  \urlprefix\url{http://doi.acm.org/10.1145/502034.502053}

\bibitem[{Saito et~al(2002)Saito, Karamanolis, Karlsson, and
  Mahalingam}]{Saito2002}
Saito Y, Karamanolis C, Karlsson M, Mahalingam M (2002) Taming aggressive
  replication in the pangaea wide-area file system. SIGOPS Oper Syst Rev
  36(SI):15--30, \doi{10.1145/844128.844131}

\bibitem[{Schrijver(2003)}]{Schrijver}
Schrijver A (2003) {C}ombinatorial {O}ptimization (3 Vol.). Springer-Verlag
  Berlin Heidelberg

\bibitem[{Shokrollahi(2006)}]{Shok06}
Shokrollahi A (2006) {Raptor Codes}. In: IEEE Trans Inf Theory, pp 2551--2567

\bibitem[{Tang and Chanson(2002)}]{Tang2002}
Tang X, Chanson ST (2002) Coordinated en-route web caching. IEEE Trans Comput
  51(6):595--607, \doi{10.1109/TC.2002.1009146},
  \urlprefix\url{http://dx.doi.org/10.1109/TC.2002.1009146}

\bibitem[{Tewari et~al(1999)Tewari, Dahlin, Vin, and Kay}]{TDVK99}
Tewari R, Dahlin M, Vin HM, Kay JS (1999) {Design Considerations for
  Distributed Caching on the Internet}. In: ICDCS, pp 273--284

\bibitem[{Tsaknakis et~al(2008)Tsaknakis, Spirakis, and
  Kanoulas}]{Tsaknakis2008}
Tsaknakis H, Spirakis PG, Kanoulas D (2008) Performance evaluation of a descent
  algorithm for bi-matrix games. In: Internet and Network Economics, 4th
  International Workshop, {WINE} 2008, Shanghai, China, December 17-20, 2008.
  Proceedings, pp 222--230, \doi{10.1007/978-3-540-92185-1\_29}

\bibitem[{Vetta(2002)}]{Vetta2002}
Vetta A (2002) Nash equilibria in competitive societies, with applications to
  facility location, traffic routing and auctions. In: FOCS

\bibitem[{Wolfson et~al(1997)Wolfson, Jajodia, and Huang}]{WJH97}
Wolfson O, Jajodia S, Huang Y (1997) {An Adaptive Data Replication Algorithm}.
  ACM Transactions on Database Systems 22:255--314

\bibitem[{Younger(1973)}]{Younger73}
Younger DH (1973) {Graphs with Interlinked Directed Circuits}. In: Proceedings
  of Midwestern Symposium on Circuit Theory, vol~2, pp XVI2.1--XVI2.7

\end{thebibliography}

\end{document}